\newcommand{\hq }{{/\kern -.185em/}}
\def\dim{\operatorname{dim}}
\def\exp{\operatorname{exp}}
\newcommand{\bk}[2]{\ensuremath{\langle #1 | #2 \rangle}}
\newcommand{\Ad}{\mbox{Ad}}
\begin{document}
\title{Bipartite entanglement, spherical actions and geometry of local unitary orbits}
%\subtitle{Do you have a subtitle?\\ If so, write it here}
\author{Alan Huckleberry\inst{1} \and Marek
Ku\'s\inst{2} \and Adam Sawicki \inst{2,} \inst{3}
% \thanks is optional - remove next line if not needed
%\thanks{\emph{Present address:} Insert the address here if needed}%
}                     % Do not remove
\institute{Fakult\"at f\"ur Mathematik, Ruhr-Universit\"at Bochum, D-44780
Bochum, Germany \and Center for Theoretical Physics, Polish Academy of
Sciences, Al. Lotnik\'ow 32/46, 02-668 Warszawa, Poland \and School of Mathematics, University of Bristol, University Walk, Bristol BS8 1TW, UK}
\date{Received: date / Accepted: date}
% The correct dates will be entered by Springer
%
% Add name of the expert who has communicated your paper
\communicated{name}
\maketitle
\begin{abstract}
We use the geometry of the moment map to investigate properties of pure
entangled states of composite quantum systems. The orbits of equally
entangled states are mapped by the moment map onto coadjoint orbits of local
transformations (unitary transformations which do not change entanglement).
Thus the geometry of coadjoint orbits provides a partial classification of
different entanglement classes. To achieve the full classification a further
study of fibers of the moment map is needed. We show how this can be done
effectively in the case of the bipartite entanglement by employing Brion's
theorem. In particular, we presented the exact description of the partial
symplectic structure of all local orbits for two bosons, fermions and
distinguishable particles.
\end{abstract}
\section{Introduction. Statement of results}
\label{sec:intro}

Quantum correlations among parts of a multicomponent (multipartite) composite
quantum systems are usually described and investigated employing tools of
multilinear algebra. This seems to be a natural method since at the most
elementary level pure nonentangled states are simple tensors in the tensor
product of Hilbert spaces of subsystems of an $L$-partite system,
\[
\mathcal{H}=\mathcal{H}_{1}\otimes\cdots\mathcal{H}_{L},
\]
 whereas various degrees of entanglement can be characterized by quantifying
(in some precisely defined sense) the ``departure'' of a state from the
simple tensor structure.

Behind these algebraic definitions one finds interesting geometry. In
\cite{sawicki11} and \cite{sawicki11a} we showed how some methods of
symplectic geometry can be used to describe and quantify entanglement. Two
basic observations are crucial. From a physical point of view entanglement
properties remain unchanged under the action of ``local'' unitary group
$K=U(\mathcal{H}_{1})\times\ldots\times U(\mathcal{H}_{L})$ which reflects
the physically obvious statement that entanglement can not be changed by
non-dissipative quantum operations confined to subsystems. It follows thus
that a classification and description of orbits of $K$ in the space of states
(which, as quantum mechanics demands, is not $\mathcal{H}$ itself, but rather
its projectivization $\mathbb{P}(\mathcal{H})$) leads to a classification of
various degrees of entanglement. The second ingredient of this approach is an
observation that both the space of states $\mathbb{P}(\mathcal{H})$ and the
dual of the Lie algebra $\mathfrak{k}^{\ast}$ of $K$ possesses natural
Poisson structures intertwined by the $K$-equivariant moment map $\mu$ which
sends $K$-orbits in $\mathbb{P}(\mathcal{H})$ onto coadjoint orbits.
Coadjoint orbits are in fact symplectic manifolds but since the moment map is
not diffeomorphic the symplectic form on $\mathbb{P}(\mathcal{H})$ is, in
general, degenerate on $K$-orbits. It is not the case only for the unique
orbit of non-entangled states and the degree of degeneracy of the symplectic
form on an orbit can be used as a measure of entanglement \cite{sawicki11}.
The moment map can be used as a tool for a partial classification of
$K$-orbits in $\mathbb{P}(\mathcal{H})$ since it maps them onto coadjoint
orbits which can be easily classified by their intersections with the dual to
the Cartan subalgebra of the Lie algebra $\mathfrak{k}$ of $K$. Such a
classification would give an effective answer to the problem of the ``local
unitary equivalence of states'' (see \cite{kraus10}), i.e.\ whether two
states can be mutually connected by non-dissipative quantum operations
restricted to subsystems. The problem, important from the experimental point
of view when the possibility of obtaining requested states from those which
are available, translates thus into deciding if two states belong to the same
$K$-orbit. Obviously, since the moment map is not a diffeomorphism one does
not obtain a satisfactory classification of $K$-orbits in
$\mathbb{P}(\mathcal{H})$ by mapping them into coadjoint ones. It may happen
that many (in fact a continuous family of) orbits are mapped onto a single
coadjoint orbit. Hence, for a full characterization a deeper insight into the
structure of fibers of the moment map is needed. We will show how to do it
for $L=2$. We will also show on examples how the structure of fibers and,
consequently, classification of orbits become more complicated for systems
with more parts.

The above outlined formulation of the problem allows for generalizations to
cases when the underlying Hilbert space $\mathcal{H}$ does not have a
structure of the full tensor product, but eg., is its symmetric,
$S^{2}\mathcal{H}_{1}$, or antisymmetric, $\bigwedge^{2}\mathcal{H}_{1}$,
part as in the cases of two indistinguishable particles (bosons and
fermions). Since all three cases bear significant similarities we will
consider them in parallel whenever possible.

Our results are stated precisely in the last section of the paper. For those
readers most interested in the qualitative aspects of these results we state
them here without technical details of proofs.

The goal of our work is to give an exact description of the partial
symplectic structure of all $K$-orbits in $M=\mathbb{P}(\mathcal{H})$ and $L=2$. To do
this we first observe that it follows from Brion's Theorem \cite{brion87}
that the moment map $\mu:M\to\mathfrak{k}^{*}$ parameterizes the $K$-orbits
in $M$ in the sense that it maps the set of $K$-orbits in $M$ bijectively
onto the set of $K$-orbits in its image. It is well-known that, modulo the
action of the Weyl-group, this image is $K.P$ where $P$ a convex region in
$\mathfrak{t}^{*}$. Here $\mathfrak{t}$ is the Lie algebra of diagonal
matrices in $\mathfrak{k}=\mathfrak{su}_{N}$ and $\mathfrak{t}^{*}$ is
embedded in $\mathfrak{k}$ via the invariant form $\langle
A,B\rangle=\mathrm{Tr}(AB)$. In fact $P$ intersects each orbit of $K$ in an
orbit of the Weyl group so that, modulo the Weyl group, $P$ parameterizes the
orbits. We give an exact description of $P$ as a certain probability
polyhedron.

We also determine a real algebraic set $\Sigma_{\mathbb{R}}^{+}$ in $M$,
defined by linear algebraic equations and inequalities, which parameterizes
the $K$-orbits in $M$ and which is mapped onto a fundamental region of the
Weyl-group in $P$ by the moment map. Every element $x$ of
$\Sigma_{\mathbb{R}}^{+}$ determines in a simple way a vector
$d=(d_{1},\ldots,d_{k})$ of positive integers which completely determines the
moment map image $\mu(x)$ as a flag manifold $F(d_{1},\ldots,d_{k})$. We also
exactly describe the fiber $\mathcal{F}_{x}=\mu^{-1}(\mu(x))$ of the moment
map. It is the fiber of the homogeneous fibration $K/K_{x}\to K/K_{\mu(x)}$
which in fact (up to very simple finite-coverings) is a product of certain
symmetric spaces. In the case of bosons it is the product of a torus and a
number (depending on $d$ and the degeneracy) of symmetric spaces of the form
$\mathrm{SU}_{m}/\mathrm{SO}_{m}$. The case of fermions is analogous, except
that the symmetric spaces are of the form $\mathrm{SU}_{m}/\mathrm{USp}_{m}$.

To make the paper reasonably concise and, simultaneously, accessible to
readers not familiar with the whole needed background material we provide in
Appendix comprehensive outline of the most important results concerning
spherical varieties. On the other hand we decided that the main text of the
paper should remain self-contained hence the needed definitions and
statements, even if they concern the background material, are often
accompanied by shortened explanations and arguments fuller versions of which
are given in the Appendix.

\section{Local unitary actions in spaces of states}
\label{sec:2}

Let $\mathcal{H}_{1}\cong\mathbb{C}^{N}$ , the $N$-dimensional complex vector
space equipped with a scalar product $\bk{\cdot}{\cdot}$ be the one particle
Hilbert space. Let $G=\mathrm{SL}_{\mathbb{C}}(\mathcal{H}_{1})$ be the
special linear group and and $K=SU(\mathcal{H}_{1})$ be the subgroup of
unitary transformations in it. The two particle Hilbert spaces for bosons,
fermions and distinguishable particles are, respectively, the symmetric and
antisymmetric part of the tensor product of two copies of $\mathcal{H}_{1}$
and the full tensor product itself,
\begin{eqnarray}
\mathcal{H}_{B} & = & S^{2}\mathcal{H}_{1},\,\,\mathrm{bosons}\nonumber \\
\mathcal{H}_{F} & = & \bigwedge^{2}\mathcal{H}_{1},\,\,\mbox{fermions}\label{eq:spaces}\\
\mathcal{H}_{D} & = & \bigotimes^{2}\mathcal{H}_{1}=\mathcal{H}_{B}\oplus\mathcal{H}_{F},\,\,\mbox{distinguishable particles}\nonumber
\end{eqnarray}

We have the natural action of $G_{D}=G\times G$ and $K_{D}=K\times K$ on
$\mathcal{H}_{D}$ given by
\begin{gather*}
(U_{1},U_{2}).(v\otimes w)=U_{1}\otimes U_{2}(v\otimes w)=(U_{1}v\otimes
U_{2}w).
\end{gather*}
 The diagonal action of $G$ and $K$ on $\mathcal{H}_{D}$, i.e.
\begin{gather*}
U.(v\otimes w)=U\otimes U(v\otimes w)=Uv\otimes Uw,
\end{gather*}
 induces the action of $G$, $K$ on $\mathcal{H}_{B}$ and $\mathcal{H}_{F}$.
In the following we will denote by $v\vee w:=v\otimes w+w\otimes v$ and
$v\wedge w:=v\otimes w-w\otimes v$ the symmetric and antisymmetric tensor
products. It will be convenient to regard the tensors at hand as matrices.
Therefore we let $\{e_{1},\,\ldots,e_{N}\}$ be an orthonormal basis of
$\mathcal{H}_{1}$, define $H$ (the complex torus) to be the subgroup of $G$
of diagonal matrices and let $T:=H\cap K$ be the corresponding real torus
consisting of unitary diagonal matrices with determinant equal to one.
Observe that $\{e_{ij}^{s}:=e_{i}\vee e_{j};\;1\le i,j\le N\}$ and
$\{e_{ij}^{a}:=e_{i}\wedge e_{j};\; i<j\}$ are orthogonal bases of
$\mathcal{H}_{B}$ and $\mathcal{H}_{F}$ respectively. We can regard any
tensor $v$ in $\mathcal{H}_{D}$ as $N\times N$-matrix $C_{v}$.
\begin{equation}
v=\sum_{i,j}(C_{v})_{ij}e_{i}\otimes e_{j}.\label{}
\end{equation}
 The matrix $C_{v}$ is symmetric for symmetric tensors and antisymmetric
for antisymmetric ones hence we can write analogous formulae for
$v\in\mathcal{H}_{B}$ and $v\in\mathcal{H}_{B}$ substituting $e_{ij}^{s}$ and
$e_{ij}^{a}$ in place of $e_{i}\otimes e_{j}$ completing thus the
identification of tensors with matrices in all three considered cases.

The action of $G$ and $G_{D}$ and hence of $K$ and $K_{D}$ translated into
language of matrices is given by
\begin{eqnarray}
U.C_{v} & = & UC_{v}U^{t},\,\,\mbox{bosons and fermions,}\label{eq:matrix_action}\\
(U,V).C_{v} & = & UC_{v}V^{t},\,\,\mbox{distinguishable particles},\nonumber
\end{eqnarray}
 where $^{t}$ denotes the transposition. Unless there is a danger
of confusion we will let $\mathcal{H}$ denote any of the vector spaces
(\ref{eq:spaces}) and by $M=\mathbb{P}(\mathcal{H})$ the associated complex
projective space. There are of course differences between these three cases,
but conceptually speaking, these are slight. In order to facilitate a
simultaneous treatment we let $n=N$ in the case of $S^{2}V$ and $2n=N$ (resp.
$2n+1=N$) in the case of $\Lambda^{2}V$ where $N$ is even (resp. odd).
Finally we will denote
\begin{gather*}
s_{N}=e_{1}\vee e_{1}+e_{2}\vee e_{2}+\ldots+e_{N}\vee e_{N},\\
a_{N}=e_{1}\wedge e_{2}+e_{3}\wedge e_{4}+\ldots+e_{2n-1}\wedge e_{2n},\\
d_{N}=e_{1}\otimes e_{1}+e_{2}\otimes e_{2}+\ldots+e_{N}\otimes e_{N}.
\end{gather*}
 We will use $x_{N}$ to denote any of these points when treating
the corresponding case. Notice that the matrices for these tensors are $C_{s_{N}}=2I$, $C_{d_{N}}=I$,
and $C_{a_{N}}=J$ where $J$ is a block diagonal matrix each $2\times2$ block
being standard symplectic matrix, i.e.,

\begin{gather*}
J=\left(\begin{array}{cc}
0 & 1\\
-1 & 0
\end{array}\right)\dotplus\ldots\dotplus\left(\begin{array}{cc}
0 & 1\\
-1 & 0
\end{array}\right).
\end{gather*}

\section{The spherical property}
\label{sec:spherical}

A complex Lie group $G$ of matrices is said to be \emph{reductive} if it is
the complexification $G=K^{\mathbb{C}}$ of its maximal compact subgroup. In the
case of interest here $G=\mathrm{SL}_{N}(\mathbb{C})$ is the complexification
of the compact subgroup $K=\mathrm{SU}(N)$. Let $H$ be an algebraic subgroup
of $G$ and denote by $\Omega:=G/H$ corresponding homogenous space.

\begin{definition}A $G$-homogenous spece $\Omega=G/H$ is said to be a
spherical homogenous space if and only if some (and therefore every) Borel
subgroup $B\subset G$ has an open dense orbit in $\Omega$.
\end{definition}

Recall that by definition a Borel subgroup is a maximal connected solvable
subgroup of $G$. In our particular case of interest where
$G=\mathrm{SL}_{N}(\mathbb{C})$, the group of upper-triangular matrices,
which can be regarded as the stabilizer of the standard full flag
\[
0\subset\mathrm{Span}\{e_{1}\}\subset\mathrm{Span}\{e_{1},e_{2}\}\subset\ldots\subset\mathrm{Span}\{e_{1},\ldots e_{N-1}\}\subset\mathcal{H}\,,
\]
 is a good example of a Borel subgroup. In general, every two Borel
subgroups are conjugate by an element of $G$. Hence in our case $B$ is a
Borel subgroup if and only if it is the stabilizer of some full flag.

Suppose now that $M$ is an irreducible algebraic variety (for example a
complex projective space $\mathbb{P}(\mathcal{H})$) equipped with a
$G$-action.

%\begin{definition}If $G$ has an open dense orbit $\Omega=G/H$
%in $M$. We say that $M$ is a spherical embedding of $\Omega$ if and only if
%$\Omega$ is a spherical homogenous space.
%\end{definition}

\begin{definition}If $G$ has an open (Zariski dense) orbit $\Omega=G/H$
in $M$ and $\Omega$ is a spherical homogenous space, then $M$ is said to be a
spherical embedding of $\Omega$.
\end{definition}

In the next sections we show that spherical homogenous spaces possess a
number of crucial properties which we will use to analyze the geometry of
entangled states. Let us start with presenting a basic example of such
spaces, vis.\ affine symmetric spaces which will play an important role in
the due course.

\paragraph*{Affine symmetric spaces}

Let $\mathfrak{g}$ be a complex semisimple Lie algebra,
$\theta:\mathfrak{g}\to\mathfrak{g}$ a complex linear involution, i.e., a Lie
algebra automorphism with $\theta^{2}=\mathrm{Id}$, and
$\mathfrak{g}=\mathfrak{h}\oplus\mathfrak{p}$ the decomposition of
$\mathfrak{g}$ into $\theta$-eigenspaces. The fixed point algebra
$\mathfrak{h}$, i.e., the subspace belonging to the eigenvalue $+1$ of
$\theta$, defines a (reductive - see Appendix)) closed subgroup $H$ in $G$. The
complex homogenous space $G/H$ is in this case an \textit{affine symmetric
space}. In the Appendix we prove
\begin{proposition}\label{symmetric} Affine symmetric
spaces are spherical.
\end{proposition}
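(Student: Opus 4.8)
The plan is to verify the spherical property through its standard infinitesimal criterion: the $B$-orbit of the base point $eH\in G/H$ is open precisely when the projection of the Borel subalgebra $\mathfrak{b}$ onto the tangent space $\mathfrak{g}/\mathfrak{h}$ is surjective, i.e.\ when $\mathfrak{b}+\mathfrak{h}=\mathfrak{g}$. Since $B$ is connected and $G/H$ is irreducible, any open orbit is automatically Zariski dense, so it suffices to exhibit a single Borel subalgebra $\mathfrak{b}$ with $\mathfrak{b}+\mathfrak{h}=\mathfrak{g}$. Everything thus reduces to a linear-algebra statement about the involution $\theta$.

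First I would choose a maximal abelian subspace $\mathfrak{a}\subset\mathfrak{p}$ consisting of semisimple elements, so that $\operatorname{ad}\mathfrak{a}$ is simultaneously diagonalizable, and form the restricted root decomposition $\mathfrak{g}=\mathfrak{g}_{0}\oplus\bigoplus_{\lambda\in\Sigma}\mathfrak{g}_{\lambda}$, where $\mathfrak{g}_{0}=Z_{\mathfrak{g}}(\mathfrak{a})$ and $\Sigma\subset\mathfrak{a}^{*}$ is the set of nonzero restricted roots. The crucial observation is that $\theta$ restricts to $-\operatorname{Id}$ on $\mathfrak{a}$, whence a direct computation with the bracket gives $\theta(\mathfrak{g}_{\lambda})=\mathfrak{g}_{-\lambda}$. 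Fix a positive system $\Sigma^{+}$ and set $\mathfrak{n}=\bigoplus_{\lambda\in\Sigma^{+}}\mathfrak{g}_{\lambda}$. For $X\in\mathfrak{g}_{\lambda}$ the element $X+\theta X$ lies in $\mathfrak{h}$ while $\theta X\in\mathfrak{g}_{-\lambda}$, so writing $\theta X=(X+\theta X)-X$ shows that every negative restricted root space is contained in $\mathfrak{h}+\mathfrak{n}$. Since $\mathfrak{g}_{0}=\mathfrak{m}\oplus\mathfrak{a}$ with $\mathfrak{m}=\mathfrak{g}_{0}\cap\mathfrak{h}\subset\mathfrak{h}$ (using that $\mathfrak{a}$ is maximal abelian in $\mathfrak{p}$), this yields the Iwasawa-type identity $\mathfrak{g}=\mathfrak{h}+\mathfrak{a}+\mathfrak{n}$; a short weight argument with $\theta$ shows that $\mathfrak{a}+\mathfrak{n}$ meets $\mathfrak{h}$ trivially, so the decomposition is in fact direct.

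It then remains to enlarge $\mathfrak{a}+\mathfrak{n}$ to a genuine Borel subalgebra: I would pick a Cartan subalgebra $\mathfrak{t}$ of $\mathfrak{g}$ lying inside the reductive centralizer $\mathfrak{g}_{0}$ and containing $\mathfrak{a}$, together with a positive root system for $(\mathfrak{g},\mathfrak{t})$ compatible with $\Sigma^{+}$, and take $\mathfrak{b}$ to be the corresponding Borel. By construction $\mathfrak{b}\supseteq\mathfrak{a}+\mathfrak{n}$, so $\mathfrak{b}+\mathfrak{h}\supseteq\mathfrak{a}+\mathfrak{n}+\mathfrak{h}=\mathfrak{g}$, which is exactly the criterion sought. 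I expect the main obstacle to be the two existence/compatibility issues hidden in this last step: first, guaranteeing that $\mathfrak{a}$ may be taken to consist of semisimple elements (equivalently, that a $\theta$-stable, maximally split Cartan subalgebra exists); and second, choosing the ordinary positive system so that it genuinely refines the restricted one, ensuring that all of $\mathfrak{n}$ is swept into $\mathfrak{b}$ while no negative restricted root space leaks in. The reductivity of $H$, and hence of $\mathfrak{g}_{0}$, recorded separately in the Appendix, is what makes the choice of $\mathfrak{t}$ inside $\mathfrak{g}_{0}$ possible. The passage from $\mathfrak{b}+\mathfrak{h}=\mathfrak{g}$ back to the openness of $B\cdot eH$ is then immediate from the differential of the orbit map, and density follows from irreducibility.
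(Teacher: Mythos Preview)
Your proposal is correct and follows essentially the same strategy as the paper: both arguments establish an Iwasawa-type decomposition $\mathfrak{g}=\mathfrak{h}+\mathfrak{a}+\mathfrak{n}$ with $\mathfrak{a}+\mathfrak{n}$ solvable, and then observe that any Borel subalgebra containing $\mathfrak{a}+\mathfrak{n}$ yields an open $B$-orbit at the base point. The only noteworthy difference is one of packaging: the paper first produces a $\theta$-stable compact real form $\mathfrak{g}_u$, passes to the associated noncompact real form $\mathfrak{g}_0=\mathrm{Fix}(\theta\sigma)$, invokes the classical real Iwasawa decomposition there, and then complexifies; you instead carry out the restricted-root argument directly in the complex algebra. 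The paper's detour through real forms is precisely what resolves the semisimplicity issue you flagged (since $\mathfrak{p}_0\subset i\mathfrak{g}_u$ consists of Hermitian operators in a faithful representation), so the two routes are really the same argument with that technical point handled in slightly different language.
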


Our goal now is to prove that $M=\mathbb{P}(\mathcal{H})$, where
$\mathcal{H}$ is the Hilbert space for two fermions, bosons or
distinguishable particles, is a spherical embedding of some open dense orbit
$\Omega$
of $G$ action%
\footnote{The respective $G$ actions were defined in Section~\ref{sec:2}%
}. We will first prove

\begin{proposition} The orbits $G.s_{N}$, $G.a_{N}$, $G_{D}.d_{N}$
are affine symmetric spaces.
\end{proposition}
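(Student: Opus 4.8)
The plan is to combine the orbit--stabilizer theorem with an explicit holomorphic involution in each of the three cases. First I would observe that each orbit is $G/H$ (respectively $G_D/H$), where $H$ is the isotropy group of the chosen base point, so it suffices to identify $H$ by solving the defining matrix equation and then to exhibit an involution $\theta$ of $G$ (resp. $G_D$) whose fixed-point group is exactly $H$. The $(\pm1)$-eigenspace decomposition $\mathfrak{g}=\mathfrak{h}\oplus\mathfrak{p}$ of $d\theta$ then exhibits $G/H$ as an affine symmetric space in the sense of the definition above, so the only work is to produce the right $\theta$ and to check that it squares to the identity with the asserted fixed points.

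For the bosonic point $s_N$, where $C_{s_N}=2I$ and the action is $U.C=UCU^{t}$, solving $U(2I)U^{t}=2I$ gives $UU^{t}=I$, so $H=\mathrm{SO}_N(\mathbb{C})$. I would take $\theta(U)=(U^{-1})^{t}$: using $(U^{-1})^{t}=(U^{t})^{-1}$ one checks $\theta^{2}=\mathrm{id}$, and the fixed points are precisely $\{U:UU^{t}=I\}$. At the Lie-algebra level $d\theta(X)=-X^{t}$, so $\mathfrak{h}=\mathfrak{so}_N$ (antisymmetric matrices) and $\mathfrak{p}$ is the space of traceless symmetric matrices; hence $G.s_N\cong\mathrm{SL}_N(\mathbb{C})/\mathrm{SO}_N(\mathbb{C})$ is affine symmetric.

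For the fermionic point $a_N$ in the even case $N=2n$, where $C_{a_N}=J$ with $J^{t}=-J$ and $J^{2}=-I$, solving $UJU^{t}=J$ gives $H=\mathrm{Sp}_N(\mathbb{C})$. Here I would take $\theta(U)=J(U^{-1})^{t}J^{-1}$; a short computation using $J^{t}=-J$ and $J^{-1}=-J$ shows $\theta^{2}=\mathrm{id}$ and that the fixed points are exactly the symplectic group, with $d\theta(X)=-JX^{t}J^{-1}$ and $\mathfrak{h}=\mathfrak{sp}_N$. Thus $G.a_N\cong\mathrm{SL}_{2n}(\mathbb{C})/\mathrm{Sp}_{2n}(\mathbb{C})$. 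For the distinguishable point $d_N$, where $C_{d_N}=I$ and $(U,V).C=UCV^{t}$ on $G_D=G\times G$, solving $UV^{t}=I$ gives $H=\{(U,(U^{-1})^{t}):U\in\mathrm{SL}_N\}$. The involution $\theta(U,V)=((V^{-1})^{t},(U^{-1})^{t})$, combining the factor swap with transpose-inverse, satisfies $\theta^{2}=\mathrm{id}$ and has precisely this twisted diagonal as its fixed-point group ($d\theta(X,Y)=(-Y^{t},-X^{t})$, so $\mathfrak{h}=\{(X,-X^{t})\}\cong\mathfrak{sl}_N$); hence $G_D.d_N$ is the group-type symmetric space $(\mathrm{SL}_N\times\mathrm{SL}_N)/\mathrm{SL}_N$.

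The one place I expect genuine difficulty is the odd-dimensional fermionic case $N=2n+1$, where $J$ has rank $2n$ and a one-dimensional radical $\langle e_N\rangle$. Solving $UJU^{t}=J$ then forces $U$ to preserve this radical, which produces a block-triangular isotropy group with a nontrivial unipotent part; such an $H$ is not reductive, so $G.a_N$ is not affine symmetric by the naive computation. I would treat this case separately, either by restricting the symmetric-space conclusion to even $N$ and establishing sphericity of the odd open orbit by other means, or by adjusting the base point to recover a reductive isotropy group, and pinning down the correct normal form for $a_N$ in odd dimension is the crux of that argument. In all the even and clean cases the remaining checks are routine: that each $\theta$ is holomorphic, squares to the identity, and has the claimed fixed-point group, for which the identities $J^{t}=-J$, $J^{2}=-I$ and $(AB)^{t}=B^{t}A^{t}$ suffice.
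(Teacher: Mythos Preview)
Your argument is essentially the paper's own: the same three involutions $\theta(T)=(T^{t})^{-1}$, $\theta(T)=J(T^{t})^{-1}J^{-1}$, and $\theta(T,S)=((S^{t})^{-1},(T^{t})^{-1})$, and the same stabilizer computations. (Incidentally, your differential $d\theta(X,Y)=(-Y^{t},-X^{t})$ is the correct one; the paper's footnote drops the factor swap.)

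Your hesitation about the odd fermionic case $N=2n+1$ is well placed, and in fact the paper simply does not address it. When $N$ is odd the matrix $J$ is singular, so the written involution $J(T^{t})^{-1}J^{-1}$ is undefined; and the stabilizer of $a_{N}$, being block upper-triangular with respect to the radical $\langle e_{N}\rangle$ (of the shape $\mathrm{Sp}_{2n}(\mathbb{C})\ltimes\mathbb{C}^{2n}$ up to the determinant condition), is genuinely non-reductive. So $G.a_{2n+1}$ is \emph{not} an affine symmetric space and the proposition, read literally, fails in that case. What the paper actually needs downstream is only that the open $G$-orbit in $\mathbb{P}(\Lambda^{2}\mathbb{C}^{N})$ be spherical, and that does survive for odd $N$: it is classical that $\Lambda^{2}\mathbb{C}^{N}$ is a spherical $\mathrm{GL}_{N}$-module for every $N$, or one may verify directly that the upper-triangular Borel has a dense orbit on $\mathbb{P}(\Lambda^{2}\mathbb{C}^{2n+1})$ by a tangent-space count at $[a_{N}]$. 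Your plan to separate off the odd case and argue sphericity by other means is therefore exactly the right repair.
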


\begin{proof} The isotropy subgroups (stabilizers) of the points
$s_{N}$, $a_{N}$, and $d_{N}$ under the corresponding actions of $G$ or
$G_{D}$ are given by $G_{s_{N}}=\{T\in G:\,
TM_{s_{N}}T^{t}=M_{s_{N}}\}=\{T\in G:\, TT^{t}=I\}$, $G_{a_{N}}=\{T\in G:\,
TM_{a_{N}}T^{t}=M_{a_{N}}\}=\{T\in G:\, TJT^{t}=J\}$, and
$G_{Dd_{N}}=\{(T,S)\in G\times G:\, TM_{d_{N}}S^{t}=M_{d_{N}}\}=\{(T,S)\in
G\times G:\, TS^{t}=I\}$, where we use the explicit forms of the matrices
$M_{s_{N}}$, $M_{a_{N}}$, and $M_{d_{N}}$ given at the end of
Section~\ref{sec:2}. On the other hand they are given as fixed point sets of
the holomorphic involutions $\theta(T)=(T^{t})^{-1}$,
$\theta(T)=J(T^{t})^{-1}J^{-1}$, and
$\theta(T,S)=\left((S^{t})^{-1},(T^{t})^{-1}\right)$,
respectively%
\footnote{On the level of the algebras the corresponding involutions (denoted
by the same letter $\theta$) read, respectively $\theta(X)=-X^{t}$,
$\theta(X)=-JX^{t}J^{-1}$, and $\theta(X,Y)=(-X^{t},-Y^{t})$%
}. Hence in all three cases cases the orbits $G.s_{N}=G/G_{s_{N}}$,
$G.a_{N}=G/G_{a_{N}}$, and $G_{D}.d_{N}=G_{D}/G_{Dd_{N}}$ are affine
symmetric spaces.\end{proof}

\begin{proposition} The orbits $G.s_{N}$, $G.a_{N}$, $G_{D}.d_{N}$
are open Zariski dense in the appropriate $\mathbb{P}(\mathcal{H})$.
\end{proposition}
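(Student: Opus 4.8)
The plan is to describe each orbit explicitly as the complement of a proper algebraic subset of $\mathbb{P}(\mathcal{H})$ and then to invoke irreducibility of projective space to obtain density. I would work throughout with the matrix model of Section~\ref{sec:2}, under which $\mathcal{H}_{B},\mathcal{H}_{F},\mathcal{H}_{D}$ are identified with the symmetric, antisymmetric and all complex $N\times N$ matrices, the distinguished points corresponding to $C_{s_{N}}=2I$, $C_{a_{N}}=J$ and $C_{d_{N}}=I$, and the actions being $C\mapsto UCU^{t}$ for bosons and fermions and $C\mapsto UCV^{t}$ for distinguishable particles.

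A preliminary reduction replaces $G=\mathrm{SL}_{N}$ by $\mathrm{GL}_{N}$. The scalar matrices $\mathbb{C}^{\ast}I$ act by $C\mapsto\lambda^{2}C$ (resp. $C\mapsto\lambda\mu C$ for the two-factor action), hence trivially on $\mathbb{P}(\mathcal{H})$; since $\mathrm{GL}_{N}=\mathbb{C}^{\ast}I\cdot\mathrm{SL}_{N}$ (and likewise for the product), the $\mathrm{SL}$- and $\mathrm{GL}$-orbits of the base point in $\mathbb{P}(\mathcal{H})$ coincide. It therefore suffices to compute the $\mathrm{GL}_{N}$- (resp. $\mathrm{GL}_{N}\times\mathrm{GL}_{N}$-) orbit, where the classical normal-form theory applies directly.

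The heart of the argument is the classification of matrices up to the relevant equivalence over $\mathbb{C}$: two symmetric (resp. antisymmetric) matrices are congruent under $C\mapsto UCU^{t}$ if and only if they have the same rank, and two arbitrary matrices are equivalent under $C\mapsto UCV^{t}$ if and only if they have the same rank. Consequently $\mathrm{GL}_{N}.[I]$ is exactly the set of nondegenerate symmetric matrices, $\mathrm{GL}_{N}.[J]$ is the set of antisymmetric matrices of maximal rank ($N$ for $N$ even, $N-1$ for $N$ odd), and $(\mathrm{GL}_{N}\times\mathrm{GL}_{N}).[I]$ is the set of invertible matrices. The inclusion $\subseteq$ is immediate from the invariance of rank; the reverse inclusion is the assertion that every matrix of maximal rank can be brought to the standard form $I$ or $J$, which I would record by citing the Autonne--Takagi normal form in the symmetric case and the standard symplectic normal form in the antisymmetric case.

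Each target set is the complement of a proper Zariski-closed locus: $\{\det C=0\}$ for bosons, $\{\det C=0\}=\{\mathrm{Pf}(C)=0\}$ for fermions with $N$ even, the common zero set of the sub-Pfaffians for fermions with $N$ odd, and $\{\det C=0\}$ for distinguishable particles. In every case this complement is nonempty and open, and since $\mathbb{P}(\mathcal{H})$ is irreducible any nonempty Zariski-open set is dense, which gives the claim. The point requiring most care is the odd-dimensional fermionic case, where the maximal rank is $N-1$ rather than $N$, so the orbit is cut out by several sub-Pfaffians rather than by a single determinant; there one must verify separately that the maximal-rank locus is open and that it is a single congruence class. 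As an independent cross-check one can match dimensions: for bosons $\dim(G.s_{N})=\dim\mathrm{SL}_{N}-\dim\mathrm{SO}_{N}(\mathbb{C})=\tfrac{(N-1)(N+2)}{2}=\dim\mathbb{P}(S^{2}\mathcal{H}_{1})$, and the analogous computations using $\mathrm{Sp}_{N}$ and the diagonal $\mathrm{SL}_{N}$ as stabilizers give equality in the other two cases, so that the orbit, being a smooth locally closed subvariety of full dimension in an irreducible variety, is forced to be open and dense.
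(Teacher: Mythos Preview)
Your proposal is correct and follows essentially the same route as the paper: both arguments rest on the classical fact that rank is the only invariant of (anti)symmetric bilinear forms under congruence and of linear maps under two-sided equivalence, so the orbit of the maximal-rank point is the open maximal-rank locus in $\mathbb{P}(\mathcal{H})$. Your version is more explicit than the paper's three-line proof---you spell out the $\mathrm{GL}$-to-$\mathrm{SL}$ reduction, name the normal forms, describe the closed complement via determinants and (sub-)Pfaffians, and treat the odd fermionic case separately---but the underlying idea is identical.
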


\begin{proof}Treated as quadratic forms the points $s_{N}$, $a_{N}$,
$d_{N}$ are of maximal rank. Since the rank of such a form is the only
invariant in the sense that any two forms of the same rank are equivalent
modulo the standard $\mathrm{GL}(\mathcal{H})$-action, it follows that
$\mathrm{GL}(\mathcal{H}).x_{N}$ is dense and open in $\mathcal{H}$ and
$\mathrm{SL}(\mathcal{H}).x_{N}=G.x_{N}$ is open in
$M=\mathbb{P}(\mathcal{H})$ \end{proof}

\begin{proposition}\label{open_spherical} $M=\mathbb{P}(\mathcal{H})$,
where $\mathcal{H}$ is the Hilbert space for two fermions, bosons or
distinguishable particles is spherical embedding of $G.s_{N}$, $G.a_{N}$,
$G_{D}.d_{N}$ respectively
\end{proposition}
\begin{proof}Follows from the above propositions.\end{proof}

\section{Moment map}
\label{sec:moment}

In the following we show that it is very useful to view spherical varieties
from the standpoint of symplectic geometry. To this end we devote this
section to the concept of the moment map (see \cite{guillemin84} for more
details). We discuss it in general as well as in the specific context of our
problem. Finally we derive an explicit formula for the moment map associated
to the unitary action on the complex projective space.

\subsection{General setting}
\label{subsec:momentgeneral}

Let $K$ be any connected, semisimple Lie group acting smoothly as group of
symplectomorphisms of a symplectic manifold $(M,\omega)$, i.e.

\begin{gather*}
K\times M\ni(g,x)\mapsto\Phi_{g}(x)\in M,\quad\quad
\Phi_{g}^{*}\omega=\omega.
\end{gather*}
 For any $\xi\in\mathfrak{k}$ we define a fundamental vector field
$\hat{\xi}$ for the $\Phi$ action
\begin{equation}
\hat{\xi}(x)=\frac{d}{dt}\bigg|_{t=0}\Phi_{\exp t\xi}(x).\label{fvf}
\end{equation}
 In the following, to shorten the notation we will use $gx$ or $g.x$
instead of $\Phi_{g}(x)$ for $g\in G$, $x\in K$.

The map
\begin{gather}\label{eq:fvf1}
\widehat{}:(\mathfrak{k},[\cdot,\cdot])\rightarrow(\chi(M)\,,\,[\cdot,\cdot]),
\quad \xi\mapsto\hat{\xi},
\end{gather}
is a homomorphism of the Lie algebra $\mathfrak{k}$ into the Lie algebra
$\chi(M)$ of the vector fields on $M$.

A vector field $X$ is said to be Hamiltonian if the Lie derivative of $\omega$
along it vanishes,
\begin{gather}
X\in \mathrm{Ham}(M)\equiv \mathcal{L}_H\omega=0.\label{eq:hamvf}
\end{gather}
%From the  Now, since the action of $K$ is symplectic we have from the Cartan
%formula
%\begin{gather}
%\mathcal{L}_{\hat{{\xi}}}\omega= i_{\hat{\xi}}\circ d\omega+d\circ
%i_{\hat{\xi}}\omega.\label{eq:cartan}
%\end{gather}
>From the Cartan formula
\begin{gather}
\mathcal{L}_X\omega= i_X\circ d\omega+d\circ i_X\omega.\label{eq:cartan1}
\end{gather}
and $d\omega=0$ it follows that $d\circ i_X\omega=d\omega(X,\cdot)=0$ for a
Hamiltonian vector field $X$, i.e. the 1-form $\omega(X,\cdot)$ is closed. If
the first de Rham cohomology group of $M$ vanishes then all closed forms are
exact, i.e., there exists a function $F$ such that $\omega(X,\cdot)=dF$. In
this case the map $F\mapsto X_F$ which which assigns a Hamiltonian vector
field $X_{F}\in\mathrm{Ham}(M)$ to any smooth function $F\in C^\infty(M)$
\emph{via} $\omega(X_F,\cdot)=dF$ is surjective. Moreover, one checks easily
that if we define the Poisson bracket on $M$ in the usual way,
\begin{gather}\label{PB}
\{F\,,\, G\}=\omega(X_{F},X_{G}),\,\,\,\, F,G\in C^\infty(M),
\end{gather}
the map
\begin{gather}\label{ham}
\rho:\big(C^\infty(M),\{\cdot,\cdot\}\big)
\rightarrow\big(\mathrm{Ham}(M),[\cdot,\cdot]\big), \quad F\mapsto X_F,
\end{gather}
is a homomorphism of Lie algebras.

Observe now that since the action of $K$ on $M$ is symplectic we have
$\mathcal{L}_{\hat{\xi}}\omega=0$ for the fundamental vector fields
(\ref{fvf}), i.e., the fundamental vector fields are Hamiltonian. Hence, under
the assumption that the first de Rham cohomology group of $M$ vanishes, for
each Lie algebra element $\xi$ there exists a function $\mu_\xi$ such that
\begin{gather}\label{eq:hamiltonian}
d\mu_{\xi}=\omega(\hat{\xi},\cdot).
\end{gather}
where $\xi$ is given by (\ref{fvf}). Notice that for any $\xi$ the function
$\mu_{\xi}$ is given only up to an arbitrary constant function. Since
$\mathfrak{k}$ is finite dimensional, $\mu_{\xi}$ can be chosen to be linear in
$\xi$, i.e.,
\begin{equation}
\mu_{\xi}(x)=\langle\mu(x),\xi\rangle,\quad\mu(x)\in\mathfrak{k}^{\ast},\label{moment-lin}
\end{equation}
where $\langle\,,\rangle$ is the pairing between $\mathfrak{k}$ and
$\mathfrak{k}^{\ast}$. In this way we obtain a map
$\mu:M\rightarrow\mathfrak{k}^{\ast}$ called the moment map.

Without spoiling the properties of the moment map which we have just derived
we still have the freedom to add a constant function to $\mu_\xi$ which is
equivalent to adding a linear functional $\nu\in\mathfrak{k}^\ast$ to the
mapping $\bar\mu$ from the Lie algebra $\mathfrak{k}$ to the space of smooth
functions
\begin{equation}\label{eq:moment-hom}
\mathfrak{k}\ni\xi\mapsto \bar\mu(\xi):=\mu_\xi\in C^\infty(M).
\end{equation}
It can be shown that under a particular assumption about the group $K$ which
is always fulfilled through the paper, \textit{vis}., its semisimplicity, we
can use this freedom to make $\bar{\mu}^\prime=\bar{\mu}+\nu$ a homomorphism
of the Lie algebras $\mathfrak{k}$ and $C^\infty(M)$, the latter again
endowed with the Lie algebra structure with the help of the Poisson bracket
(\ref{PB}). In fact the choice can be made in a unique way such that
$\bar{\mu}^\prime\circ\rho=\widehat{}$\ . A short proof of the above
statement can be found in \cite{guillemin84}. Here we remark only that if the
second cohomology group of the Lie algebra $\mathfrak{k}$  vanishes,
$H^{2}(\mathfrak{k},\mathbb{R})=0$, then a $\nu$ making $\bar{\mu}$ the
homomorphism always exists and if the first cohomology group vanishes as
well, $H^{1}(\mathfrak{k},\mathbb{R})=0$, then the choice of such $\nu$, and
consequently of $\bar{\mu}^\prime$, is unique. The structure of cohomology
groups of semisimple Lie algebras is characterized by Whitehead's lemmas,
which state that, in particular for semisimple algebras,
$H^{1}(\mathfrak{k},\mathbb{R})=0=H^{2}(\mathfrak{k},\mathbb{R})$. In the
following we assume that our moment map $\mu$ is such that $\bar{\mu}$ is the
above described unique homomorphism.

The group $K$ acts also on its Lie algebra $\mathfrak{k}$ \textit{via} the
adjoint action ${\Ad}_{g}\xi=g\xi g^{-1}$. The dual of the adjoint action is
the coadjoint action on $\mathfrak{k}^{\ast}$,
\begin{equation}
\langle{\Ad}_{g}^{\ast}\alpha,\xi\rangle=\langle\alpha,{\Ad}_{g^{-1}}\xi\rangle=\langle\alpha,g^{-1}\xi g\rangle,\label{Adast}
\end{equation}
for $g\in K$, $\xi\in\mathfrak{k}$, and $\alpha\in\mathfrak{k}^{\ast}$. Since
$\bar\mu$ is the unique homomorphism the moment map is equivariant (for
details, consult again \cite{guillemin84}), i.e., for each $x\in M$ and $g\in
K$,
\begin{equation}\label{moment-equiv}
\mu\left(\Phi_{g}(x)\right)={\Ad}_{g}^{\ast}\mu(x)
\end{equation}

On each coadjoint orbit, $\Omega_{\alpha}=\{{\Ad}_{g}^{\ast}\alpha:g\in G\}$,
there is a canonical symplectic structure - the so called
Kirillov-Kostant-Souriau form given by
\begin{equation}
\label{symformcoa}
\tilde{\omega}_{\alpha}(\tilde{\xi},\tilde{\eta})
=\langle\alpha,[\xi,\eta]\rangle,
\end{equation}
where $\tilde{\xi},\,\tilde{\eta}$ are the fundamental vector fields
associated to $\xi$, $\eta$ by the coadjoint action. Finally let us notice
that using (\ref{eq:hamiltonian}) we get
\begin{equation}
d\mu_{\xi}(x)(\widehat{\eta})=\omega(x)(\widehat{\xi},\widehat{\eta}),\label{basiclemma}
\end{equation}
and
\begin{proposition} For every $x\in M$ it follows that
\[
\mathrm{Ker}(d\mu(x))=(T_{x}K.x)^{\perp_{\omega}}\,,
\]
\end{proposition}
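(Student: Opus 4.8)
The plan is to unwind both sides of the claimed identity directly from the definition of the moment map, using the defining relation \eqref{eq:hamiltonian} together with the elementary characterization \eqref{basiclemma}. The statement
\[
\mathrm{Ker}(d\mu(x)) = (T_{x}K.x)^{\perp_{\omega}}
\]
is an equality of subspaces of $T_{x}M$, so I would prove it by showing that a tangent vector $\hat{\eta}(x)$ lies in the left-hand side if and only if it lies in the right-hand side. The key observation is that the tangent space $T_{x}K.x$ to the $K$-orbit through $x$ is spanned precisely by the fundamental vector fields $\hat{\xi}(x)$ as $\xi$ ranges over $\mathfrak{k}$, since $K.x = K/K_{x}$ and the orbit map has differential $\xi \mapsto \hat{\xi}(x)$. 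So the symplectic-orthogonal complement $(T_{x}K.x)^{\perp_{\omega}}$ consists of exactly those $w \in T_{x}M$ with $\omega(x)(\hat{\xi}(x), w) = 0$ for every $\xi \in \mathfrak{k}$.

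First I would fix $x \in M$ and an arbitrary tangent vector $w \in T_{x}M$, and compute $d\mu(x)(w)$ as an element of $\mathfrak{k}^{*}$. By the linearity in $\xi$ recorded in \eqref{moment-lin}, an element $\alpha \in \mathfrak{k}^{*}$ is zero exactly when $\langle \alpha, \xi\rangle = 0$ for all $\xi \in \mathfrak{k}$. Pairing $d\mu(x)(w)$ against an arbitrary $\xi$ gives $\langle d\mu(x)(w), \xi\rangle = d\mu_{\xi}(x)(w)$, again by \eqref{moment-lin}. Now \eqref{eq:hamiltonian} says $d\mu_{\xi} = \omega(\hat{\xi}, \cdot)$, so evaluating at $x$ on $w$ yields
\[
\langle d\mu(x)(w), \xi\rangle = \omega(x)(\hat{\xi}(x), w).
\]
This is essentially the content of \eqref{basiclemma} read with $w = \hat{\eta}(x)$, and it is the entire crux of the argument.

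With this identity in hand the equivalence is immediate. The vector $w$ lies in $\mathrm{Ker}(d\mu(x))$ precisely when $d\mu(x)(w) = 0$ in $\mathfrak{k}^{*}$, which by the pairing criterion holds iff $\langle d\mu(x)(w), \xi\rangle = 0$ for all $\xi \in \mathfrak{k}$, i.e.\ iff $\omega(x)(\hat{\xi}(x), w) = 0$ for all $\xi$. By the antisymmetry of $\omega$ this is the same as $\omega(x)(w, \hat{\xi}(x)) = 0$ for all $\xi$, which is exactly the condition that $w$ be symplectically orthogonal to every element of $T_{x}K.x$, namely $w \in (T_{x}K.x)^{\perp_{\omega}}$. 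Chaining the equivalences gives the two inclusions at once and hence the equality.

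I do not expect any serious obstacle here; this is a foundational compatibility between the moment map and the symplectic form, and the proof is a one-line computation dressed up as a chain of equivalences. The only point requiring a modicum of care is the bookkeeping between the linear-in-$\xi$ convention \eqref{moment-lin} and the defining Hamiltonian relation \eqref{eq:hamiltonian}, together with the standing assumption (already invoked earlier in the excerpt, via vanishing of the first de Rham cohomology and Whitehead's lemmas) that $\mu$ is well defined as a genuine map into $\mathfrak{k}^{*}$. Under those hypotheses the argument is complete, and the step I would flag as carrying all the weight is the translation of $d\mu(x)(w) = 0$ into the orthogonality condition through the pairing in \eqref{basiclemma}.
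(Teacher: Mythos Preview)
Your proposal is correct and follows exactly the line the paper indicates: the paper does not give a separate proof but simply notes that the proposition follows from \eqref{eq:hamiltonian} via the identity \eqref{basiclemma}, and your argument is precisely the clean unpacking of that implication. The only minor wrinkle is that in your opening sentence you speak of testing vectors of the form $\hat{\eta}(x)$, whereas (as you then correctly do) one must test an arbitrary $w\in T_xM$; this is handled by going back to \eqref{eq:hamiltonian} rather than the restricted form \eqref{basiclemma}, and you do exactly that.
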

where by $^{\perp_{\omega}}$ we denote the $\omega$-orthogonal complement

Let us denote by $\mathcal{F}_{x}:=\mu^{-1}(\mu(x))$ the $\mu$ fiber over
$x\in M$. In general little can be said about the position of $\mathcal{F}_x$
with respect to $K.x$. Even if it is smooth we have only
$T_{x}\mathcal{F}_{x}\subset(T_{x}K.x)^{\perp_{\omega}}$ which implies that
$\mathcal{F}_{x}$ may not be entirely contained in the orbit $K.x$. However,
when $K.x$ is coisotropic we have

\begin{corollary} If $K.x$ is co-isotropic, i.e., if $(T_{x}K.x)^{\perp_{\omega}}\subset T_{x}K.x$,
the connected component of the $\mu$-fiber at $x$ is contained in $K.x$.
\end{corollary}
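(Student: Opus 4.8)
The plan is to promote the infinitesimal inclusion $T_{x}\mathcal{F}_{x}\subseteq(T_{x}K.x)^{\perp_{\omega}}$ noted above to a genuine containment of the whole connected component, by identifying the fiber locally with an orbit of the coadjoint stabilizer of $\mu(x)$ and then using compactness of $K$ to globalize. First I would note that coisotropy propagates over the entire orbit: since $\omega$ is $K$-invariant and $K$ acts transitively on $K.x$, the inclusion $(T_{y}K.y)^{\perp_{\omega}}\subseteq T_{y}K.y$ holds at every $y\in K.x$ as soon as it holds at $x$. Combining this with the Proposition above, which gives $\ker d\mu(y)=(T_{y}K.y)^{\perp_{\omega}}$, we obtain $\ker d\mu(y)\subseteq T_{y}K.y$. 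Equivariance (\ref{moment-equiv}) identifies the differential of $\mu$ along an orbit direction $\widehat{\xi}(y)$ with the infinitesimal coadjoint action $-\mathrm{ad}^{*}_{\xi}\mu(y)$, so that $T_{y}K.y\cap\ker d\mu(y)=T_{y}(K_{\mu(y)}.y)$, where $K_{\mu(y)}$ is the coadjoint stabilizer. Hence, for $y$ in the subfiber $N:=K_{\mu(x)}.x=(\mu|_{K.x})^{-1}(\mu(x))\subseteq\mathcal{F}_{x}$, coisotropy yields the sharp statement $\ker d\mu(y)=T_{y}N$: the smooth submanifold $N$ lies in $\mathcal{F}_{x}$ and is tangent to the \emph{full} kernel of $d\mu$ at each of its points.

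The crux is the local reverse inclusion, i.e.\ that $\mathcal{F}_{x}$ coincides with $N$ in a neighborhood of $x$. Choosing a complement $S$ with $T_{x}M=T_{x}N\oplus T_{x}S$ and using $T_{x}N=\ker d\mu(x)$, the map $d\mu(x)$ is injective on $T_{x}S$, so $\mu$ is an immersion in the directions transverse to $N$. Passing to a chart adapted to $N$ (coordinates along $N$ and transverse to it), $\mu$ is constant along $N$ while its transverse differential remains injective at all nearby points by continuity; any $y\in\mathcal{F}_{x}$ close to $x$ must therefore have vanishing transverse coordinates, forcing $y\in N$. This is the step where the tangency is actually integrated, and it is the part I expect to require the most care, since a priori $\mathcal{F}_{x}$ need not be smooth and one must control $\mu$ on a whole neighborhood rather than only to first order.

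Finally I would globalize using connectedness and compactness. Because $K_{\mu(x)}$ preserves both $\mu(x)$ and $\omega$, it maps $\mathcal{F}_{x}$ to itself and acts transitively on $N$, so the local coincidence $\mathcal{F}_{x}=N$ holds near every point of $N$; consequently the identity-component orbit $N^{0}:=K_{\mu(x)}^{0}.x$ is open in $\mathcal{F}_{x}$. On the other hand $K$ is compact, hence so is $K_{\mu(x)}^{0}$, and therefore $N^{0}$ is a compact, thus closed, subset of $\mathcal{F}_{x}$. Being a nonempty, connected, and relatively clopen subset of $\mathcal{F}_{x}$ that contains $x$, $N^{0}$ must equal the connected component $\mathcal{F}_{x}^{0}$ of $x$. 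Since $N^{0}=K_{\mu(x)}^{0}.x\subseteq K.x$, this proves the claim; the appeal to compactness is precisely what rules out the fiber escaping along lower-dimensional boundary orbits where coisotropy is unavailable.
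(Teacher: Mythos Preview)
Your argument is correct and considerably more detailed than what the paper offers. In the paper this Corollary is stated immediately after the observation $T_{x}\mathcal{F}_{x}\subset(T_{x}K.x)^{\perp_{\omega}}$ with no further justification; the authors treat the passage from the infinitesimal inclusion $T_{x}\mathcal{F}_{x}\subset T_{x}K.x$ to the global containment of the connected component as evident, without spelling out the integration step. You, by contrast, actually carry it out: you identify the kernel of $d\mu$ along $N=K_{\mu(x)}.x$ with $T_{y}N$, use a transverse-immersion argument to show $\mathcal{F}_{x}$ locally coincides with $N$, and then invoke compactness of $K_{\mu(x)}^{0}$ to conclude that $N^{0}$ is clopen in $\mathcal{F}_{x}$. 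This is the honest work behind the Corollary, and the paper simply omits it.

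One small remark on your local step: the cleanest way to justify it rigorously is to observe that the map $(n,s)\mapsto (n,\mu(n,s)-\mu(x))$, in coordinates adapted to $N$, has injective differential at the origin (block upper-triangular with identity and $d\mu(x)\vert_{S}$ on the diagonal), hence is a local embedding; this immediately forces $s=0$ whenever $\mu(n,s)=\mu(x)$ near $x$, without needing to control the transverse differential point by point. With that tweak your write-up is airtight.
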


In order to emphasize the fact that the $\mu$-fiber might not be contained in
$K.x$ let us consider following example.

\begin{example}\label{ex1}Let $\mathcal{H}_{1}=\mathbb{C}^{2}$,
and $\mathcal{H}=\mathcal{H}_{1}^{\otimes3}$, $K=SU_{2}$,
$G=SL_{2}(\mathbb{C})^{\times3}$. Denote by $\{e_{1},\,
e_{2}\}\subset\mathcal{H}_{1}$ the orthogonal basis of
$\mathbb{\mathcal{H}}_{1}$. Consider now two states
\begin{gather*}
x_{1}=\sqrt{\frac{2}{3}}e_{1}\otimes e_{1}\otimes e_{1}+\frac{1}{\sqrt{3}}e_{2}\otimes e_{2}\otimes e_{2},\\
x_{2}=\frac{1}{\sqrt{3}}\left(e_{2}\otimes e_{1}\otimes e_{1}+e_{1}\otimes
e_{2}\otimes e_{1}+e_{1}\otimes e_{1}\otimes e_{2}\right).
\end{gather*}
Using formula (\ref{generalmomentmap}) (see below) one easily checks that
$\mu(x_{1})=\mu(x_{2})$. Notice however that the orbits $K.x_{1}$ and
$K.x_{2}$ are disjoint as the orbits $G.x_{1}$ and $G.x_{2}$ are known to be
disjoint \cite{dur00}.
\end{example}

\subsection{Moment map on a complex projective space}
\label{subsec:momentprojective}

Let us now restrict to a setting of special importance for our purposes,
namely $M=\mathbb{P}(\mathcal{H})$. For this let $K$ be a connected compact
Lie group acting linearly \emph{via} a unitary representation on a Hilbert
space $\mathcal{H}$. Note that the complexification $G:=K^{\mathbb{C}}$ is
holomorphically represented on $\mathcal{H}$. Our goal now is to derive
formula for the moment map. To this end let $\rho:=\log\Vert\cdot\Vert^{2}$
denote the associated ($K$-invariant) norm-squared function and consider the
$(1,1)$-form on $\mathcal{H}\setminus\{0\}$
\begin{gather*}
\omega:=-\frac{i}{2}\partial\overline{\partial}\rho.
\end{gather*}
 For concrete calculations it is sometimes convenient to use local
complex coordinates $z_{k}=x_{k}+iy_{k}$,
\begin{gather*}
\omega=-\frac{i}{2}\sum_{k,l}(\partial_{z_{k}}\partial_{\overline{z}_{l}}\rho)dz_{k}\wedge d\overline{z}_{l},\\
\partial_{z_{k}}=\frac{1}{2}(\partial_{x_{k}}-i\partial_{y_{k}}),\quad\partial_{\overline{z}_{k}}=\frac{1}{2}(\partial_{x_{k}}-i\partial_{y_{k}}),\\
dz_{k}=dx_{k}+idy_{k},\quad d\overline{z}_{k}=dx_{k}-idy_{k}.
\end{gather*}

The form $\omega$ is degenerate along the fibers of the standard projection
$\mathcal{H}\setminus\{0\}\to\mathbb{P}(\mathcal{H})$ and induces a
symplectic form $\omega_{FS}$ on $\mathbb{P}(\mathcal{H})$ called the
Fubini-Study structure associated to the given unitary
structure on $\mathcal{H}$ %
\footnote{Let us recall that the Hermitian structure on $\mathcal{H}$ endows
$\mathbb{P}(\mathcal{H})$ with a real Riemannian structure - the Fubini-Study
metric. Together with $\omega$ it makes $\mathbb{P}(\mathcal{H})$
a K\"ahler manifold%
}. We wish to define a moment map for the symplectic $K$-action on
$(\mathbb{P}(\mathcal{H})\,,\,\omega_{FS})$. For this we note that in general
if $U$ is a Lie group containing $K$ and having a moment map
$\mu_{U}:M\to\mathfrak{u}^{*}$ and $\pi:\mathfrak{u}^{*}\to\mathfrak{k}^{*}$
is the projection which is induced by the inclusion
$\mathfrak{k}\hookrightarrow\mathfrak{g}$, then $\mu_{K}:=\pi\circ\mu_{U}$ is
a moment map for the $K$-action. We now apply this to the case where $K$ is
acting via a unitary representation as above and $U=U(\mathcal{H})$ is the
full unitary group on $\mathcal{H}$.

The moment map for the action of the full unitary group can be obtained in
the following way. For $\xi\in\mathfrak{u}(\mathcal{H})$ define
\begin{equation}
\mu_{\xi}:=\frac{1}{4}J\widehat{\xi}\log||\cdot||^{2}=\frac{1}{4}J\widehat{\xi}\rho,\label{momentformula}
\end{equation}
 where $J$ is the linear complex structure on $\mathcal{H}$ regarded
as a real Hilbert space. It is the matter of straightforward but tedious
calculations to show that in fact,
\begin{gather*}
\omega:=-\frac{i}{2}\partial\overline{\partial}\rho=\frac{1}{4}dd^{c}\rho,
\end{gather*}
 where $d^{c}f(v)=(Jv)(f)=df(Jv)$ for a tangent vector $v\in\mathcal{H}$.
By these definitions we immediately get that
$\mu:\mathcal{H}\to\mathfrak{u}(\mathcal{H})^{\ast}$ has the properties of a
moment map, i.e., $d\mu_{\xi}=\omega(\hat{\xi},\cdot)$ and
$\mu(g(v))=\Ad^\ast_g(\mu(v))$.
 Since $\omega$ is degenerate and factors through
 $\mathcal{H}\setminus\{0\}\to\mathbb{P}(\mathcal{H})$,
i.e., it is constant along the fibers of this projection, one shows easily
that $\mu$ enjoys the same properties and defines a moment map
$\mu:\mathbb{P}(\mathcal{H})\to\mathfrak{u}(\mathcal{H})^{\ast}$ with respect
to the $U(\mathcal{H})$-invariant Fubini-Study form $\omega_{FS}$. Let us
explicitly compute this map. For $\xi\in\mathfrak{u}(\mathcal{H})$ from
(\ref{momentformula}) and the properties of the scalar product
$\bk{\cdot}{\,\cdot}$ we obtain
\begin{equation}
\mu_{\xi}(v)=\frac{1}{4}J\widehat{\xi}\rho(v)=\frac{1}{4}\frac{d}{dt}\Big\vert_{t=0}\log\bk{e^{it\xi}v}{e^{it\xi}v}=\frac{1}{2}\mathrm{Im}\frac{\bk v{\xi v}}{\bk vv}=\frac{i}{2}\frac{\bk v{\xi v}}{\bk vv}\,.\label{generalmomentmap}
\end{equation}
 Direct calculations with the help of (\ref{generalmomentmap}) leads
to yet another very useful formula for $\omega$,
\begin{equation}
\omega(\hat{\xi}_{1},\hat{\xi}_{2})
=-\frac{i\bk v{[\xi_{1}\,,\,\xi_{2}]v}}{2\bk vv},
\quad\xi_{1},\,\xi_{2}\in\mathfrak{u}(\mathcal{H}).\label{ome}
\end{equation}
Using (\ref{generalmomentmap},\ref{ome}) one easily checks that indeed both
$\omega$ and $\mu$ factor through
$\mathcal{H}\setminus\{0\}\to\mathbb{P}(\mathcal{H})$.

The only missing ingredient in the formula (\ref{generalmomentmap}) is an
explicit form for the Hermitian structure on $\mathcal{H}$ treated as space
of (symmetric, antisymmetric, or general) matrices. To avoid computing we use
the irreducibility of the representations of the considered groups on the
corresponding $\mathcal{H}$ in all three cases. The arguments are slightly
different for the diagonal action of
$G=\mathrm{SL}_{\mathbb{C}}(\mathcal{H}_{1})=\mathrm{SL}_{N}(\mathbb{C})$ in
the symmetric and antisymmetric cases one one hand and the case of
$G_{D}=\mathrm{SL}_{N}(\mathbb{C})\times\mathrm{SL}_{N}(\mathbb{C})$ in the
general case for distinguishable particles, so let us start with the former.
\begin{proposition} The $G$-representations on
$\mathcal{H}_{B}$ and $\mathcal{H}_{F}$ are irreducible.
\end{proposition}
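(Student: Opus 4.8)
The statement is that $G = \mathrm{SL}_N(\mathbb{C})$ acts irreducibly on $S^2\mathbb{C}^N$ (bosons) and on $\bigwedge^2\mathbb{C}^N$ (fermions), where the action is the natural diagonal one induced from the defining representation on $\mathbb{C}^N$. This is a standard fact from the representation theory of the classical groups, and the plan is to give a clean, self-contained argument rather than invoke Weyl's character formula as a black box.

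The plan is to use the highest weight theory together with the relation
\[
\mathbb{C}^N \otimes \mathbb{C}^N = S^2\mathbb{C}^N \oplus \textstyle\bigwedge^2\mathbb{C}^N,
\]
which is a $G$-invariant (indeed $\mathrm{GL}_N$-invariant) decomposition of the tensor square of the standard representation. First I would fix the standard Borel $B$ of upper-triangular matrices and the torus $T$ of diagonal matrices in $G$, with the weights of the defining representation being the standard basis vectors $\varepsilon_1,\dots,\varepsilon_N$ on the basis $e_1,\dots,e_N$. The weights appearing in $S^2\mathbb{C}^N$ are then $\varepsilon_i+\varepsilon_j$ for $i\le j$, each with multiplicity one, and the weights in $\bigwedge^2\mathbb{C}^N$ are $\varepsilon_i+\varepsilon_j$ for $i<j$, again each with multiplicity one. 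The key observation is that in each of the two summands every weight space is one-dimensional.

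The main step is then the following. I would show that each space is generated, as a $G$-module, by a single highest weight vector. For the symmetric case, $e_1\vee e_1 = e_1\otimes e_1$ is a highest weight vector of weight $2\varepsilon_1$, annihilated by the nilradical of $B$; for the antisymmetric case, $e_1\wedge e_2$ is a highest weight vector of weight $\varepsilon_1+\varepsilon_2$. Because $G$ is reductive, any finite-dimensional representation decomposes into irreducibles, and the number of irreducible summands equals the number of linearly independent highest weight vectors (equivalently, the number of $B$-fixed lines). So the task reduces to proving that the space of highest weight vectors in each summand is exactly one-dimensional. This I would do by a direct computation: a vector $\sum c_{ij}\,e_i\vee e_j$ (resp.\ $\sum c_{ij}\,e_i\wedge e_j$) is killed by all the raising operators $E_{k,k+1}$ precisely when its weight is extremal, and running the simple root vectors $E_{k,k+1}$ across the coefficients forces all but the top coefficient to vanish. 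Since every weight has multiplicity one, a highest weight vector must be a weight vector, and the only weight whose weight-plus-positive-roots all leave the weight diagram is the top weight $2\varepsilon_1$ (resp.\ $\varepsilon_1+\varepsilon_2$); hence there is a unique $B$-fixed line, giving a single irreducible summand.

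The part requiring the most care is verifying that no \emph{other} weight can support a highest weight vector — i.e.\ ruling out a second $B$-fixed line hiding at a lower weight. The clean way to organize this is to note that applying the full nilpotent upper-triangular algebra to the top vector $e_1\vee e_1$ (resp.\ $e_1\wedge e_2$) already produces vectors of every weight $\varepsilon_i+\varepsilon_j$ occurring in the module, so the cyclic $G$-submodule generated by the top vector is the whole space; combined with the multiplicity-one property this forces irreducibility. Alternatively, since the decomposition $\mathbb{C}^N\otimes\mathbb{C}^N$ into symmetric and antisymmetric parts is the isotypic-free decomposition of a tensor square whose two highest weights $2\varepsilon_1$ and $\varepsilon_1+\varepsilon_2$ are distinct and each occurs with multiplicity one in the tensor square, each summand is forced to be a single irreducible. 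I expect the indexing bookkeeping for $N$ even versus odd (in the fermionic case, the distinction between $2n=N$ and $2n+1=N$ recorded earlier in Section~\ref{sec:2}) to be the only mildly delicate point, but it does not affect irreducibility of the $\mathrm{SL}_N(\mathbb{C})$-action, which holds uniformly.
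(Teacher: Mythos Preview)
Your proposal is correct and follows essentially the same route as the paper: both arguments observe that the torus weights on the bases $e_i\vee e_j$ and $e_i\wedge e_j$ are pairwise distinct (multiplicity-free), identify $e_1\vee e_1$ and $e_1\wedge e_2$ as the unique $B$-eigenvectors, and then invoke the criterion that a holomorphic $G$-representation is irreducible iff there is a unique $B$-fixed line in $\mathbb{P}(\mathcal{H})$. Your version is more detailed (and the alternative via cyclic generation is a nice redundancy), but the key lemma and overall structure coincide; the even/odd distinction for $N$ is indeed irrelevant here.
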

\begin{proof} Note that the action of the maximal complex torus $H$
is diagonalized in the bases chosen above for the two cases, i.e., for
$v_{ij}$ a basis element $h(v_{ij})=\chi_{ij}(h)v_{ij}$ for an explicitly
known character $\chi_{ij}$. One checks that no two such characters are the
same, i.e., in both cases the $H$-action is multiplicity-free. Now if $B$ is
the Borel group of upper-triangular matrices which contains $H$, then a
$B$-eigenvector is certainly an $H$-eigenvector. But one checks that among
the $v_{ij}$ the only $B$-eigenvector is $e_{1}\wedge e_{2}$ (resp.
$e_{1}\vee e_{1}$) for $\mathcal{H}_{B}$ (resp. $\mathcal{H}_{F}$).

Now recall that a holomorphic $G$-representation on a complex vector space
$\mathcal{H}$ is irreducible if and only if some (and therefore every) Borel
subgroup $B$ has exactly one fixed point in $\mathbb{P}(\mathcal{H})$, i.e.,
up to complex multiples there is a unique highest weight vector. Since this
is the case in our examples, the desired result has been proved. \end{proof}

The following is the metric version of the criterion which was used above for
the irreducibility of a representation.
\begin{proposition}
A representation of a compact Lie group $K$ as a group of complex linear
transformations on a complex vectors space $\mathcal{H}$ is irreducible if
and only if there is a unique (up to scalar factors) $K$-invariant unitary
structure on $\mathcal{H}$.
\end{proposition}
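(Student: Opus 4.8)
Looking at this, I need to prove that a representation of a compact Lie group $K$ on a complex vector space $\mathcal{H}$ is irreducible if and only if there's a unique (up to scalar) $K$-invariant unitary structure. Let me think through the two directions.

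The plan is to prove both implications by relating invariant Hermitian inner products to intertwining operators via the standard Hermitian adjoint construction. The key observation is that given any fixed $K$-invariant inner product $\langle\cdot,\cdot\rangle_0$ (which exists by averaging over the compact group $K$), every other $K$-invariant Hermitian form can be written as $\langle v, Aw\rangle_0$ for a unique self-adjoint operator $A$, and invariance of the form translates exactly into $A$ commuting with the $K$-action. This reduces the uniqueness question about inner products to a statement about the commutant of the representation.

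First I would establish existence: by averaging an arbitrary Hermitian inner product against Haar measure on the compact group $K$, one obtains at least one $K$-invariant positive-definite Hermitian structure $\langle\cdot,\cdot\rangle_0$; fix this as a reference. Next, for the forward direction (irreducible $\Rightarrow$ unique), suppose $\langle\cdot,\cdot\rangle_1$ is another $K$-invariant unitary structure. Using Riesz representation with respect to the reference form, write $\langle v,w\rangle_1 = \langle v, A w\rangle_0$ for a unique operator $A$ which is positive-definite and self-adjoint with respect to $\langle\cdot,\cdot\rangle_0$. I would then check that $K$-invariance of both forms forces $A$ to intertwine the representation, i.e.\ $A$ commutes with every $\Phi_g$. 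By Schur's lemma for the irreducible representation, the commutant consists only of scalars, so $A=\lambda\,\mathrm{Id}$ with $\lambda>0$, giving $\langle\cdot,\cdot\rangle_1 = \lambda\langle\cdot,\cdot\rangle_0$; hence the invariant unitary structure is unique up to a positive scalar.

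For the converse (unique $\Rightarrow$ irreducible), I would argue by contraposition. If the representation were reducible, then because $K$ is compact every invariant subspace $W$ admits an invariant orthogonal complement $W^{\perp}$ with respect to the reference form, giving a nontrivial $K$-invariant decomposition $\mathcal{H}=W\oplus W^{\perp}$. On this decomposition one can manufacture genuinely different invariant inner products, for example by rescaling the reference form by distinct positive constants on the two summands (i.e.\ $\langle\cdot,\cdot\rangle_0$ on $W$ and $2\langle\cdot,\cdot\rangle_0$ on $W^{\perp}$, extended to be orthogonal). This second structure is still $K$-invariant but is not a scalar multiple of the first, contradicting uniqueness; therefore uniqueness forces irreducibility.

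The main obstacle, and the step requiring the most care, is verifying that $K$-invariance of the two forms really does translate into $A$ commuting with the representation, and that $A$ is self-adjoint and positive with respect to the reference structure so that Schur's lemma applies cleanly. Once that translation is pinned down, both directions follow from the standard pairing of Schur's lemma with the complete reducibility guaranteed by compactness of $K$ (the averaging argument). I would flag that over $\mathbb{C}$ Schur gives scalars directly, which is exactly why the uniqueness holds up to a single scalar factor rather than something larger.
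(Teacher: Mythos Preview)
Your proof is correct and is the standard argument via Schur's lemma and complete reducibility. The paper itself does not supply a proof of this proposition; it simply states it as the ``metric version'' of the highest-weight irreducibility criterion and moves on to use it. So there is nothing to compare against, and your argument fills in exactly the gap a reader would want filled. The one small refinement worth making explicit is that the statement implicitly assumes $\mathcal{H}$ is finite-dimensional (as it is throughout the paper), so that the Riesz-type representation $\langle v,w\rangle_1=\langle v,Aw\rangle_0$ and Schur's lemma over $\mathbb{C}$ apply without further hypotheses.
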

As a consequence any $K$-invariant Hermitian structure which we choose on
$\mathcal{H}$ will automorphically be the unique one which will define
$\omega_{FS}$ and the associated moment map on $\mathbb{P}(\mathcal{H})$. Let
us now return to regarding a point in $\mathcal{H}$ as a matrix $M$ and let
\[
\langle C_{1},C_{2}\rangle:=\mathrm{Tr}(C_{1}^{\dagger}C_{2})\,.
\]
\begin{corollary} Up to scalar multiples the Hermitian structure
$\langle\ ,\ \rangle$ is the unique $K$-invariant structure on
$\mathcal{H}=\mathcal{H}_{B},\mathcal{H}_{F}$.
\end{corollary}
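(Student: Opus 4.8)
The plan is to deduce the corollary directly from the two immediately preceding propositions; the only genuinely new task is to check that the trace form $\langle C_1,C_2\rangle=\mathrm{Tr}(C_1^\dagger C_2)$ is actually $K$-invariant, after which uniqueness up to scalars comes for free. First I would record that $\langle\cdot,\cdot\rangle$ is a bona fide Hermitian structure on the space of (symmetric or antisymmetric) matrices: it is sesquilinear and $\mathrm{Tr}(C^\dagger C)=\sum_{i,j}|C_{ij}|^2>0$ for $C\neq 0$, and it restricts without difficulty to $\mathcal{H}_B$ and $\mathcal{H}_F$.

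Second, the $K$-invariance. Recall from (\ref{eq:matrix_action}) that in the bosonic and fermionic cases $K=\mathrm{SU}(N)$ acts by $U.C=UCU^{t}$. For $U\in K$ I would use the identity $(U^{t})^\dagger=\overline{U}$ to write $(UC_1U^{t})^\dagger=\overline{U}\,C_1^\dagger\,U^\dagger$, and then collapse the expression by cyclicity of the trace together with the two unitarity relations $U^\dagger U=I$ and $U^{t}\overline{U}=(U^\dagger U)^{t}=I$:
\[
\mathrm{Tr}\big((UC_1U^{t})^\dagger(UC_2U^{t})\big)
=\mathrm{Tr}\big(\overline{U}\,C_1^\dagger\,C_2\,U^{t}\big)
=\mathrm{Tr}\big(U^{t}\overline{U}\,C_1^\dagger C_2\big)
=\mathrm{Tr}(C_1^\dagger C_2).
\]
Hence $\langle\cdot,\cdot\rangle$ is $K$-invariant.

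Third, I would invoke irreducibility. The preceding proposition shows that the holomorphic $G$-representations on $\mathcal{H}_B$ and $\mathcal{H}_F$ are irreducible; since $G=K^{\mathbb{C}}$ and $K$ is connected, a complex subspace is $K$-invariant exactly when it is invariant under $\mathfrak{k}$, hence under $\mathfrak{k}^{\mathbb{C}}=\mathfrak{g}$, hence $G$-invariant. Thus the $K$-representation is irreducible as well, and the metric criterion (the proposition stated immediately above the corollary) guarantees that a $K$-invariant unitary structure is unique up to a positive scalar. Combining this with the second step, $\langle\cdot,\cdot\rangle$ must be exactly that unique structure up to scaling, which is the assertion.

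The only step carrying any subtlety is the transfer of irreducibility from $G$ to $K$; everything else is a short computation or a direct citation. I expect no real obstacle here, since the essential representation-theoretic content has already been established in the two propositions above, and the corollary is in effect just the identification of the abstract invariant structure with the concrete trace form.
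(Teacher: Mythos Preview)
Your proposal is correct and follows exactly the route the paper intends: the corollary is stated there without proof, as an immediate consequence of the two preceding propositions, and you have simply filled in the details the paper leaves implicit (the explicit verification that the trace form is $K$-invariant, and the passage from $G$-irreducibility to $K$-irreducibility via $\mathfrak{k}^{\mathbb{C}}=\mathfrak{g}$). Both computations are sound, so there is nothing to add.
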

Using this we get in both cases the following expression for the moment map
\begin{equation}
\mu_{\xi}([v])=\frac{i}{2}\frac{\mathrm{Tr}{(C_{v}^{\dagger}C_{v}\xi)}}{\mathrm{Tr}{(C_{v}^{\dagger}C_{v})}}\,.\label{momentmatrices}
\end{equation}

In the case of distinguishable particles, i.e., for
$\mathcal{H}=\mathcal{H}_{D}\cong\mathbb{C}^{N}\otimes\mathbb{C}^{N}$ we
define a $K$-equivariant antilinear isomorphism $v\mapsto\langle v|\;\rangle$
on the second factor. This defines a symplectomorphism
$\mathbb{C}^{N}\otimes\mathbb{C}^{N}\to\mathbb{C}^{N}\otimes(\mathbb{C}^{N})^{*}$
from the standard symmplectic structure on the tensor product to the
symplectic structure on $\mathrm{End}(\mathbb{C}^{N})$ defined by the unitary
pairing
\[
\langle C_{1},C_{2}\rangle:=\frac{1}{2}\mathrm{Tr}(C_{1}^{\dagger}C_{2})\,.
\]
This identification is not complex linear, in fact the $G$-representation is
not equivalent to the original one. However, since we are only interested in
the symplectic phenomena related to the $K$-representation, this is of no
importance.

As a result of the above we have reduced the problem to considering the $G$-
representation on $W:=\mathrm{End}(\mathbb{C}^{N})$ defined by
$(A,B)(C):=ACB^{-1}$. Due to the fact that the $K$-representation on $W$ is
irreducible, the above unitary structure is (up to a scalar factor) the
unique $K$-invariant structure and the associated moment
map is defined as before by (\ref{momentmatrices}). %\[
%\mu(v)(\xi)=\mu_{\xi}(v)=\frac{\mathrm{Im}\langle\xi v,v\rangle}{\Vert v\Vert^{2}}\,.
%\]

\section{Brion's theorem}
\label{sec:brion}

In Section~\ref{sec:moment} we observed that the $\mu$-fiber may not be
entirely contained in the $K$ orbit. Equivalently the moment map
$\mu:M\rightarrow\mathfrak{k}^{\ast}$ may not separate all $K$ orbits.
Continuing our exposition of symplectic aspects of spherical varieties we
will give in this section conditions ensuring that $\mu$ does separate all
$K$-orbits. This beautiful characterization is due to Brion \cite{brion87}.
Here we give a brief exposition based on \cite{huckleberry90}.

\begin{theorem}\label{brion result}(Brion) Let $K$ be a connected
compact Lie group acting on connected compact K\"{a}hler manifold
$(M,\omega)$ by a Hamiltonian action and let $G=K^{\mathbb{C}}$. The
following are equivalent
\begin{enumerate}
\item $M$ is a spherical embedding of the open $G$-orbit.
\item For every $x\in M$ the $\mu$-fiber $\mu^{-1}(\mu(x))$ is contained
    in $K.x$.
\end{enumerate}
\end{theorem}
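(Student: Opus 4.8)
The plan is to prove both implications by passing through the intermediate notion of a \emph{coisotropic} $K$-orbit, using the K\"ahler structure to translate between the symplectic condition on the $\mu$-fibers and the complex-algebraic condition on Borel orbits. The first step is to record the K\"ahler dictionary. Writing $g$ for the Fubini--Study metric, so that $\omega(u,v)=g(Ju,v)$, one obtains $(T_x(K.x))^{\perp_\omega}=J\big((T_x(K.x))^{\perp_g}\big)$, where $^{\perp_g}$ denotes the metric-orthogonal complement. Since the $G$-action is holomorphic with the fundamental vector field of $i\xi$ equal to $J\widehat{\xi}$, the complex orbit satisfies $T_x(G.x)=T_x(K.x)+J\,T_x(K.x)$. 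Combining this with the Proposition $\mathrm{Ker}(d\mu(x))=(T_x(K.x))^{\perp_\omega}$ gives the dictionary I will use throughout: $K.x$ is coisotropic if and only if $(T_x(K.x))^{\perp_g}\subseteq J\,T_x(K.x)$, and in that case $\mathrm{Ker}(d\mu(x))\subseteq T_x(K.x)$.

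For $(2)\Rightarrow(1)$ I would argue as follows. Assume every fiber $\mu^{-1}(\mu(x))$ lies in $K.x$. At a generic $x$ the map $\mu$ has locally constant rank, the fiber $\mathcal{F}_x$ is a submanifold, and $T_x\mathcal{F}_x=\mathrm{Ker}(d\mu(x))=(T_x(K.x))^{\perp_\omega}$; since $\mathcal{F}_x\subseteq K.x$ forces $T_x\mathcal{F}_x\subseteq T_x(K.x)$, the generic $K$-orbit is coisotropic. By the dictionary this says the generic orbit is as large as $J$ permits, which is exactly the statement that the action has complexity zero, so a Borel subgroup has a dense orbit in the open $G$-orbit and $M$ is a spherical embedding. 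For $(1)\Rightarrow(2)$ I would run this backwards: spherical means complexity zero, hence the generic $K$-orbit is coisotropic, and, using that every $G$-orbit of a spherical embedding is again spherical, this coisotropy holds at every $x$. The Corollary then places the connected component of $\mathcal{F}_x$ through $x$ inside $K.x$, and the global statement $\mathcal{F}_x\subseteq K.x$ follows once one knows the fibers $\mu^{-1}(\mu(x))$ are connected, which is Kirwan's connectedness theorem for moment maps on compact connected Hamiltonian manifolds.

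The hard part is the equivalence between the symplectic condition (coisotropy, equivalently the generic fiber being a single $K_{\mu(x)}$-orbit) and the purely complex-algebraic condition (a Borel orbit being dense). Generically this is a dimension count: coisotropy of $K.x$ is equivalent to $\dim_{\mathbb{C}}B.x=\dim_{\mathbb{C}}M$, i.e. complexity zero, and I would supply this from the structure theory of spherical varieties developed in the Appendix. Two further technical points require care: propagating coisotropy from the generic orbit to every orbit, which rests on the fact that the (finitely many) $G$-orbits of a spherical embedding are themselves spherical and inherit coisotropy in the ambient $M$; and the global upgrade from a connected component of a fiber to the whole fiber, where compactness of $M$ and connectedness of moment-map fibers are essential. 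These two points, together with the complexity/coisotropy translation, form the technical core, while the remaining passages are handled by the K\"ahler dictionary above.
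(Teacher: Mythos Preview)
Your proposal has a genuine gap in the direction $(1)\Rightarrow(2)$. You assert that because every $G$-orbit in a spherical embedding is itself spherical, ``this coisotropy holds at every $x$.'' But sphericality of a lower-dimensional $G$-orbit $G.x$ only yields that the generic $K$-orbit inside $G.x$ is coisotropic \emph{in the induced K\"ahler structure on $G.x$}, not in $M$. In fact the coisotropy condition $(T_xK.x)^{\perp_\omega}\subseteq T_xK.x$ in $M$ forces $\dim K.x\geq\tfrac12\dim M$, and this fails on small orbits. A concrete counterexample already appears in the paper: for bosons, the orbit $K.[e_1\vee e_1]$ is the closed highest-weight orbit, hence complex and \emph{symplectic}; its $\omega$-orthogonal complement in $T_{[v]}M$ is the complementary complex subspace, which is disjoint from $T_{[v]}K.[v]$. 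So that orbit is not coisotropic in $M$, and the Corollary you invoke does not apply there. Your argument therefore does not reach the fiber containment $\mu^{-1}(\mu(x))\subseteq K.x$ at such points.

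The paper's proof of $(1)\Rightarrow(2)$ takes a different route that sidesteps this obstruction. Rather than testing coisotropy orbit by orbit, it fixes a value $l=\mu(x)$, forms the Marsden--Weinstein product $M\times\Omega_l$ with moment map $\mu_M-\mathrm{id}$, and reduces the statement ``$\mu_M^{-1}(l)$ is a single $\mathrm{Stab}(l)$-orbit'' to ``$\mu_{M\times\Omega_l}^{-1}(0)/K$ is a point.'' Kirwan's lemma (in its separation form, not merely connectedness of fibers) shows this holds as soon as the diagonal $K^{\mathbb C}$-action on $M\times\Omega_l$ is almost homogeneous, and the spherical hypothesis supplies exactly this: a Borel subgroup of any parabolic isotropy $\mathrm{Stab}(k)\subset K^{\mathbb C}$ already has an open orbit in $M$, so sweeping over $k\in\Omega_l$ produces an open dense diagonal $K^{\mathbb C}$-orbit in $M\times\Omega_l$. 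This argument treats all points $x$ uniformly and never needs coisotropy of the individual $K$-orbit through $x$. Your $(2)\Rightarrow(1)$ sketch is more defensible (and the paper does not treat that direction), but the $(1)\Rightarrow(2)$ argument as written does not go through.
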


Let us elucidate again that $2.$ states exactly that the moment map separates
all $K$-orbits, i.e., from $\mu(x)=\mu(y)$ it follows $y\in K.x$. In this
situation the local unitary equivalence of states $x$ and $y$ can be
effectively checked by establishing whether $\mu(x)$ nad $\mu(y)$ lie on the
same coadjoint orbit of $K$. This in turn can be easily seen by finding
points where coadjoint orbits through $\mu(x)$ and $\mu(y)$ cross the dual to
the maximal commutative subalgebra of $\mathfrak{k}$ (see \cite{sawicki11a}),
or in simple words by diagonalizing $\mu(x)$ and $\mu(y)$ by the coadjoint
action of $K$ and comparing the spectra which should coincide if both
$\mu(x)$ and $\mu(y)$ belong to the same coadjoint orbit.

The Brion's theorem is one of the basic tools we use in our reasoning, hence we
believe it is expedient to sketch its proof.

For the purpose of this paper it is important to understand the proof of
Brion's theorem only in one direction, namely that $1.$ implies $2$. Following
\cite{huckleberry90} there are two crucial ingredients in this proof. The first
one is the so called Marsden-Weinstein reduction which we now briefly describe.

Let $(M,\omega_{M})$ and $(N,\omega_{N})$ be two symplectic manifolds
equipped with a Hamiltonian action of a compact connected Lie group $K$. Let
$\mu_{M}:M\rightarrow\mathfrak{k}^{\ast}$ and
$\mu_{N}:N\rightarrow\mathfrak{k}^{\ast}$ be the corresponding moment maps.
Notice first that $(M\times N,\omega)$, where
\begin{gather}
\omega=\pi_{M}^{\ast}\omega-\pi_{N}^{\ast}\omega,\label{eq:sympdef}
\end{gather}
 and $\pi_{M}:M\times N\rightarrow M$, $\pi_{N}:M\times N\rightarrow N$
are natural projections, is a symplectic manifold. Indeed, the $2$-form
$\omega$ is closed since $d$ commutes with $\pi_{N,M}^{\ast}$. It is also
nondegenerate since assuming to the contrary that there is a vector
$u=(u_{1},u_{2})$ such that for any $v=(v_{1},v_{2})$, $\omega(u,v)=0$,
implies that
\begin{gather*}
\omega(u,v)=\omega_{M}(u_{1},v_{1})-\omega_{N}(u_{2},v_{2})=0.
\end{gather*}
 for any $v_{1}$ and $v_{2}$. Choosing $v_{1}=u_{1}$ and using
nondegeneracy of $\omega_{N}$ we get $u_{2}=0$ and hence by nondegeneracy of
$\omega_{1}$ we have $u_{1}=0$. It is easy to see now that the diagonal $G$
action on $M\times N$ is Hamiltonian with the moment map $\mu_{M\times
N}:M\times N\rightarrow\mathfrak{k}$ given by
\begin{gather*}
\mu_{M\times N}=\mu_{M}-\mu_{N}.
\end{gather*}

Now let us apply this construction to $N=\Omega_{l}$ -- the coadjoint orbit
of $K$ through $l\in\mathfrak{k}$. Of course, in this particular setting
\begin{gather*}
\mu_{M\times\Omega_{l}}(x,k)=\mu_{M}-k,\,\,\,\, k\in\Omega_{l},
\end{gather*}
 and the fiber of $\mu_{M\times\Omega_{l}}^{-1}(0)$ is given by
\begin{gather*}
\mu_{M\times\Omega_{l}}^{-1}(0)=\{(x,k):\,\mu_{M}(x)=k,\,\,\,
k\in\Omega_{l}\}.
\end{gather*}
 It is easy to see now that
\begin{gather*}
\mu_{M\times\Omega_{l}}^{-1}(0)/K\cong\mu_{M}^{-1}(l)/\mathrm{Stab}(l),
\end{gather*}
 where $\mathrm{Stab}(l)=\{g\in K:\,\mathrm{Ad}_{g}^{\ast}l=l\}$.

The following lemma is the second key ingredient

\begin{lemma}(Kirwan \cite{kirwan84}) Let $K$ be a connected compact
Lie group acting on connected compact K\"{a}hler manifold $(M,\omega)$ by a
Hamiltonian action and let $\mu:M\rightarrow\mathfrak{k}^{\ast}$ be the
moment map. Assume that $x,y\in\mu^{-1}(0)$ and $x\notin K.y$. Then there
exists a $K^{\mathbb{C}}$-invariant open disjoint neighborhoods of $x$ and
$y$ in $M$. \end{lemma}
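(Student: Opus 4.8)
The plan is to separate $x$ and $y$ not directly in $M$ but after passing to the complexified orbits and a Hausdorff quotient. The decisive point is that a zero of the moment map is a \emph{minimal vector} in the sense of Kempf and Ness, so its $G$-orbit is as small as possible. Concretely, I would first establish two auxiliary facts: for every $z\in\mu^{-1}(0)$ the orbit $G.z$ is closed in the set $M^{\mathrm{ss}}$ of semistable points (those whose $G$-orbit closure meets $\mu^{-1}(0)$), and
\[
G.z\cap\mu^{-1}(0)=K.z .
\]
Granting these, the lemma becomes almost formal: if $G.x$ and $G.y$ met, their common orbit would intersect $\mu^{-1}(0)$ in a single $K$-orbit, forcing $x\in K.y$ against the hypothesis; hence $G.x$ and $G.y$ are disjoint closed $G$-orbits, and it remains only to separate them $G$-invariantly.

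To establish these Kempf--Ness facts I would fix $z\in\mu^{-1}(0)$, lift it to a vector $\tilde z\in\mathcal H\setminus\{0\}$, and study the function $\xi\mapsto\log\|e^{i\xi}\tilde z\|^{2}$ on $\mathfrak k$, more precisely its descent to the symmetric space $G/K$. A direct computation from formula~(\ref{generalmomentmap}) shows that its differential at the base point, in the direction $i\xi\in\mathfrak p$, is proportional to $\mu_{\xi}(z)=\langle\mu(z),\xi\rangle$, so the condition $\mu(z)=0$ says exactly that $\tilde z$ is a critical point of the norm along its orbit; convexity of this function along geodesics of $G/K$ then upgrades the critical point to a global minimum. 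Minimality of the norm along $G.\tilde z$ is precisely what forces $G.z$ to be closed and its intersection with the zero level of $\mu$ to reduce to the single compact $K$-orbit $K.z$.

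With the orbits separated abstractly, I would realize the separation topologically through the symplectic quotient. The negative gradient flow of $f=\|\mu\|^{2}$, which converges because $f$ is real-analytic (\L ojasiewicz), retracts $M^{\mathrm{ss}}$ $K$-equivariantly onto $\mu^{-1}(0)$ and descends to a continuous $G$-invariant map
\[
\pi:M^{\mathrm{ss}}\longrightarrow\mu^{-1}(0)/K ,
\]
whose target is compact Hausdorff and whose restriction to $\mu^{-1}(0)$ is the quotient map. By the previous step $\pi(x)=[x]\neq[y]=\pi(y)$, so I can separate $[x]$ and $[y]$ by disjoint open sets in the Hausdorff space $\mu^{-1}(0)/K$; their $\pi$-preimages are disjoint $G$-invariant open subsets of $M^{\mathrm{ss}}$, hence of $M$ (as $M^{\mathrm{ss}}$ is open in $M$), containing $x$ and $y$ respectively. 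Since $G=K^{\mathbb C}$, these are the required $K^{\mathbb C}$-invariant neighborhoods.

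The main obstacle is entirely contained in the middle step: proving that zeros of $\mu$ are minimal vectors with closed $G$-orbits and that the induced quotient $\mu^{-1}(0)/K$ is Hausdorff. This is the analytic heart (the symplectic Kempf--Ness theorem, in the Heinzner--Huckleberry formulation), resting on two inputs — the convexity of the Kempf--Ness functional along geodesics of $G/K$, and the convergence of the gradient flow of $\|\mu\|^{2}$ guaranteed by real-analyticity. Once these are in hand, the construction of the separating neighborhoods is a formal consequence of the Hausdorffness of the quotient.
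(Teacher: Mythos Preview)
The paper does not prove this lemma; it is quoted from Kirwan's monograph \cite{kirwan84} and used as a black-box input to the sketch of Brion's theorem. There is therefore no proof in the paper to compare yours against.

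That said, your outline is the standard Kempf--Ness route (closed $G$-orbits for zeros of $\mu$, uniqueness of the $K$-orbit in $G.z\cap\mu^{-1}(0)$, then separation via the Hausdorff quotient $\mu^{-1}(0)/K$ reached by the gradient flow of $\Vert\mu\Vert^{2}$), and it is sound. One small mismatch: in your second paragraph you argue through a lift $\tilde z\in\mathcal{H}\setminus\{0\}$ and the norm functional, which presupposes the projective setting, whereas the lemma is stated for an arbitrary compact K\"ahler $M$. In that generality the Kempf--Ness function must be defined intrinsically on $G/K$ from a local K\"ahler potential, exactly as in the Heinzner--Huckleberry framework you cite at the end. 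For the paper's applications, where $M=\mathbb{P}(\mathcal{H})$ throughout, your linear argument is entirely adequate; just be aware that the general statement requires the general machinery you allude to rather than the explicit norm computation.
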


An obvious consequence of this lemma is

\begin{corollary}\label{almost_homogenous} Assume that $M$ is
$K^{\mathbb{C}}$-almost homogenous, i.e. there is a point $x\in M$ such that
$K^{\mathbb{C}}.x$ is open dense in $M$. Then $\mu^{-1}(0)$ is either empty
or a single orbit of the $K$-action. \end{corollary}

We are now ready to prove Theorem~\ref{brion result}. In order to do this we
need to show that
\begin{gather*}
\mu_{M}^{-1}(M)/\mathrm{Stab}(l),
\end{gather*}
 is a single point for any $l\in\mu(M)$. But by the Marsden-Weinstein
construction this is equivalent to showing that
$\mu_{M\times\Omega_{l}}^{-1}(0)/K$ is a single point. In order to prove it
is enough, by Corollary~\ref{almost_homogenous}, to show that the diagonal
action of $K^{\mathbb{C}}$ has an open dense orbit in $M\times\Omega_{l}$.
Notice first that the coadjoint orbit is the homogenous space
$\Omega_{l}=K^{\mathbb{C}}/P$ where $P=\mathrm{Stab}(l)$ is a parabolic
subgroup. The isotropy subgroup of any point $k\in\Omega_{l}$ is parabolic
group of the form $\mathrm{Stab}(k)=gPg^{-1}$ where $g.l=k$ and $g\in
K^{\mathbb{C}}$. The orbit of diagonal action of $\mathrm{Stab}(k)\subset
K^{\mathbb{C}}$ through $(x,\, k)\in M\times\Omega_{l}$ is
$(\mathcal{O}_{x},\, k)$. For each $k\in\Omega_{l}$ there is a Borel group
$B\subset\mathrm{Stab}(k)$ and since $M$ is spherical for each
$k\in\Omega_{l}$ there is $x_{k}\in M$ such that $\mathcal{O}_{x}$ is open
(they are denoted by dashed lines in the figure 1). Now since
$K^{\mathbb{C}}$ is acting transitively on $\Omega_{l}$ we get by the above
reasoning that $\bigcup_{k\in\Omega_{l}}\mathcal{O}_{x_{k}}$ is open dense
orbit of $K^{\mathbb{C}}$ diagonal action on $M\times\Omega_{l}$. Hence by
Corollary~\ref{almost_homogenous} the quotient
$\mu_{M\times\Omega_{l}}^{-1}(0)/K$ is a single point.
\begin{figure}[H]
\begin{center}
\includegraphics[scale=0.4]{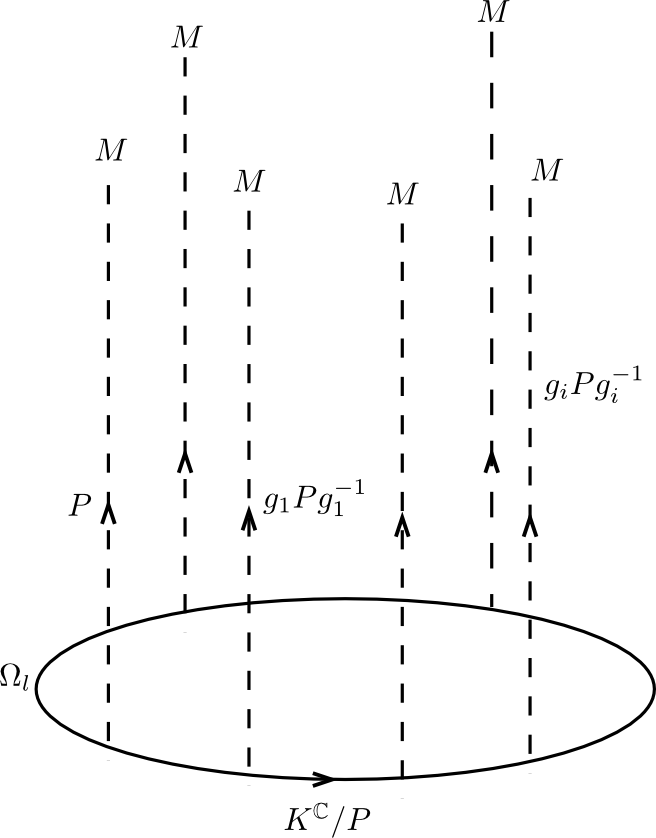}
\caption{The open dense orbit $\bigcup_{k\in\Omega_{l}}\mathcal{O}_{x_{k}}$
of $B$ in $M\times\Omega_{l}$}
\label{fig:fig1}
\end{center}
\end{figure}
In order to emphasize the fact that it is not enough to assume that $M$ is
$K^{\mathbb{C}}$-almost homogenous but we really need the spherical property
in Brion's theorem let us consider the following example

\begin{example}Let $\mathcal{H}_{1}=\mathbb{C}^{2}$, and $\mathcal{H}
=\mathcal{H}_{1}^{\otimes3}$, $K=SU_{2}$,
$K^{\mathbb{C}}=SL_{2}(\mathbb{C})^{\times3}$. Denote by $\{e_{1},\,
e_{2}\}\subset\mathcal{H}_{1}$ an orthogonal basis of
$\mathbb{\mathcal{H}}_{1}$. Consider a state
\begin{gather*}
x=\sqrt{\frac{2}{3}}e_{1}\otimes e_{1}\otimes
e_{1}+\frac{1}{\sqrt{3}}e_{2}\otimes e_{2}\otimes e_{2}.
\end{gather*}

It is the matter of direct calculations to show that $K^{\mathbb{C}}.x$ is
open and dense in $\mathbb{P}(\mathcal{H})$ so $\mathbb{P}(\mathcal{H})$ is
$K^{\mathbb{C}}$-almost homogenous. But from example \ref{ex1} we know that
$\mu$ does not separate all $K$ - orbits.

\end{example}

\noindent It should be also mentioned that in great generality it is known
that every $\mu$-fiber is connected. In the case at hand, due to the fact
that the coadjoint orbits $K.\mu(x)$ are simply-connected, this follows
directly from Brion's theorem. Notice finally that combining Proposition
\ref{open_spherical} with Brion's theorem we get

\begin{corollary}\label{spherical_projective}For $M=\mathbb{P}(\mathcal{H})$,
where $\mathcal{H}$ is the Hilbert space for two fermions, bosons or
distinguishable particles, the corresponding moment map
$\mu:M\rightarrow\mathfrak{k}^{\ast}$ separates all $K$-orbits\end{corollary}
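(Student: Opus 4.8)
The plan is to obtain the statement as an immediate consequence of the two principal results already assembled above, namely Proposition~\ref{open_spherical} and Brion's Theorem~\ref{brion result}. The whole point is that all the real work has been done: Proposition~\ref{open_spherical} establishes the spherical property, and Brion's theorem converts that property into the separation of orbits.

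First I would check that the standing hypotheses of Theorem~\ref{brion result} are met in each of the three cases. In the bosonic and fermionic cases $K=\mathrm{SU}(N)$ is a connected compact Lie group with complexification $G=K^{\mathbb{C}}=\mathrm{SL}_N(\mathbb{C})$; in the case of distinguishable particles one takes $K_D=K\times K$ with complexification $G_D=K_D^{\mathbb{C}}=G\times G$. In all three cases $M=\mathbb{P}(\mathcal{H})$ is a connected compact manifold which, when equipped with the Fubini--Study form $\omega_{FS}$ associated to the (essentially unique) $K$-invariant Hermitian structure singled out in Section~\ref{sec:moment}, is K\"ahler; moreover the construction of that section exhibits the $K$-action as Hamiltonian, with moment map $\mu$ given explicitly by (\ref{momentmatrices}). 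Thus every hypothesis of Brion's theorem holds.

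Next I would invoke Proposition~\ref{open_spherical}, which asserts precisely that $M=\mathbb{P}(\mathcal{H})$ is a spherical embedding of the open dense $G$-orbit---$G.s_N$, $G.a_N$, or $G_D.d_N$ in the bosonic, fermionic, or distinguishable case respectively. This is exactly condition~$1.$ of Theorem~\ref{brion result}. By the equivalence $1.\Leftrightarrow 2.$ we conclude that for every $x\in M$ the $\mu$-fiber $\mu^{-1}(\mu(x))$ is contained in the orbit $K.x$.

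Finally I would unwind the definition of separation. As already noted immediately after the statement of Theorem~\ref{brion result}, condition~$2.$ is precisely the assertion that $\mu$ separates $K$-orbits: if $x,y\in M$ satisfy $\mu(x)=\mu(y)$, then $y\in\mu^{-1}(\mu(x))\subseteq K.x$, so $y$ lies on the same $K$-orbit as $x$. This is the claim of the corollary. Since the argument is nothing more than a direct combination of two already-established results together with the reformulation of condition~$2.$, there is no substantive obstacle; the only point requiring any care is the parallel bookkeeping of the three cases, and that was already absorbed into Proposition~\ref{open_spherical}.
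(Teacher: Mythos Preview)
Your proposal is correct and follows exactly the paper's own approach: the paper states this corollary immediately after Brion's theorem with the single remark that it is obtained by ``combining Proposition~\ref{open_spherical} with Brion's theorem,'' and your write-up simply makes the hypothesis check and the unwinding of condition~2.\ explicit.
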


\section{Generic $\mu$ - fibers}
\label{sec:fibers}

Here, in all cases of interest we let $\Sigma$ be the closure of the
$H$-orbit $H.[x_{N}]$, where $x_N=a_N,\, s_N,\, d_N$ in the projective space
$M=\mathbb{P}(\mathcal{H})$. This is itself a complex and therefore
symplectic linear projective subspace of $M$ which is invariant with respect
to the action of the compact torus $T$. Therefore we have its moment map
$\mu_{T}:\Sigma\to\mathfrak{t}^{*}$.

\subsubsection*{Computation of $\mathbf{\mu}$ along $\Sigma$}

Let us begin by computing the restriction $\mu\vert_{\Sigma}$. For this it is
notationally simpler to compute $\mu$ as it is actually defined in the vector
space $\mathcal{H}$. Observe that for any $v\in\mathcal{H}$ if
$[v]\in\Sigma$, then $D_v=C_{v}^{\dagger}C_{v}$ is a non-negative diagonal
matrix with some non-zero entry.

\begin{proposition} If $[v]\in\Sigma$, then
\[
\mu_{\xi}([v])
=\frac{i}{2}\frac{\mathrm{Tr}{(C_{v}^{\dagger}C_{v}\xi)}}
{\mathrm{Tr}{(C_{v}^{\dagger}C_{v})}}
=\frac{i}{2}\frac{\mathrm{Tr}{(D_{v}\xi)}}{\mathrm{Tr}{(D_{v})}}\,.
\]
\end{proposition}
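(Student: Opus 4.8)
The plan is to combine the general moment-map formula (\ref{momentmatrices}) with a direct observation about the structure of matrices $C_v$ for $[v]\in\Sigma$. The only genuine content of this proposition is the second equality, namely that $C_v^\dagger C_v$ is diagonal when $[v]$ lies in the closure of the torus orbit $H.[x_N]$; once that is established, the first equality is just formula (\ref{momentmatrices}), already available from Corollary~\ref{momentmatrices}'s derivation, and the claim reduces to writing $D_v:=C_v^\dagger C_v$.

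First I would describe the torus orbit explicitly. The complex torus $H$ consists of diagonal matrices $\mathrm{diag}(t_1,\ldots,t_N)\in\mathrm{SL}_N(\mathbb{C})$, acting by $C\mapsto hCh^t$ in the bosonic and fermionic cases (and by the corresponding two-sided diagonal action for distinguishable particles). Applying this to the base matrices $C_{s_N}=2I$, $C_{a_N}=J$, $C_{d_N}=I$ from the end of Section~\ref{sec:2}, the entries are simply rescaled by products $t_it_j$ of the diagonal parameters. Hence every matrix $C_v$ with $[v]\in H.[x_N]$ has exactly the same support (the same pattern of nonzero positions) as the base matrix: in the symmetric case a nonzero diagonal, in the antisymmetric case the $2\times2$ skew blocks along the diagonal, in the distinguishable case a nonzero diagonal. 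This support is preserved under taking the closure, so every $C_v$ with $[v]\in\Sigma$ has this same sparse form.

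The key step is then the elementary linear-algebra verification that for each of these three support patterns the product $C_v^\dagger C_v$ is diagonal, with nonnegative real diagonal entries, and nonzero since $v\neq0$. For the symmetric/distinguishable case $C_v=\mathrm{diag}(c_1,\ldots,c_N)$ gives $C_v^\dagger C_v=\mathrm{diag}(|c_1|^2,\ldots,|c_N|^2)$ at once. For the antisymmetric case $C_v$ is block-diagonal with blocks $\left(\begin{smallmatrix}0&c_k\\-c_k&0\end{smallmatrix}\right)$, and a one-line computation shows each such block satisfies $\left(\begin{smallmatrix}0&c_k\\-c_k&0\end{smallmatrix}\right)^\dagger\left(\begin{smallmatrix}0&c_k\\-c_k&0\end{smallmatrix}\right)=|c_k|^2 I_2$, so again the product is diagonal (indeed scalar on each block). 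This is exactly the assertion $D_v=C_v^\dagger C_v$ is a non-negative diagonal matrix with some non-zero entry, already anticipated in the sentence preceding the proposition.

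I do not expect a serious obstacle here; the statement is essentially bookkeeping. The one point deserving care is the passage to the closure $\Sigma=\cl(H.[x_N])$: a priori one must check that limit points in $\mathbb{P}(\mathcal{H})$ still have diagonal $C_v^\dagger C_v$, i.e. that the sparse support pattern is closed. Since the set of matrices with a prescribed zero-pattern is a linear subspace (hence closed), and projectivization preserves this, the diagonality of $C_v^\dagger C_v$ persists in the limit; the nondegeneracy (some non-zero entry) is automatic because $[v]$ is a well-defined point of projective space, so $C_v\neq0$. With that remark the proof is complete by substituting $D_v$ for $C_v^\dagger C_v$ in (\ref{momentmatrices}).
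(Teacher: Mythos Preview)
Your proposal is correct and matches the paper's approach: the paper gives no formal proof of this proposition, treating it as an immediate consequence of the moment-map formula (\ref{momentmatrices}) together with the observation stated just before the proposition that $D_v=C_v^\dagger C_v$ is a non-negative diagonal matrix when $[v]\in\Sigma$. Your argument simply unpacks that observation explicitly (describing the support pattern of $C_v$ on the torus orbit and its closure, and verifying diagonality of $C_v^\dagger C_v$ case by case), which is more detail than the paper provides but follows the same line.
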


At this point it will be expedient to abandon the parallel treatment of
diagonal and non-diagonal action of $K$ and postpone the latter momentarily.
Although the line of reasoning remains the same, the moment map is slightly
different due to the fact that the Lie algebra
$\mathfrak{k}=\mathfrak{su}_{N}$ for the diagonal actions and
$\mathfrak{k}=\mathfrak{su}_{N}\oplus\mathfrak{su}_{N}$ in the general
two-partite case and it is better to make the concrete calculations
separately.

To treat the cases of diagonal action we decompose the Lie algebra
$\mathfrak{su}_{N}$ as $\mathfrak{t}\oplus\mathfrak{t}^{\perp}$ with respect
to a (unique up to a constant factor) $\mathrm{Ad}$-invariant trace-form. The
elements of $\mathfrak{t}$ are diagonal of the form
$i\mathrm{diag}(\varphi_{1},\ldots,\varphi_{N})$ with $\sum\varphi_{j}=0$. A
basis of $\mathfrak{t}^{\perp}$ is given as follows: For $k<l$ let $E_{kl}$
be the matrix which is one in the i-th row and j-th column, i.e., the matrix
corresponding to the tensor $e_{k}\otimes e_{l}$. The basis of
$\mathfrak{t}^{\perp}$ is then given by the matrices $E_{kl}-E_{kl}^{t}$ and
$i(E_{kl}+E_{kl}^{t})$ with $k<l$. Using this basis one immediately observes
the following fact (see \cite{sawicki11} for detailed calculations)

\begin{proposition} For $[v]\in\Sigma$ it follows that $\mu_{\xi}([v])=0$
for all $\xi\in\mathfrak{t}^{\perp}$.

\end{proposition}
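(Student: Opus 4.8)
The plan is to use the formula for the moment map restricted to $\Sigma$ from the preceding proposition,
\[
\mu_{\xi}([v]) = \frac{i}{2}\frac{\mathrm{Tr}(D_v\,\xi)}{\mathrm{Tr}(D_v)}, \qquad D_v = C_v^{\dagger}C_v,
\]
together with the fact, already established above, that $D_v$ is a non-negative \emph{diagonal} matrix for $[v]\in\Sigma$. Since this expression is linear in $\xi$ (the trace being linear and the denominator $\mathrm{Tr}(D_v)$ being independent of $\xi$), it suffices to verify that $\mathrm{Tr}(D_v\,\xi)=0$ on the displayed basis of $\mathfrak{t}^{\perp}$.

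First I would note that each basis element of $\mathfrak{t}^{\perp}$, namely $E_{kl}-E_{kl}^{t}$ and $i(E_{kl}+E_{kl}^{t})$ with $k<l$, is supported entirely off the diagonal: since $E_{kl}^{t}=E_{lk}$ and $k\neq l$, every diagonal entry of such a $\xi$ vanishes. Writing $D_v=\mathrm{diag}(d_1,\ldots,d_N)$, the diagonal entries of the product are then $(D_v\,\xi)_{ii}=d_i\,\xi_{ii}=0$, whence $\mathrm{Tr}(D_v\,\xi)=\sum_i d_i\,\xi_{ii}=0$. By linearity this holds for every $\xi\in\mathfrak{t}^{\perp}$, and therefore $\mu_{\xi}([v])=0$, as claimed.

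There is essentially no obstacle in this argument; its entire content is carried by the previously observed fact that $D_v$ is diagonal along $\Sigma$, which is exactly what forces its trace-pairing against the off-diagonal directions spanning $\mathfrak{t}^{\perp}$ to vanish. Conceptually this says that the restricted moment map $\mu\vert_{\Sigma}$ takes values in the annihilator of $\mathfrak{t}^{\perp}$, i.e.\ in the dual of the Cartan subalgebra $\mathfrak{t}$, which is the geometrically meaningful point underlying the computation and the reason this proposition is the natural first step toward describing $\mu\vert_{\Sigma}$ as a map into $\mathfrak{t}^{*}$.
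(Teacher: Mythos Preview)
Your proof is correct and is exactly the computation the paper alludes to when it says ``using this basis one immediately observes the following fact'': since $D_v$ is diagonal on $\Sigma$ and each basis element $E_{kl}-E_{kl}^{t}$, $i(E_{kl}+E_{kl}^{t})$ of $\mathfrak{t}^{\perp}$ has vanishing diagonal entries, the trace pairing $\mathrm{Tr}(D_v\,\xi)$ vanishes. The paper supplies no additional argument beyond this observation (referring to \cite{sawicki11} for the explicit calculation), so your approach coincides with the intended one.
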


If $[v]\in\Sigma$, for the description of $\mu_{\xi}$ with
$\xi\in\mathfrak{t}$ it is useful to think about diagonal entries of $D_{v}$
as defining a probability measure $P([v])=(p_{1}([v]),\ldots,p_{N}([v]))$,
where
\begin{gather*}
p_{j}([v])=\frac{(D_{v})_{jj}}{\sum_{j=1}^{N}(D_{v})_{jj}},\quad
\sum_{j=1}^{N}p_{j}([v])=1,\,\,\, p_{i}([v])\geq0.
\end{gather*}
\begin{proposition} If $[v]\in\Sigma$ and $\xi=i\mathrm{Diag}(\varphi_{1},\ldots,\varphi_{N})$,
then
\[
\mu_{\xi}([v])=\sum p_{j}([v])\varphi_{j}\,.
\]
\end{proposition}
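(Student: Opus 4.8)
The plan is to simply combine the two preceding propositions. The first gives the general formula
\[
\mu_{\xi}([v])=\frac{i}{2}\frac{\mathrm{Tr}(D_{v}\xi)}{\mathrm{Tr}(D_{v})}
\]
valid for every $[v]\in\Sigma$, where $D_{v}=C_{v}^{\dagger}C_{v}$ is the non-negative diagonal matrix attached to $v$. The claimed identity is then a direct evaluation of this trace for the specific choice $\xi=i\,\mathrm{Diag}(\varphi_{1},\ldots,\varphi_{N})\in\mathfrak{t}$. So the first (and essentially only) step is to compute $\mathrm{Tr}(D_{v}\xi)$ when both factors are diagonal.

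Concretely, write $D_{v}=\mathrm{Diag}\big((D_{v})_{11},\ldots,(D_{v})_{NN}\big)$. Since the product of two diagonal matrices is diagonal with entrywise-multiplied diagonal, we get
\[
\mathrm{Tr}(D_{v}\xi)=\sum_{j=1}^{N}(D_{v})_{jj}\,(i\varphi_{j})
=i\sum_{j=1}^{N}(D_{v})_{jj}\varphi_{j}.
\]
Substituting into the formula from the first proposition, the two factors of $i$ combine to $i^{2}=-1$ and the factor $\tfrac12$ is cancelled by the other factor of $i$ supplied by $\xi$; more precisely,
\[
\mu_{\xi}([v])=\frac{i}{2}\cdot\frac{i\sum_{j}(D_{v})_{jj}\varphi_{j}}{\mathrm{Tr}(D_{v})}
=-\frac{1}{2}\cdot\frac{\sum_{j}(D_{v})_{jj}\varphi_{j}}{\sum_{j}(D_{v})_{jj}}.
\]

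Finally I would recognize the quotient as the probability weights already introduced: by definition $p_{j}([v])=(D_{v})_{jj}/\sum_{k}(D_{v})_{kk}$, so
\[
\mu_{\xi}([v])=-\tfrac{1}{2}\sum_{j}p_{j}([v])\varphi_{j},
\]
which is the asserted expression $\sum_{j}p_{j}([v])\varphi_{j}$ up to the normalization constant implicit in the choice of $\mathrm{Ad}$-invariant form and the convention for $\xi$ (the paper absorbs the factor $-\tfrac12$ into its identification of $\mathfrak{t}$ with $\mathfrak{t}^{*}$). There is no real obstacle here: the computation is purely formal once diagonality of $D_{v}$ is known, which is exactly the content of the observation preceding these propositions that $[v]\in\Sigma$ forces $D_{v}$ to be diagonal. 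The only point demanding minor care is bookkeeping of the scalar factors and the invariant-form normalization used to embed $\mathfrak{t}^{*}$ into $\mathfrak{t}$, so that the stray $-\tfrac12$ disappears and the moment map reads as the plain expectation value $\sum_{j}p_{j}\varphi_{j}$.
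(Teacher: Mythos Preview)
Your approach---a direct trace evaluation using that both $D_v$ and $\xi$ are diagonal---is correct and is precisely what the paper has in mind; the proposition is stated without proof because it reduces to exactly this one-line computation. One minor correction: the stray $-\tfrac12$ cannot be absorbed into the identification $\mathfrak{t}^*\cong\mathfrak{t}$, since $\mu_\xi$ is a scalar function on $M$ determined by $d\mu_\xi=\iota_{\hat\xi}\omega$ independently of any pairing on $\mathfrak{k}$; the discrepancy instead reflects the paper's own loose bookkeeping of constants between its moment-map formulas (in particular a factor of $2$ dropped in passing from the two-sided action to the matrix formula, and an overall sign convention for $\omega$), not the choice of invariant form.
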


We may thus regard $\mu_{\xi}([v]))$ as the \emph{expected value} of $\xi$ at
$[v]$.

\subsubsection*{The restricted moment map}

It is now convenient to identify $\mathfrak{k}^{*}$ with $\mathfrak{k}$. We
do so by using the invariant bilinear form given by the trace and regard
moment maps as having values in Lie algebras as opposed to their duals. To
simplify the notation we regard a diagonal matrix
$\xi=i\mathrm{diag}(\varphi_{1},\ldots,\varphi_{N})\in\mathfrak{t}$ as a
vector $(\varphi_{1},\ldots,\varphi_{N})$. The following is then a
translation of the above results.

\begin{proposition} The restriction of the moment map $\mu:M\to\mathfrak{k}$
to $\Sigma$ is the $T$-moment map $\mu^{T}:\Sigma\to\mathfrak{t}$. This
restricted moment map is given by the translated probability vector
\[
\mu([v])=(p_{1}([v]),\ldots,p_{N}([v]))-\frac{1}{N}(1,\ldots,1)\,.
\]
 Its fiber at $[v]\in\Sigma$ is the orbit $T.[v]$. Its image is
the translated probability polyhedron
\[
P=\left\{ \left(p_{1}-\frac{1}{N},\, p_{2}-\frac{1}{N},\ldots,\, p_{N}-\frac{1}{N}\right):\, p_{i}\ge0\ \text{for all}\ i\,\,\mbox{and},\,\sum p_{i}=1\right\} \,.
\]
\end{proposition}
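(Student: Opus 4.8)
The plan is to assemble the three computations of $\mu|_\Sigma$ carried out above into a single statement by passing to the identification $\mathfrak{k}^*\cong\mathfrak{k}$ supplied by the invariant trace form, and then to read off the fibre and the image from the toric geometry of $\Sigma$. First I would fix the identification so that a diagonal element $\xi=i\,\mathrm{diag}(\varphi_1,\dots,\varphi_N)\in\mathfrak{t}$ corresponds to the vector $(\varphi_1,\dots,\varphi_N)$ with the Euclidean pairing. The proposition stating that $\mu_\xi([v])=0$ for every $\xi\in\mathfrak{t}^{\perp}$ says precisely that the functional $\mu([v])\in\mathfrak{k}^*$ annihilates $\mathfrak{t}^{\perp}$; since $\mathfrak{su}_N=\mathfrak{t}\oplus\mathfrak{t}^{\perp}$ is an orthogonal decomposition for the (nondegenerate) trace form, this forces the representative of $\mu([v])$ to lie in $\mathfrak{t}$. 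Consequently $\mu|_\Sigma$ already takes values in $\mathfrak{t}$, and because composing the $K$-moment map with the projection $\mathfrak{k}^*\to\mathfrak{t}^*$ yields the moment map for the subtorus $T\subset K$ (the general principle recalled in Section~\ref{subsec:momentprojective}, now with $K$ in the role of the larger group), the restriction $\mu|_\Sigma$ is literally the $T$-moment map $\mu^T$. This settles the first assertion.

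For the explicit formula I would invoke the identity $\mu_\xi([v])=\sum_j p_j([v])\varphi_j$ for $\xi=i\,\mathrm{diag}(\varphi)\in\mathfrak t$. The probability vector $(p_1([v]),\dots,p_N([v]))$ has coordinate sum $1$ and so does not itself lie in the sum-zero subspace $\mathfrak t$; its orthogonal projection onto $\mathfrak t$ is obtained by subtracting $\tfrac1N(1,\dots,1)$. Since every $\xi\in\mathfrak t$ satisfies $\sum_j\varphi_j=0$, this subtraction does not change the pairing, i.e.\ $\sum_j p_j\varphi_j=\sum_j\big(p_j-\tfrac1N\big)\varphi_j$, so the unique element of $\mathfrak t$ representing $\mu([v])$ is exactly $(p_1-\tfrac1N,\dots,p_N-\tfrac1N)$. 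The only point to watch is consistency of normalization: the trace form is negative-definite on $\mathfrak{su}_N$, so a literal $\mathrm{Tr}(AB)$ identification would introduce a sign, which the coefficient/vector convention fixed above absorbs.

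The fibre and the image are then pure toric geometry. I would first record that $\Sigma$ is genuinely a linear projective subspace, as already observed: writing the $H$-orbit of $[x_N]$ in matrix form, the diagonal matrices rescale independently the entries of the diagonal basis vectors $e_j\vee e_j$ (resp.\ the blocks $e_{2k-1}\wedge e_{2k}$), so $\overline{H.[x_N]}=\mathbb P(\mathrm{span}\{e_j\vee e_j\})$ in the bosonic case, and the analogous span in the fermionic case, on which $H.[x_N]$ is the open dense torus orbit. Thus $\Sigma$ is a compact K\"ahler toric variety for $H=T^{\mathbb C}$, and since the Borel subgroup of the torus $H$ is $H$ itself, $\Sigma$ is $H$-spherical. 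Applying Brion's Theorem~\ref{brion result} to the $T$-action on $\Sigma$ (equivalently, the standard fact that the moment map of a toric variety separates torus orbits) yields $(\mu^T)^{-1}(\mu^T([v]))=T.[v]$, where the inclusion $\supseteq$ is automatic from $T$-invariance of $\mu^T$ and the inclusion $\subseteq$ is exactly the spherical input.

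Finally, for the image I would use that $[v]\mapsto P([v])$ is surjective onto the probability simplex: in the bosonic case $[v]=[\sum_j c_j\, e_j\vee e_j]$ gives $p_j=|c_j|^2/\sum_i|c_i|^2$, and every simplex point is realized by $c_j=\sqrt{p_j}$, so the translate by $-\tfrac1N(1,\dots,1)$ is exactly $P$. Conceptually this is the Atiyah--Guillemin--Sternberg convexity theorem: the image of $\mu^T$ is the convex hull of the images of the $T$-fixed points $[e_j\vee e_j]$, which are the vertices $\delta_j-\tfrac1N(1,\dots,1)$ of the translated simplex $P$. I expect the genuine content, as opposed to bookkeeping, to be the fibre statement: asserting that the fibre equals $T.[v]$ and nothing larger is precisely where the spherical/Brion input is indispensable, whereas parts one, two and four reduce to the identification $\mathfrak{k}^*\cong\mathfrak{k}$ and an elementary surjectivity computation.
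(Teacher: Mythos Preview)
Your proposal is correct and considerably more detailed than what the paper offers: the paper introduces this proposition with the sentence ``The following is then a translation of the above results'' and gives no further argument, treating all four assertions as immediate reformulations of the two preceding propositions (vanishing on $\mathfrak{t}^{\perp}$ and the expected-value formula on $\mathfrak{t}$) once $\mathfrak{k}^{*}$ has been identified with $\mathfrak{k}$ via the trace form.

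The one substantive difference is your treatment of the fibre. You invoke Brion's theorem for the $T$-action on the toric variety $\Sigma$, which is perfectly valid. The paper, by contrast, regards the fibre statement as a direct consequence of the explicit formula: since $\mu([v])$ depends only on the moduli $\vert c_j\vert^{2}$ of the diagonal coefficients, and since the compact torus $T$ acts on $\Sigma$ by independent phase rotations of these coefficients (with the determinant constraint absorbed by the projective scaling), two points of $\Sigma$ have the same $\mu$-image if and only if they lie in the same $T$-orbit. Your route via Brion is cleaner conceptually and avoids the small case check that $T$ really does act transitively on each level set of $P([v])$; the paper's implicit route is more elementary but requires that verification. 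Both are fine.
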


We say that a point $[v]\in\Sigma$ is \emph{generic} if
$p_{1}([v])>\ldots>p_{N}([v])>0$ and let $\Sigma_{gen}$ denote the set of
such points. The following is a direct consequence of the definitions.

\begin{proposition} The restricted moment map has constant rank $n$
on $\Sigma_{gen}$ and maps $\Sigma_{gen}$ onto the interior of the translated
probability polyhedron $P$.
\end{proposition}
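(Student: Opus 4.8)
The plan is to verify the two assertions — constant rank $n$ and surjectivity onto the interior of $P$ — directly from the explicit formula already established for the restricted moment map. By the previous proposition the restricted moment map is
\[
\mu([v])=\left(p_{1}([v])-\tfrac{1}{N},\ldots,p_{N}([v])-\tfrac{1}{N}\right),
\]
so it differs from the probability assignment $[v]\mapsto(p_{1}([v]),\ldots,p_{N}([v]))$ only by a fixed translation, and hence has the same differential and the same image up to the shift by $-\tfrac{1}{N}(1,\ldots,1)$. The target lives in $\mathfrak{t}$, which under the identification with vectors $(\varphi_{1},\ldots,\varphi_{N})$ subject to $\sum\varphi_{j}=0$ has dimension $N-1$. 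Here I must be careful to reconcile the claimed rank $n$ with this: in the bosonic case $n=N$ and the $p_{j}$ sum to one, so the genuinely independent coordinates number $N-1=n-1$; in the fermionic case the pairing structure of $J$ forces the diagonal entries of $D_{v}$ to occur in equal pairs, cutting the effective number of free probabilities to $n$ (where $2n=N$ or $2n+1=N$). So the first thing I would pin down is exactly which $p_{j}$ are free on $\Sigma_{gen}$ in each of the three cases, and confirm that this free count equals the stated $n$.

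With the free coordinates identified, the constant-rank claim reduces to showing that on $\Sigma_{gen}$ the map $[v]\mapsto(p_{1}([v]),\ldots)$ is a submersion onto its image. The natural move is to use the $T$-action: the previous proposition records that the fiber of $\mu$ at $[v]\in\Sigma$ is precisely the torus orbit $T.[v]$, and $\dim T=N-1$. Since $\Sigma=\overline{H.[x_{N}]}$ is the closure of a single $H$-orbit and $H$ is the complex torus, a generic point has $\dim_{\mathbb{R}}\Sigma = 2\cdot(\text{free count})$, with the real torus orbit filling out half the dimensions and the moment map coordinates the other half. The rank of $d\mu$ at $[v]$ is then $\dim\Sigma-\dim(T.[v])$, and the genericity condition $p_{1}([v])>\cdots>p_{N}([v])>0$ guarantees both that the stabilizer in $T$ is as small as possible (so the orbit has full dimension) and that we are in the interior, away from the walls where some $p_{j}=0$ or two coordinates coincide. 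First I would make this dimension count precise in the bosonic case, then adapt it to the fermionic cases where the equal-pairing of eigenvalues shrinks both $T$'s effective action and the number of free $p_{j}$ in tandem, so the rank still comes out to $n$.

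For the image statement, the interior of the translated polyhedron $P$ is exactly $\{(p_{1}-\tfrac{1}{N},\ldots,p_{N}-\tfrac{1}{N}):p_{i}>0,\ \sum p_{i}=1\}$ together with, in the degenerate cases, the relevant equalities among coordinates imposed by $J$. Surjectivity onto this interior is a matter of exhibiting, for any strictly positive decreasing probability vector (respecting the pairing constraints), a representative $[v]\in\Sigma_{gen}$ realizing it: one simply takes $C_{v}$ diagonal (respectively block-diagonal against $J$) with the prescribed squared moduli, which lies in $\overline{H.[x_{N}]}$ by construction and is generic by the strict inequalities. The main obstacle I anticipate is purely bookkeeping rather than conceptual, namely keeping the three cases aligned so that the single symbol $n$ correctly records the rank throughout — in particular making sure that in the odd fermionic case $2n+1=N$ the unpaired coordinate does not spuriously add to the rank. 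Once the free-coordinate count is fixed case by case, both the constant-rank and the surjectivity assertions follow from the explicit formula and the torus-orbit description of the fibers, with no further hard analysis required.
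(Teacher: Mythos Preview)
The paper provides no explicit proof of this proposition; it is stated as an immediate consequence of the preceding explicit formula for $\mu\vert_\Sigma$ and of the identification of the fiber $\mu^{-1}(\mu([v]))\cap\Sigma$ with the torus orbit $T.[v]$. Your plan is exactly the unpacking of that ``direct from the definitions'' step: compute $\dim_{\mathbb{R}}\Sigma$, subtract the fiber dimension $\dim T.[v]$, and check surjectivity onto the open simplex by writing down diagonal (or block-diagonal) representatives. That is the right and essentially the only approach, so in substance you agree with the paper.

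Your caution about the value of the rank is well placed and worth flagging. With $n=N$ in the bosonic case and $2n=N$ (or $2n+1=N$) in the fermionic case, one finds $\dim_{\mathbb{C}}\Sigma=n-1$ in each case, and the generic fiber (as the paper itself records a few lines later) is a torus of real dimension $n-1$; hence the rank of $\mu\vert_{\Sigma_{gen}}$ is $2(n-1)-(n-1)=n-1$, not $n$. The paper's later use of the proposition (in the proof of the slice property) only needs that the rank is constant and that the image is transversal to the coadjoint orbit directions, so the precise value does not affect the argument; but your dimension count is the correct one, and the ``rank $n$'' in the statement appears to be an off-by-one slip. Your bookkeeping concern about the odd fermionic case is also handled correctly: the unpaired basis vector $e_{2n+1}$ contributes a forced $p_{2n+1}=0$, so it does not add to the rank.
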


\subsubsection*{Orbits of points in $\mathbf{\mu(\Sigma)}$}

Here we let $[v]\in\Sigma$ and analyze the restriction $\mu\vert_{K.[v]}$.
Recall that the $K$-action on $\mathfrak{k}$ is the adjoint representation
which in our matrix setup is just the conjugation. Thus the $K$-isotropy
subgroup at a point $\mu([v])$ is just its centralizer in $K$. We may
therefore restrict our attention to the intersection with the standard
fundamental domain of the Weyl-group.
\[
\mathfrak{t}^{+}:=\{(q_{1},\ldots,q_{N});q_{1}\ge\ldots\ge q_{N},\
\sum q_{i}=0\}\,.
\]
For $\xi\in\mathfrak{t}$ the orbit $K.\xi$ is a flag manifold which depends
only on the equalities among the $q_{i}$. If
\[
q_{1}=\ldots=q_{d_{1}}>q_{d_{1}+1}=\ldots=q_{d_{1}+d_{2}}>\ldots>q_{d_{1}+\ldots+d_{k-1}+1}=\ldots=q_{d_{1}+\ldots+d_{k}}\,,
\]
 then $K.\xi$ is the flag manifold denoted by $F(d_{1},\ldots,d_{k})$.
In that case the isotropy group $K_{\xi}$ is the stabilizer of the standard
flag
\[
0<\mathrm{Span}\{e_{1},\ldots.e_{d_{1}}\}<\mathrm{Span}\{e_{1},\ldots,e_{d_{1}+d_{2}}\}<\mathrm{Span}\{e_{1},\ldots,e_{d_{1}+\ldots+d_{k}}\}=V\,.
\]
The following transversality property will be essential for our computations.

\begin{proposition} If $\xi\in\mathfrak{t}$,
then $T_{\xi}(K.\xi)\cap\mathfrak{t}=\{0\}$.
\end{proposition}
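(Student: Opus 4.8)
The plan is to identify the tangent space to the orbit explicitly and then show that it is forced into the orthogonal complement $\mathfrak{t}^{\perp}$. First I would recall that, since the $K$-action on $\mathfrak{k}$ is the adjoint representation, the tangent space to the orbit at $\xi$ is the image of the infinitesimal action:
\[
T_{\xi}(K.\xi)=\left\{[\eta,\xi]:\eta\in\mathfrak{k}\right\}=\mathrm{im}(\mathrm{ad}_{\xi}).
\]
Thus the assertion is equivalent to $[\mathfrak{k},\xi]\cap\mathfrak{t}=\{0\}$, and since the trace form is non-degenerate and $\mathfrak{t}=(\mathfrak{t}^{\perp})^{\perp}$, it suffices to prove the stronger statement $[\mathfrak{k},\xi]\subseteq\mathfrak{t}^{\perp}$.

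The key step is to exploit the $\mathrm{Ad}$-invariance of the trace form $\langle A,B\rangle=\mathrm{Tr}(AB)$ (the very form used above to define $\mathfrak{t}^{\perp}$) together with the fact that $\mathfrak{t}$ is abelian. For arbitrary $\eta\in\mathfrak{k}$ and $\tau\in\mathfrak{t}$ I would compute
\[
\langle[\eta,\xi],\tau\rangle=\langle\eta,[\xi,\tau]\rangle=0,
\]
where the first equality is the invariance (associativity) identity for the trace form, and the second holds because $\xi,\tau\in\mathfrak{t}$ commute. Hence every bracket $[\eta,\xi]$ is orthogonal to all of $\mathfrak{t}$, i.e. lies in $\mathfrak{t}^{\perp}$. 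Since $\mathfrak{t}\cap\mathfrak{t}^{\perp}=\{0\}$ by non-degeneracy, this immediately yields $T_{\xi}(K.\xi)\cap\mathfrak{t}=\{0\}$.

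As a concrete cross-check one can instead use the explicit basis of $\mathfrak{t}^{\perp}$ introduced above: writing $\xi=i\,\mathrm{diag}(\varphi_{1},\ldots,\varphi_{N})$ together with $A_{kl}=E_{kl}-E_{kl}^{t}$ and $B_{kl}=i(E_{kl}+E_{kl}^{t})$, a one-line computation gives $[\xi,A_{kl}]=(\varphi_{k}-\varphi_{l})B_{kl}$ and $[\xi,B_{kl}]=-(\varphi_{k}-\varphi_{l})A_{kl}$, while $\mathrm{ad}_{\xi}$ annihilates $\mathfrak{t}$. This exhibits directly that $\mathrm{ad}_{\xi}$ preserves $\mathfrak{t}^{\perp}$ and kills $\mathfrak{t}$, recovering the same conclusion. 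I do not expect a genuine obstacle here: the proposition is essentially a structural fact about adjoint orbits, and the only point requiring a little care is the identification of $T_{\xi}(K.\xi)$ with $\mathrm{im}(\mathrm{ad}_{\xi})$ and the bookkeeping of signs, after which the invariance argument is immediate.
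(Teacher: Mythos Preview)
Your proof is correct. The paper's argument is essentially your cross-check: it identifies $T_\xi(K.\xi)$ with the set of brackets $[\xi,\xi_1]$ and observes by inspection that when $\xi$ is diagonal the commutator $[\xi,\xi_1]$ has vanishing diagonal entries, so a diagonal bracket must be zero. Your primary route---pairing $[\eta,\xi]$ against an arbitrary $\tau\in\mathfrak{t}$ and using the invariance identity $\langle[\eta,\xi],\tau\rangle=\langle\eta,[\xi,\tau]\rangle$ together with commutativity of $\mathfrak{t}$---is a slightly more structural argument: it never touches matrix entries and would go through verbatim for any compact Lie algebra with its invariant form, not just $\mathfrak{su}_N$. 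The paper's version has the virtue of being one line in the concrete setting at hand; yours makes transparent that the statement is really the general fact $\mathrm{im}(\mathrm{ad}_\xi)\subseteq\mathfrak{t}^\perp$ for $\xi$ in a Cartan subalgebra.
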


\begin{proof}This is direct consequence of the fact that adjoint
$K$ action does not change the spectrum of antihermitian matrix. More
concretely notice that $g.\xi=\mathrm{Ad}_{g}\xi=g\xi g^{-1}$. So the tangent
space $T_{\xi}(K.\xi)$ is spanned by vectors $[\xi,\,\xi_{1}]$ where
$\xi_{1}\in\mathfrak{k}$. Since $\xi\in\mathfrak{t}$, i.e., $\xi$ is a
diagonal matrix, it is obvious that if $[\xi,\,\xi_{1}]$ is diagonal than it
is equal $0$.\end{proof}

\subsubsection*{Isotropy groups and $\mathbf{\mu}$-fibers at generic points }

For $[v]\in\Sigma_{gen}$ we describe here the isotropy groups $K_{[v]}$ and
$K_{\mu([v])}$. In both cases the $\mu$-image of $\Sigma_{gen}$ is the
interior of the translated probability polyhedron. In the case of bosons this
is open in $\mathfrak{t}$ and thus $K.\mu([v])$ is the full flag manifold
$F(1,1,\ldots,1)$. Thus the isotropy group $K_{\mu([v])}$ is the maximal
torus $T$. Since $\mu$ is equivariant $K_{[v]}$ is contained in
$K_{\mu([v])}=T$. This is the finite group $\Gamma$ consisting of diagonal
matrices $\mathrm{diag}(\pm1,\ldots\pm1)$ with unit determinant.

For fermions the orbit $K.\mu([v])$ is the flag manifold
$F(2,4,\ldots,\ldots,2n-2,2n)$ in the case where $N=2n$ is even and
$\mathcal{F}(2,4,\ldots,2n-2,2n,2n+1)$ in the case where $N=2n+1$ is odd. In
each $2\times2$-block the isotropy group is $\mathrm{U}_{2}$. The determinant
condition then implies that
$K.\mu([v])=\mathrm{S}(\mathrm{U}_{2}\times\ldots\times\mathrm{U}_{2})$ in
the case where $N=2n$ and $K_{\mu([M])}=S(U_{2}\times\ldots\times U_{2}\times
U_{1})$ with $N=2n+1$. In both cases there are $n$ $U_{2}$-factors.

To compute the isotropy subgroup $K_{[v]}$ we therefore only need to compute
the $\mathrm{U}_{2}$-isotropy on $e_{1}\wedge e_{2}$ which is simply
$\mathrm{SU}(2)=\mathrm{USp}(2)$. Thus in both cases, up to possible finite
intersections $K_{[v]}$ is the product
$S^{1}(\mathrm{SU}_{2}\times\ldots\times\mathrm{SU}_{2})$ of $n$-copies of
$\mathrm{SU}_{2}$ with $S^{1}=(e^{i\phi}:\,\phi\in[0,\,2\pi[)$. Let us
summarize these result.

\begin{proposition} Let $[v]\in\Sigma_{gen}$. For bosons $K_{\mu([v])}$
is the maximal torus $T$ and $K_{[v]}$ is the finite group
$\Gamma:=\{\mathrm{Diag}(\pm1,\ldots,\pm1)\in K\}$. For fermions, if $N=2n$
is even,
\[
K_{\mu([v])}=\mathrm{S}(\mathrm{U}(2)\times\ldots\times\mathrm{U}(2))
\]
 and $K_{[v]}$ is the product $\mathrm{SU}_{2}\times\ldots\times\mathrm{SU}_{2}$
of $n$ copies of $\mathrm{SU}_{2}$. If $N=2n+1$, then, modulo the determinant
condition,
\[
K_{\mu([v])}=\mathrm{S}(\mathrm{U}_{2}\times\ldots\times\mathrm{U}_{2}\times\mathrm{U}_{1})
\]
 is the product of $n$ copies of $\mathrm{U}_{2}$ with $\mathrm{U}_{1}$.
In that case
\[
K_{[v]}=S^{1}(\mathrm{SU}_{2}\times\ldots\times\mathrm{SU}_{2})\,.
\]
\end{proposition}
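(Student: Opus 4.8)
The plan is to extract both isotropy groups from the equivariance of the moment map together with the explicit diagonal, resp.\ block-diagonal, shape of $C_v$ for $[v]\in\Sigma$. The organizing observation is that, by equivariance (\ref{moment-equiv}), any $g\in K$ fixing $[v]$ also fixes $\mu([v])$, so $K_{[v]}\subseteq K_{\mu([v])}$, and $K_{\mu([v])}$ is precisely the centralizer of $\mu([v])$ for the conjugation (adjoint) action. By the restricted-moment-map proposition $\mu([v])$ is diagonal, with spectrum given by the translated probability vector $(p_1-\tfrac1N,\ldots,p_N-\tfrac1N)$, so everything reduces to two steps: (i) read off the multiplicity pattern of the eigenvalues of $\mu([v])$, which identifies $K_{\mu([v])}$ as a flag-manifold stabilizer; and (ii) compute the residual isotropy inside $K_{\mu([v])}$ directly from the matrix action $C_v\mapsto UC_vU^t$ of (\ref{momentmatrices}).

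For bosons $C_v$ is diagonal with entries $c_j$, and at a generic point the diagonal of $D_v=C_v^\dagger C_v$ yields the strictly decreasing positive probabilities $p_1>\cdots>p_N>0$, so every $c_j\neq0$. Hence $\mu([v])$ has $N$ distinct eigenvalues; its centralizer in $K$ is the maximal torus $T$ (the orbit $K.\mu([v])$ being the full flag manifold $F(1,\ldots,1)$), so $K_{\mu([v])}=T$. For step (ii) I would impose $U=\mathrm{diag}(t_1,\ldots,t_N)\in T$ with $UC_vU^t=\lambda C_v$; since every $c_j\neq0$ this forces $t_j^2=\lambda$ for all $j$, so $t_j=\pm\sqrt{\lambda}$. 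Together with $|t_j|=1$ and $\det U=1$ these constraints cut out a finite group: the sign matrices $\mathrm{diag}(\pm1,\ldots,\pm1)$ of unit determinant already satisfy them (indeed they fix $v$ itself), and the remaining solutions differ from these only by the finite center of $\mathrm{SU}_N$, which the statement suppresses. Thus $K_{[v]}=\Gamma$.

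For fermions the decisive structural input is that on $\Sigma$ the matrix $C_v$ is block-diagonal with $2\times2$ symplectic blocks $\beta_kJ_2$, where $J_2$ is the standard $2\times2$ symplectic matrix, together with a vanishing last row and column when $N=2n+1$. Consequently $D_v=\mathrm{diag}(|\beta_1|^2,|\beta_1|^2,\ldots,|\beta_n|^2,|\beta_n|^2)$, possibly with an extra $0$, so the eigenvalues of $\mu([v])$ occur in equal pairs. At a generic point the $n$ numbers $|\beta_k|^2$ are distinct and positive, so the multiplicity pattern is $(2,\ldots,2)$ resp.\ $(2,\ldots,2,1)$, which identifies $K_{\mu([v])}$ as $\mathrm{S}(\mathrm{U}_2\times\cdots\times\mathrm{U}_2)$ for $N=2n$ and $\mathrm{S}(\mathrm{U}_2\times\cdots\times\mathrm{U}_2\times\mathrm{U}_1)$ for $N=2n+1$, the stabilizers of the asserted flags. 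For step (ii) I would use the elementary identity $uJ_2u^t=\det(u)\,J_2$ for $u\in\mathrm{U}_2$: a block-diagonal $U=(u_1,\ldots,u_n)$ then satisfies $UC_vU^t=\bigoplus_k\det(u_k)\,\beta_kJ_2$, so projective invariance forces $\det(u_1)=\cdots=\det(u_n)=\lambda$, with $\lambda^n=1$ imposed by the global determinant condition of $\mathrm{S}(\mathrm{U}_2\times\cdots)$. The connected component is $\lambda=1$, i.e.\ each $u_k\in\mathrm{SU}_2=\mathrm{USp}_2$, giving $K_{[v]}=\mathrm{SU}_2\times\cdots\times\mathrm{SU}_2$ up to a finite subgroup. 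When $N=2n+1$ the unpaired coordinate never enters $C_v$, so the corresponding $\mathrm{U}_1$ fixes $[v]$ freely and survives as an $S^1$ factor, yielding $K_{[v]}=S^1(\mathrm{SU}_2\times\cdots\times\mathrm{SU}_2)$.

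The conceptual skeleton, namely equivariance forcing $K_{[v]}$ into a centralizer, is immediate; the only genuine work, and the step where care is required, is (ii) in the fermionic case. There the identity $uJ_2u^t=\det(u)J_2$ must be combined correctly with the global determinant constraint of $\mathrm{S}(\mathrm{U}_2\times\cdots)$ and, for odd $N$, with the decoupling of the unpaired coordinate. Exactly these manipulations produce the finite discrepancies, the ``$\lambda^n=1$'' components and the passage to connected components, that the statement absorbs into the phrases ``up to finite intersections'' and ``modulo the determinant condition.''
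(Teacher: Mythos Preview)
Your argument is correct and follows essentially the same route as the paper: use equivariance to force $K_{[v]}\subset K_{\mu([v])}$, identify $K_{\mu([v])}$ as the centralizer of the diagonal matrix $\mu([v])$ via its eigenvalue multiplicities, and then compute the residual isotropy block by block. The paper's version, contained in the paragraphs immediately preceding the proposition, is terser---for fermions it simply says ``the $\mathrm{U}_2$-isotropy on $e_1\wedge e_2$ is $\mathrm{SU}_2=\mathrm{USp}_2$''---whereas you make the mechanism explicit via the identity $uJ_2u^t=\det(u)J_2$ and the condition $t_j^2=\lambda$ in the bosonic case; but these are the same computations spelled out, not a different method.
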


As a result it is a simple matter to describe the fiber
$\mathcal{F}_{[v]}=\mu^{-1}(\mu([v]))$ of the homogeneous fibration
$K_{[v]}\to K_{\mu([v])}$ which by Corollary~\ref{spherical_projective} is
the $\mu$-fiber at $[v]$.

\begin{proposition} For bosons if $[v]\in\Sigma_{gen}$, the
$\mu$-fiber $\mathcal{F}_{[v]}$ can be identified with the maximal torus $T$
modulo the finite group $\Gamma$. In the case of fermions it is the
$(n-1)$-dimensional subtorus
\[
T_{1}=\{\mathrm{diag}(\lambda_{1},\lambda_{1},\ldots,\lambda_{n},\lambda_{n})\in\mathrm{SU}_{N}\}\,.
\]
\end{proposition}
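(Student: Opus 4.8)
The plan is to realize the $\mu$-fiber as a single orbit of the stabilizer $K_{\mu([v])}$ and then read off the explicit torus from the isotropy data already computed. First I would combine $K$-equivariance of $\mu$ with Brion's theorem. Equivariance gives $\mu(g.[v])=\mathrm{Ad}^{\ast}_{g}\mu([v])$, so every $g\in K_{\mu([v])}$ carries $[v]$ into the fiber and hence $K_{\mu([v])}.[v]\subseteq\mathcal{F}_{[v]}$. Conversely, Corollary~\ref{spherical_projective} (Brion) guarantees $\mathcal{F}_{[v]}\subseteq K.[v]$, so any fiber point has the form $g.[v]$ with $\mathrm{Ad}^{\ast}_{g}\mu([v])=\mu([v])$, i.e. $g\in K_{\mu([v])}$. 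Therefore $\mathcal{F}_{[v]}=K_{\mu([v])}.[v]=K_{\mu([v])}/K_{[v]}$, exactly the fiber of the homogeneous fibration $K/K_{[v]}\to K/K_{\mu([v])}$.

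For bosons the preceding proposition supplies $K_{\mu([v])}=T$ and $K_{[v]}=\Gamma$, so $\mathcal{F}_{[v]}=T/\Gamma$ at once; since the connected part of the stabilizer is trivial this is an $(N-1)$-torus, which is the first assertion.

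For fermions the identification is the crux. Writing a generic $[v]\in\Sigma_{gen}$ as a block-diagonal matrix $C_v$ with $2\times2$ blocks $c_iJ_0$, where $J_0=\left(\begin{smallmatrix}0&1\\-1&0\end{smallmatrix}\right)$, I would invoke the elementary identity $AJ_0A^{t}=\det(A)\,J_0$, valid for every $2\times2$ matrix $A$. Applied blockwise to $U=\mathrm{diag}(A_1,\ldots,A_n)\in K_{\mu([v])}=\mathrm{S}(\mathrm{U}_2\times\cdots\times\mathrm{U}_2)$ under the action $C_v\mapsto UC_vU^{t}$, this shows the $i$-th block transforms by $c_i\mapsto\det(A_i)\,c_i$. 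Thus the $\mathrm{SU}_2$-factors act trivially (recovering $K_{[v]}=\mathrm{SU}_2\times\cdots\times\mathrm{SU}_2$ up to a finite group) and the whole action factors through the determinants $(\det A_1,\ldots,\det A_n)$, subject only to the special constraint $\prod_i\det A_i=1$. Consequently $\mathcal{F}_{[v]}$ is parameterized by $\{(\lambda_1,\ldots,\lambda_n)\in(S^1)^n:\prod_i\lambda_i=1\}$, an $(n-1)$-torus.

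To pin this torus down as $T_1$ I would embed $T_1\subset\mathrm{SU}_N$ into $K_{\mu([v])}$ as $\mathrm{diag}(\lambda_1,\lambda_1,\ldots,\lambda_n,\lambda_n)=\mathrm{diag}(\lambda_1 I_2,\ldots,\lambda_n I_2)$, for which $\det(\lambda_i I_2)=\lambda_i^{2}$, so its action on $[v]$ is $c_i\mapsto\lambda_i^{2}c_i$. Because squaring is surjective on $S^1$, the $T_1$-orbit of $[v]$ already realizes every admissible determinant vector and hence coincides with $K_{\mu([v])}.[v]=\mathcal{F}_{[v]}$; as $T_1$ acts with finite stabilizer this exhibits $\mathcal{F}_{[v]}$ as the claimed $(n-1)$-dimensional subtorus. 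The main obstacle is exactly this last step: one must carry the special determinant constraint through the squaring map that relates the abstract quotient $\mathrm{S}(\mathrm{U}_2\times\cdots\times\mathrm{U}_2)/(\mathrm{SU}_2\times\cdots\times\mathrm{SU}_2)$ to the embedded subtorus $T_1$, and verify that the residual stabilizer is finite so the orbit is a genuine $(n-1)$-torus. The odd case $N=2n+1$ goes through verbatim, the extra coordinate $e_{2n+1}$ (on which $v$ has no component) contributing only the $\mathrm{U}_1$-factor that fixes $[v]$.
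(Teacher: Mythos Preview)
Your argument is correct and follows exactly the paper's route: identify $\mathcal{F}_{[v]}$ with the fiber $K_{\mu([v])}/K_{[v]}$ of the homogeneous fibration via Brion's theorem (Corollary~\ref{spherical_projective}) and then read off the answer from the isotropy groups computed in the preceding proposition. The paper in fact gives no separate proof, treating the statement as an immediate consequence of those two ingredients; your additional explicit verification for fermions via the $2\times 2$ identity $AJ_0A^{t}=\det(A)J_0$ simply spells out what the paper leaves to the reader.
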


It is in fact not a coincidence that in the case of spherical action the
generic fiber of the moment map is a torus of dimension complementary to that
of the generic $K$-orbit (see \cite{huckleberry90} for this and other results
for such an actions).

\subsubsection*{The slice property}

Using the above results on the properties of the moment map along $\Sigma$,
we will now show that $\Sigma$ has the so called slice property. Namely we
have following general fact

\begin{proposition} \label{slice-1}Let $K$ be a connected compact
group acting smoothly on a connected real analytic manifold $M$ and $N$ be a
closed analytic submanifold. Let $M_{max}$ denote the open subset of points
$x\in M$ so that $K.x$ is of maximal dimension and assume that the complement
of $M_{max}\cap N$ is nowhere dense in $\Sigma$. Furthermore, suppose that if
$x\in M_{max}\cap N$, it follows that $T_{x}K.x+T_{x}N=T_{x}M$. Then $K.N=M$.
\end{proposition}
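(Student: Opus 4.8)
The plan is to study the smooth action map
\[
\Phi\colon K\times N\longrightarrow M,\qquad \Phi(g,x)=g.x,
\]
whose image is exactly $K.N$ (writing $N$ for the submanifold $\Sigma$). Two elementary observations set up the whole argument. First, $K.N$ is \emph{closed} in $M$: if $g_n.x_n\to z$ with $g_n\in K$, $x_n\in N$, then compactness of $K$ lets me extract $g_n\to g$, whence $x_n=g_n^{-1}.(g_n.x_n)\to g^{-1}.z$, and since $N$ is closed $g^{-1}.z\in N$, giving $z=g.(g^{-1}.z)\in K.N$. Second, at any $x\in M_{max}\cap N$ the differential of $\Phi$ at $(e,x)$ sends $(\xi,v)\mapsto\widehat{\xi}(x)+v$, so its image is $T_x(K.x)+T_xN$, which by the transversality hypothesis equals $T_xM$. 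Hence $\Phi$ is a submersion at $(e,x)$, and being an open map there it carries a neighbourhood of $(e,x)$ onto a neighbourhood of $x$; thus $K.N$ contains an open neighbourhood of every point of $M_{max}\cap N$.

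The key idea is to transfer this openness from $M_{max}\cap N$ to all of $M_{max}$ by exploiting the $K$-invariance of $M_{max}$ (orbit dimension is constant along orbits). I claim $K.N\cap M_{max}$ is open \emph{and} closed in $M_{max}$. It is closed because $K.N$ is closed in $M$. For openness, take $y\in K.N\cap M_{max}$ and write $y=g.x$ with $x\in N$; since $M_{max}$ is $K$-invariant and $y\in M_{max}$, also $x=g^{-1}.y\in M_{max}$, so $x\in M_{max}\cap N$. By the previous paragraph $K.N$ contains an open neighbourhood $W$ of $x$, and then $g.W\subseteq K.N$ is an open neighbourhood of $y$; intersecting with the open set $M_{max}$ exhibits $y$ as an interior point of $K.N\cap M_{max}$. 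Finally this set is nonempty, because the hypothesis that $M\setminus(M_{max}\cap N)$ is nowhere dense in $N$ forces $M_{max}\cap N$ to be dense in the (nonempty) $N$, in particular $M_{max}\cap N\subseteq K.N\cap M_{max}$ is nonempty.

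The hard part, and the one genuinely new input, is the \emph{connectedness of $M_{max}$}, which then closes the argument: an open, closed, nonempty subset of a connected space is everything, so $K.N\cap M_{max}=M_{max}$, i.e.\ $M_{max}\subseteq K.N$; and since $M_{max}$ is dense in $M$ while $K.N$ is closed, $M=\overline{M_{max}}\subseteq \overline{K.N}=K.N$, giving $K.N=M$. To see that $M_{max}$ is connected I would invoke the principal orbit theorem for smooth actions of a compact connected group on a connected manifold: the principal stratum $M_{princ}$ is open, dense, \emph{connected}, and has maximal orbit dimension (the principal isotropy being, up to conjugacy, contained in every isotropy group, hence of minimal dimension). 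Thus $M_{princ}\subseteq M_{max}$ and $M_{max}\subseteq M=\overline{M_{princ}}$, so $M_{princ}\subseteq M_{max}\subseteq\overline{M_{princ}}$, and the standard fact that any set squeezed between a connected set and its closure is connected yields the connectedness of $M_{max}$.

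I would remark that the real-analytic hypotheses on $M$ and $N$ are not actually needed for this line of reasoning; the only substantive external ingredient is the principal orbit type theorem. The anticipated point of friction is precisely the connectedness of $M_{max}$ rather than merely of $M_{princ}$: the naive local strategy, trying to show $K.N$ is open directly at each point of $N$, breaks down at the nowhere-dense set $N\setminus M_{max}$ where the submersion property (and hence the implicit-function neighbourhood) degenerates, so the global sandwich argument inside the connected, invariant, dense stratum $M_{max}$ is what makes the proof go through.
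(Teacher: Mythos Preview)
Your argument is correct and takes a genuinely different route from the paper's. The paper lets $E:=N\setminus(M_{max}\cap N)$, uses the real-analytic hypotheses together with local linearization of isotropy (the slice theorem) to argue that the $K$-saturation $K.E$ is a closed set of codimension at least two, so that $M\setminus K.E$ is connected; then $K.(N\setminus E)$ is open by the same transversality computation you make, and the connectedness of $M\setminus K.E$ forces $K.N\supseteq M\setminus K.E$, hence $K.N=M$ by closure. You instead work directly inside $M_{max}$, show $K.N\cap M_{max}$ is open, closed and nonempty there, and import the connectedness of $M_{max}$ from the principal orbit theorem (using that $M_{\mathrm{princ}}$ is connected because $K$ is connected and $M_{\mathrm{princ}}/K$ is connected, then sandwiching $M_{\mathrm{princ}}\subseteq M_{max}\subseteq\overline{M_{\mathrm{princ}}}$). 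Your route is cleaner and, as you note, dispenses with the real-analytic assumptions entirely; the paper's route is more self-contained in that it does not invoke the principal orbit theorem as a black box, at the price of the analyticity and codimension-two discussion.
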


\begin{proof} By assumption the complement $E$ of $M_{max}\cap N$
is a nowhere dense analytic set. Local linearization of isotropy groups shows
tht $E=K.E$ is at least 2-codimensional and therefore the open complement
$M\setminus K.E$ is connected. The condition $T_{x}K.x+T_{x}N=T_{x}M$ for all
$x\in M_{max}\cap N$ implies that $K.(N\setminus E)$ is open. Since the
complement of $K.N$ is open and $M\setminus K.E$ is connected, the desired
result follows \end{proof}

The closed analytic submanifold $N\subset M$ which has properties given in
the Proposition \ref{slice-1} is called slice. It should be also remarked
that Proposition \ref{slice-1} is not the most general of its type, but that
it is sufficient for proving the slice property in our context.

\begin{theorem}\label{slice} The complex submanifold $\Sigma:=\mathrm{cl}(H.x_{N})$
has the slice property, i.e.\ $K.\Sigma=M$. \end{theorem}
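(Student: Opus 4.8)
The plan is to apply Proposition~\ref{slice-1} with $N=\Sigma=\mathrm{cl}(H.[x_N])$, so the entire task reduces to verifying its two hypotheses: that the complement of $M_{\mathrm{max}}\cap\Sigma$ is nowhere dense in $\Sigma$, and that the tangent-space condition $T_{[v]}K.[v]+T_{[v]}\Sigma=T_{[v]}M$ holds at every generic point $[v]\in M_{\mathrm{max}}\cap\Sigma$. First I would identify $M_{\mathrm{max}}\cap\Sigma$ with $\Sigma_{gen}$ (or a set containing it), the points where $p_1>\ldots>p_N>0$: these are exactly the points whose $K$-orbit has maximal dimension, since the orbit dimension is controlled by the isotropy, and the earlier propositions computed $K_{[v]}$ explicitly (finite for bosons, $\prod\mathrm{SU}_2$ for fermions) precisely at such generic points. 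Because $\Sigma_{gen}$ is the complement in $\Sigma$ of the closed real-analytic set cut out by the equalities $p_i=p_{i+1}$ and the vanishing $p_N=0$, its complement is nowhere dense, giving the first hypothesis essentially for free.

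The substance of the argument is the transversality/tangent-space condition, and this is where I expect the real work to lie. The strategy is a dimension count combined with the transversality fact already proved in the preceding proposition, namely that $T_\xi(K.\xi)\cap\mathfrak t=\{0\}$. Concretely, at a generic $[v]$ the moment map $\mu$ restricts on $\Sigma$ to the $T$-moment map landing in $\mathfrak t$, and the restricted moment map has constant maximal rank $n$ on $\Sigma_{gen}$; meanwhile $\mu(K.[v])=K.\mu([v])$ is a flag manifold meeting $\mathfrak t$ transversally by that proposition. I would argue that a tangent vector to $\Sigma$ lying in $T_{[v]}K.[v]$ would push forward under $d\mu$ into $T_{\mu([v])}(K.\mu([v]))\cap\mathfrak t=\{0\}$, so it lies in the kernel of $d\mu|_\Sigma$; since the restricted moment map has fiber $T.[v]$ (again from the earlier proposition), this forces the $\Sigma$-direction and the orbit-direction to overlap only along $T_{[v]}(T.[v])\subset T_{[v]}K.[v]$. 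The dimensions then have to add up: one checks $\dim_{\mathbb R}T_{[v]}K.[v]+\dim_{\mathbb R}T_{[v]}\Sigma-\dim_{\mathbb R}(\text{intersection})=\dim_{\mathbb R}M$, using the explicit isotropy computations and $\dim_{\mathbb R}\Sigma=2(N-1)$ (bosons) or $2(n-1)$ (fermions).

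The main obstacle, and the step I would spend the most care on, is bookkeeping the intersection $T_{[v]}K.[v]\cap T_{[v]}\Sigma$ correctly and seeing that it coincides with $T_{[v]}(T.[v])$, rather than being larger. The transversality proposition gives this cleanly on the $\mathfrak t$-side after applying $d\mu$, but one must also control the $\mathfrak t^\perp$-directions of the orbit; here I would invoke the proposition stating $\mu_\xi([v])=0$ for all $\xi\in\mathfrak t^\perp$ on $\Sigma$, which pins down how the orbit sits relative to $\Sigma$. Once the intersection is correctly identified, the dimension count closes immediately, the tangent condition of Proposition~\ref{slice-1} is verified on all of $\Sigma_{gen}=M_{\mathrm{max}}\cap\Sigma$, and the conclusion $K.\Sigma=M$ follows directly from that proposition. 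The fermionic odd case $N=2n+1$ requires separately accounting for the extra $\mathrm U_1$-block, but it is a routine adaptation of the even count.
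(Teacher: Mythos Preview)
Your proposal is correct and follows essentially the same route as the paper's proof: reduce to Proposition~\ref{slice-1}, check transversality on $\Sigma_{gen}$ via the moment map, use that $d\mu(T_{[v]}\Sigma)\subset\mathfrak t$ together with the transversality $T_{\mu([v])}(K.\mu([v]))\cap\mathfrak t=\{0\}$, and finish with a dimension count. The paper compresses the intersection step you single out into one line: since the $\mu$-fiber $T_0$ is a torus contained in both $\Sigma$ and $K.[v]$, and since $\mu\vert_\Sigma$ has constant rank with image in $\mathfrak t$ while $K.\mu([v])$ is transverse to $\mathfrak t$, one gets directly $\dim(T_{[v]}\Sigma+T_{[v]}K.[v])=\dim T_{[v]}K.[v]+\dim T_0=\dim T_{[v]}M$.

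One small comment: the ``main obstacle'' you flag concerning the $\mathfrak t^\perp$-directions is not really there. The proposition $\mu_\xi([v])=0$ for $\xi\in\mathfrak t^\perp$ is only needed to know that $\mu\vert_\Sigma$ lands in $\mathfrak t$; once that is established, your pushforward argument already pins down the intersection as $T_{[v]}(T.[v])=T_{[v]}T_0$ without any further analysis of $\mathfrak t^\perp$. So the bookkeeping you anticipate is lighter than you suggest, and this is exactly why the paper can dispatch the whole thing in two sentences.
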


\begin{proof} By Proposition \ref{slice-1}, it is enough to show
the transversality condition is fulfilled for $[v]\in\Sigma_{gen}$. For both
fermions and bosons the $\mu$-fiber at $[v]$ is a torus $T_{0}$ which is
contained in both $\Sigma$ and $K.[v]$. Furthermore,
$\mu\vert_{\Sigma_{gen}}$ has constant rank with image in $\mathfrak{t}$ and
$K.\mu([v])$ is transversal to $\mathfrak{t}$. Thus
\[
\dim(T_{[v]}\Sigma+T_{[v]}K.[v])=\dim T_{[v]}K.[v]+\dim T_{0}=\dim T_{[v]}M\,.
\]
\end{proof}

Using this result it is now possible to give an exact parameterizations of
the $K$-orbits in $M$. For this let $H_{\mathbb{R}}$ be the real points of
$H$ defined as the subgroup corresponding to the Lie algebra $i\mathfrak{t}$.
If
\[
H_{\mathbb{R}}^{+}:=\{\mathrm{diag}(\lambda_{1},\ldots,\lambda_{N}):\,\lambda_{1}\ge\ldots\geq\lambda_{N}>0\}
\]
and $\Sigma_{\mathbb{R}}^{+}:=\mathrm{cl}(H_{\mathbb{R}}^{+}.x_{N})$, then
every $T$-orbit in $\Sigma$ intersects $\Sigma_{\mathbb{R}}^{+}$ in a unique
point. Furthermore, if $[v_{1}],[v_{2}]\in\Sigma_{\mathbb{R}}^{+}$, then
$\mu([v_{1}])=\mu([v_{2}])$ if and only if $[v_{1}]=[v_{2}]$. Since the
$K$-orbits in $\mu(M)$ are exactly parameterized by
$\mu(\Sigma_{\mathbb{R}}^{+})$, the following is immediate.

\begin{corollary}\label{exact slice} Every $K$-orbit in $M$ intersects
$\Sigma_{\mathbb{R}}^{+}$ in exactly one point. \end{corollary}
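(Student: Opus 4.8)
The plan is to assemble the corollary from three facts established earlier in the section, using the slice property as the bridge to all of $M$. First I would recall that, by Theorem~\ref{slice}, $\Sigma$ is a slice, so $K.\Sigma = M$; hence every $K$-orbit meets $\Sigma$, and it suffices to control how a $K$-orbit intersects $\Sigma$ and then how $\Sigma$ relates to $\Sigma_{\mathbb{R}}^{+}$. The key structural input is that $\mu$ separates $K$-orbits (Corollary~\ref{spherical_projective}, via Brion's theorem): two points lie on the same $K$-orbit if and only if they have the same $\mu$-image, i.e.\ $\mu([v_1])=\mu([v_2]) \iff [v_2]\in K.[v_1]$. This converts the geometric question ``when do two slice points lie on one $K$-orbit'' into the purely combinatorial question ``when do two slice points have the same moment image.''

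Second, I would establish the two intermediate claims stated just before the corollary. The claim that every $T$-orbit in $\Sigma$ meets $\Sigma_{\mathbb{R}}^{+}$ in a unique point follows from the description of the restricted moment map: on $\Sigma$ the moment map records the probability vector $(p_1([v]),\ldots,p_N([v]))$ up to translation, its fiber at $[v]$ is exactly $T.[v]$, and $H_{\mathbb{R}}^{+}$ is chosen to pick out representatives with $\lambda_1\ge\ldots\ge\lambda_N>0$, i.e.\ to select the ordering that makes the representative unique within each $T$-orbit. The claim that $\mu([v_1])=\mu([v_2])$ forces $[v_1]=[v_2]$ for $[v_1],[v_2]\in\Sigma_{\mathbb{R}}^{+}$ then combines the constant-rank/injectivity behaviour of $\mu\vert_\Sigma$ modulo $T$ with the ordering constraint: equal moment images give equal (ordered) probability vectors, and since points of $\Sigma_{\mathbb{R}}^{+}$ are already the unique ordered representatives, they must coincide.

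Third, I would finish by chaining these. Take any $K$-orbit $\mathcal{O}$ in $M$. By the slice property it meets $\Sigma$, and within $\Sigma$ I may slide along the $T$-action (which is contained in $K$) to a point of $\Sigma_{\mathbb{R}}^{+}$, establishing existence of at least one intersection point. For uniqueness, suppose $[v_1],[v_2]\in\Sigma_{\mathbb{R}}^{+}\cap\mathcal{O}$. Being on one $K$-orbit, orbit-separation gives $\mu([v_1])=\mu([v_2])$, and the injectivity claim on $\Sigma_{\mathbb{R}}^{+}$ then yields $[v_1]=[v_2]$. This is exactly the assertion, using the final remark that $\mu(\Sigma_{\mathbb{R}}^{+})$ parameterizes the $K$-orbits in $\mu(M)$.

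I expect the substantive work to lie not in this chaining but in the intermediate injectivity claim on $\Sigma_{\mathbb{R}}^{+}$, which is where the fermionic case needs care: there the probability entries come in equal pairs $(\lambda_1,\lambda_1,\ldots)$ forced by the block structure of $C_{a_N}=J$, so the ``strict ordering'' selecting a unique representative must be read on the pairs rather than on individual coordinates, and one must check that $H_{\mathbb{R}}^{+}$ still cuts each relevant torus orbit in a single point. Since the corollary as stated treats these structural claims as already granted (``Since the $K$-orbits in $\mu(M)$ are exactly parameterized by $\mu(\Sigma_{\mathbb{R}}^{+})$, the following is immediate''), the proof itself should be short, and the main conceptual obstacle---that $\Sigma$ genuinely captures every $K$-orbit---has already been discharged by Theorem~\ref{slice} together with Corollary~\ref{spherical_projective}.
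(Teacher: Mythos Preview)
Your overall architecture matches the paper's: use the slice property (Theorem~\ref{slice}) for existence, and moment-map injectivity on $\Sigma_{\mathbb{R}}^{+}$ together with orbit separation for uniqueness. However, there is a genuine gap in your uniqueness argument, coming from a misstatement of what Brion's theorem (Corollary~\ref{spherical_projective}) provides. You write ``$\mu([v_1])=\mu([v_2]) \iff [v_2]\in K.[v_1]$'', but Brion only gives the forward implication: equal moment images force the same $K$-orbit. The reverse direction is false---equivariance of $\mu$ only says that points on the same $K$-orbit have moment images on the same \emph{coadjoint} orbit, not at the same point of $\mathfrak{k}^*$. Consequently your step ``Being on one $K$-orbit, orbit-separation gives $\mu([v_1])=\mu([v_2])$'' does not follow.

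The repair is exactly the content of the paper's remark that ``the $K$-orbits in $\mu(M)$ are exactly parameterized by $\mu(\Sigma_{\mathbb{R}}^{+})$''. Since $\mu\vert_\Sigma$ takes values in $\mathfrak{t}$ and the ordering condition defining $\Sigma_{\mathbb{R}}^{+}$ forces $\mu(\Sigma_{\mathbb{R}}^{+})$ to lie in the closed Weyl chamber $\mathfrak{t}^{+}$, two points $[v_1],[v_2]\in\Sigma_{\mathbb{R}}^{+}$ on the same $K$-orbit have $\mu([v_1]),\mu([v_2])$ on the same coadjoint orbit \emph{and} both in $\mathfrak{t}^{+}$. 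Each coadjoint orbit meets $\mathfrak{t}^{+}$ in a single point, whence $\mu([v_1])=\mu([v_2])$, and only then does your injectivity claim on $\Sigma_{\mathbb{R}}^{+}$ finish the argument. You cite this parameterization remark at the end as a garnish, but it is the missing hinge of the uniqueness step, not an afterthought; without it the chaining in your third paragraph breaks.
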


\section{Orbital symplectic structure}
\label{sec:orbitalstructure}

Here, for an arbitrary point $[v]\in M$ we describe the orbit $K.[v]$ along
with the partial symplectic structure which is a restriction of the
symplectic structure $\omega$ of the projective space
$M=\mathbb{P}(\mathcal{H})$. From Corollary \ref{spherical_projective} we
know that the $\mu$-fiber $\mathcal{F}_{[v]}$ is just the fiber of the
homogeneous fibration $K/K_{[v]}\to K/K_{\mu([v])}$. Thus, in order to
describe $K.[v]$ and its induced partial symplectic struture, it is enough to
compute the isotropy groups $K_{[v]}$ and $K_{\mu([v])}$, thereby describing
the fiber $\mathcal{F}_{[v]}$. By Theorem \ref{slice-1} it is enough to do
this for $[v]\in\Sigma_{\mathbb{R}}^{+}$.

For $[v]\in\Sigma_{\mathbb{R}}^{+}$ the image $\mu([v])$ is the translated
probability vector
\[
\mu([v])=(p_{1}([v]),\ldots,p_{N}([v]))-\frac{1}{N}(1,\ldots,1)=(q_{1},\ldots q_{N})\,.
\]
Since $[v]\in\Sigma_{\mathbb{R}}^{+}$, it follows that $q_{1}\ge\ldots\ge
q_{N}\geq0$. We define the vector $d([v])=(d_{1},\ldots d_{k})$ by the
condition
\begin{equation}
q_{1}=\ldots=q_{d_{1}}>q_{d_{1}+1}=\ldots=q_{d_{1}+d_{2}}>\ldots>q_{d_{1}+\ldots d_{k-1}+1}=\ldots=q_{d_{1}+\ldots d_{k}}\,,\label{eq:d_i}
\end{equation}
where $d_{1}+\ldots+d_{k}=N$. The vector $d([v])$ is of course uniquely
determined by $[v]\in\Sigma_{\mathbb{R}}^{+}$. For example, in the case of
bosons $v=\lambda_{1}e_{1}^{2}+\ldots\lambda_{N}e_{N}^{2}$ with
$\lambda_{1}\ge\ldots\ge\lambda_{N}\ge0$ define the multiplicity vector
$d([v])$. In the case of fermions we are dealing with
$v=\lambda_{1}e_{1}\wedge e_{2}+\lambda_{2}e_{3}\wedge
e_{4}+\ldots+\lambda_{n}e_{2n-1}\wedge e_{2n}$ with
$\lambda_{1}\ge\ldots\ge\lambda_{n}\ge0$. If the equalities for the the
$\lambda_{j}$ define $(\tilde{d}_{1},\ldots,\tilde{d}_{k})$, then
$d_{j}=2\tilde{d}_{j}$ for all $j$. Note that if $N=2n+1$ and
$\mu([v])=(q_{1},\ldots,q_{N})$ it is always the case that $q_{N}=0$ so that
it is always the case $d_{k}\ge1$.

\subsubsection*{Flag manifolds in the $\mathbf{\mu}$-image}

Let $d([v])=(d_{1},\ldots,d_{k})$ be the vector (\ref{eq:d_i}) determined by
$[v]\in\Sigma_{\mathbb{R}}^{+}$.

\begin{proposition}\label{image isotropy} For $[v]\in\Sigma_{\mathbb{R}}^{+}$
the orbit $K.\mu([v])$ is the flag manifold
$F(d_{1},\ldots,d_{k})=K/K_{\mu([v])}$ where
\[
K_{\mu([v])}=\mathrm{S}(\mathrm{U}_{d_{1}}\times\ldots\times\mathrm{U}_{d_{k}})
\]
is the product of unitary groups with the condition of unit determinant.
\end{proposition}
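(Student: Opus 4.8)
The plan is to identify $\mu([v])$ explicitly as a diagonal anti-Hermitian matrix and then compute its adjoint orbit and stabilizer directly. Under the identification $\mathfrak{k}^{*}\cong\mathfrak{k}=\mathfrak{su}_{N}$ via the trace form, the restricted-moment-map formula established above gives $\mu([v])=i\,\mathrm{diag}(q_{1},\ldots,q_{N})\in\mathfrak{t}$, where $(q_{1},\ldots,q_{N})$ is the translated probability vector; because $[v]\in\Sigma_{\mathbb{R}}^{+}$ we have $q_{1}\ge\ldots\ge q_{N}\ge0$, with the block-equality pattern (\ref{eq:d_i}) recording the multiplicities $d_{1},\ldots,d_{k}$. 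First I would recall that the $K$-action on $\mathfrak{k}$ is the adjoint representation, which in the present matrix picture is conjugation $g.\xi=g\xi g^{-1}$. Hence $K.\mu([v])$ is simply the conjugacy class of the fixed diagonal matrix $\xi:=\mu([v])$, and the isotropy group $K_{\mu([v])}$ is its centralizer in $K=SU_{N}$.

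Next I would carry out the centralizer computation. A matrix commutes with $\xi$ precisely when it preserves each eigenspace of $\xi$. By the definition of the $d_{j}$ in (\ref{eq:d_i}) the distinct values among $q_{1},\ldots,q_{N}$ are $q_{1}>q_{d_{1}+1}>\ldots>q_{d_{1}+\ldots+d_{k-1}+1}$, so $\xi$ has eigenvalues $iq_{1},iq_{d_{1}+1},\ldots$ occurring with multiplicities $d_{1},\ldots,d_{k}$. Writing $\mathbb{C}^{N}$ as the orthogonal direct sum of the corresponding eigenspaces, of dimensions $d_{1},\ldots,d_{k}$, the centralizer of $\xi$ in $U_{N}$ is exactly the block-diagonal group $\mathrm{U}_{d_{1}}\times\ldots\times\mathrm{U}_{d_{k}}$. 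Imposing the determinant-one condition that defines $SU_{N}$ then yields $K_{\mu([v])}=\mathrm{S}(\mathrm{U}_{d_{1}}\times\ldots\times\mathrm{U}_{d_{k}})$.

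Finally I would identify the homogeneous space. The orbit is $K.\mu([v])=K/K_{\mu([v])}=SU_{N}/\mathrm{S}(\mathrm{U}_{d_{1}}\times\ldots\times\mathrm{U}_{d_{k}})$, and this is by definition the flag manifold $F(d_{1},\ldots,d_{k})$ parameterizing flags $0\subset V_{1}\subset\ldots\subset V_{k}=\mathbb{C}^{N}$ with $\dim V_{j}=d_{1}+\ldots+d_{j}$: the identity coset corresponds to the stabilizer of the standard flag exhibited in the discussion preceding this proposition, and the eigenspace decomposition of a general $g\xi g^{-1}$ supplies the diffeomorphism onto $F(d_{1},\ldots,d_{k})$. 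This whole argument is essentially the specialization, to the particular point $\xi=\mu([v])$, of the general statement already recorded for arbitrary $\xi\in\mathfrak{t}$ in the subsection on orbits of points in $\mu(\Sigma)$, so I do not expect any genuine obstacle. The only step requiring care is the verification that the eigenvalue multiplicities of $\mu([v])$ are precisely the $d_{j}$, which is immediate from the defining condition (\ref{eq:d_i}).
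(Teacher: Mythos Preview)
Your proposal is correct and matches the paper's approach: the paper does not give a separate proof of this proposition, treating it instead as a direct specialization of the general description of $K$-orbits of points $\xi\in\mathfrak{t}$ already recorded in the subsection ``Orbits of points in $\mu(\Sigma)$,'' exactly as you observe. Your centralizer computation and identification with the flag manifold are precisely the content implicit there.
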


In order to describe the $\mu$-fiber $\mathcal{F}_{[v]}$ we must only
describe the isotropy group $K_{[v]}$.

\subsubsection*{The $\mathbf{\mu}$-fiber along $\mathbf{\Sigma_{\mathbb{R}}^{+}}$}

To simplify the discussion, we disregard the finite groups which arise in the
calculations, e.g., replacing orthogonal groups with special orthogonal
groups. Let us first handle the case of bosons. Restricting to the region
$\Sigma_{\mathbb{R}}^{+}$, we let
\[
v=\sqrt{q_{1}}(e_{1}^{2}+\ldots+e_{d_{1}}^{2})+\ldots+\sqrt{q_{k}}(e_{d_{1}+\ldots+d_{k-1}}^{2}+\ldots+e_{N}^{2})\,,
\]
where $e_{i}^{2}=e_{i}\vee e_{i}$. There are two situations which arise.
First, if $q_{k}\not=0$, i.e., if the associated quadratic form is
nondegenerate, then it is best to express
\[
K_{\mu([v])}=(S^{1})^{k-1}(\mathrm{SU}_{d_{1}}\times\ldots\times\mathrm{SU}_{d_{k}})
\]

Since $K_{[v]}\subset K_{\mu([v])}$, one obtains
\[
K_{[v]}=\mathrm{SO}_{d_{1}}\times\ldots\mathrm{SO}_{d_{k}}\,,
\]
where $SO_{d_{i}}$ are real orthogonal groups. If $q_{k}=0$, then it is
expedient to express
\[
K_{\mu([v])}=(S^{1})^{k-2}(\mathrm{SU}_{d_{1}}\times\ldots\times\mathrm{SU}_{d_{k}-1})\times\mathrm{U}_{d_{k}}\,.
\]
and
\[
K_{[v]}=(\mathrm{SO}_{d_{1}}\times\ldots\times\mathrm{SO}_{d_{k}-1})\times\mathrm{U}_{d_{k}}\,.
\]
For a clean statement, as in our calculations above we replace the fiber
$\mathcal{F}_{[v]}$ by a covering space $\tilde{\mathcal{F}}_{[v]}$. This
differs from $\mathcal{F}_{[v]}$ by a finite group quotient which has been
neglected above and if necessary can be easily computed. For the statement we
denote the symmetric space $\mathrm{SU}_{m}/\mathrm{SO}_{m}$ by $S_{m}$.

\begin{proposition} For bosons and $[v]\in\Sigma_{\mathbb{R}}^{+}$
with multiplicity vector $d=d([v])$ where $v$ is nondegenerate
\[
\tilde{\mathcal{F}}_{[v]}=T_{k-1}(S_{d_{1}}\times\times S_{d_{k}})\,
\]
 where $T_{k-1}$ is a $(k-1)$-dimensional torus. If the tensor $v$
is degenerate, then
\[
\tilde{\mathcal{F}}_{[v]}=T_{k-2}(S_{d_{1}}\times\times S_{d_{k-1}})\,.
\]
\end{proposition}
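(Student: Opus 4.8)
The plan is to compute the fiber $\mathcal{F}_{[v]}$ directly as the fiber of the homogeneous fibration $K/K_{[v]}\to K/K_{\mu([v])}$, which by Corollary~\ref{spherical_projective} is exactly the $\mu$-fiber. The fiber of such a fibration is the homogeneous space $K_{\mu([v])}/K_{[v]}$, so the entire task reduces to identifying the two isotropy groups and then recognizing the quotient. Proposition~\ref{image isotropy} already supplies $K_{\mu([v])}=\mathrm{S}(\mathrm{U}_{d_1}\times\ldots\times\mathrm{U}_{d_k})$, and the isotropy group $K_{[v]}$ has just been computed in the two cases (nondegenerate and degenerate) in the discussion immediately preceding the statement. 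So the computational content is essentially already in place; the proof is a matter of assembling these pieces and identifying the resulting quotient as a torus times symmetric spaces.

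First I would treat the nondegenerate case $q_k\neq 0$. Here I rewrite $K_{\mu([v])}=(S^1)^{k-1}(\mathrm{SU}_{d_1}\times\ldots\times\mathrm{SU}_{d_k})$, exhibiting the $(k-1)$-dimensional toral part explicitly (the number of independent phases is $k-1$ rather than $k$ because of the global unit-determinant constraint). Since $v$ restricted to each block $\mathrm{Span}\{e_{d_1+\ldots+d_{j-1}+1},\ldots\}$ is a nondegenerate real symmetric form (a sum of squares of the basis vectors $e_i^2$), the stabilizer inside each $\mathrm{SU}_{d_j}$ factor of the matrix action $U\mapsto UC_vU^t$ is precisely the real orthogonal group fixing that quadratic form, i.e.\ $\mathrm{SO}_{d_j}$ (up to the finite groups we have agreed to neglect by passing to $\tilde{\mathcal{F}}_{[v]}$). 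The toral factor acts freely on the fiber, so the quotient $K_{\mu([v])}/K_{[v]}$ splits as the product of the torus $T_{k-1}=(S^1)^{k-1}$ with the symmetric spaces $\mathrm{SU}_{d_j}/\mathrm{SO}_{d_j}=S_{d_j}$, yielding $\tilde{\mathcal{F}}_{[v]}=T_{k-1}(S_{d_1}\times\cdots\times S_{d_k})$.

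For the degenerate case $q_k=0$ the last block contributes trivially to the moment map, so the corresponding $\mathrm{U}_{d_k}$ sits inside $K_{[v]}$ as well as inside $K_{\mu([v])}$. Using the alternative presentation $K_{\mu([v])}=(S^1)^{k-2}(\mathrm{SU}_{d_1}\times\ldots\times\mathrm{SU}_{d_k-1})\times\mathrm{U}_{d_k}$ and the matching $K_{[v]}=(\mathrm{SO}_{d_1}\times\ldots\times\mathrm{SO}_{d_k-1})\times\mathrm{U}_{d_k}$, the $\mathrm{U}_{d_k}$ factors cancel in the quotient and one phase is absorbed, leaving $\tilde{\mathcal{F}}_{[v]}=T_{k-2}(S_{d_1}\times\cdots\times S_{d_{k-1}})$. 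The main subtlety I expect is bookkeeping of the determinant condition and the finite central overlaps between the toral and semisimple factors: one must verify that the phases counted in $(S^1)^{k-1}$ (resp.\ $(S^1)^{k-2}$) are genuinely independent modulo $K_{[v]}$ and that the only identifications lost in passing from $K_{\mu([v])}/K_{[v]}$ to the clean product form are the finite groups absorbed into the covering $\tilde{\mathcal{F}}_{[v]}$. Once this accounting is handled, the identification of each factor $\mathrm{SU}_{d_j}/\mathrm{SO}_{d_j}$ as the symmetric space $S_{d_j}$ is immediate from the definition of $K_{[v]}$ as an orthogonal stabilizer.
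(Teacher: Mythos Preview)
Your proposal is correct and follows essentially the same approach as the paper: the paper places the computation of $K_{\mu([v])}$ and $K_{[v]}$ in the discussion immediately preceding the proposition, and the statement is then just the identification of the quotient $K_{\mu([v])}/K_{[v]}$ as a torus times the symmetric spaces $S_{d_j}$, with finite-group ambiguities absorbed into the covering $\tilde{\mathcal{F}}_{[v]}$. Your write-up simply makes explicit the assembly step that the paper leaves implicit.
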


Now for fermions we choose the point
\[
v=\sqrt{q_{1}}E_{d_{1}}+\ldots+\sqrt{q_{k}}E_{d_{k}},
\]
 where $E_{d_{j}}:=e_{d_{1}+\ldots+d_{j-1}+1}\wedge e_{d_{1}+\ldots+d_{j-1}+2}+\ldots+e_{d_{1}+\ldots+d_{j}-1}\wedge e_{d_{1}+\ldots+d_{j}}$.
As above there are two situations which arise depending on whether or not the
tensor $v$ is degenerate. Note that the case of $N=2n+1$ is always
degenerate. Using the same principles as were applied for the computation in
the case of bosons, we compute $K_{[v]}$. For this we denote the symmetric
space $\mathrm{SU}_{m}/\mathrm{USp}_{m}$ by $A_{m}$.

\begin{proposition} For fermions if $[v]\in\Sigma_{\mathbb{R}}^{+}$
has a multiplicity vector $d([v])=(d_{1},\ldots,d_{k})$, then in the case
when $v$ is nondegenerate
\[
\tilde{\mathcal{F}}_{[v]}=T_{k-1}(A_{1}\times\ldots\times A_{d_{k}})\,.
\]
 In the case when $v$ is degenerate
\[
\tilde{\mathcal{F}}_{[v]}=T_{k-2}(A_{1}\times\ldots\times A_{d_{k-1}})\,.
\]
 \end{proposition}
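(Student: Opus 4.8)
The plan is to mirror exactly the computation that was carried out for bosons, replacing the real orthogonal isotropy groups with unitary symplectic ones. The key structural fact, already established in Proposition~\ref{image isotropy}, is that for $[v]\in\Sigma_{\mathbb{R}}^{+}$ with multiplicity vector $d([v])=(d_{1},\ldots,d_{k})$ the image isotropy is $K_{\mu([v])}=\mathrm{S}(\mathrm{U}_{d_{1}}\times\ldots\times\mathrm{U}_{d_{k}})$, and by Corollary~\ref{spherical_projective} the $\mu$-fiber is precisely the fiber of $K/K_{[v]}\to K/K_{\mu([v])}$, namely $K_{\mu([v])}/K_{[v]}$. So everything reduces to computing $K_{[v]}$ as a subgroup of $K_{\mu([v])}$ and identifying the quotient.

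First I would write down the representative tensor $v=\sqrt{q_{1}}E_{d_{1}}+\ldots+\sqrt{q_{k}}E_{d_{k}}$, where each block $E_{d_{j}}$ is the standard symplectic (antisymmetric) form on a $d_{j}$-dimensional coordinate subspace. The isotropy group $K_{[v]}$ consists of those $U\in K_{\mu([v])}$ fixing $[v]$, and since each block is acted on independently by the corresponding $\mathrm{U}_{d_{j}}$ factor, the condition decouples blockwise. Within a single block the action is $C\mapsto UCU^{t}$ with $C=J$ the standard symplectic matrix, so the stabilizer of $[J]$ inside $\mathrm{SU}_{d_{j}}$ is exactly $\{U:\,UJU^{t}=J\}=\mathrm{USp}_{d_{j}}$ (this is the fermionic analogue of the bosonic relation $UU^{t}=I$ giving $\mathrm{SO}_{d_{j}}$, and it uses the defining involution $\theta(T)=J(T^{t})^{-1}J^{-1}$ from the proof that $G.a_{N}$ is affine symmetric). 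Hence the special-linear part of $K_{[v]}$ in each block is $\mathrm{USp}_{d_{j}}$, and $K_{\mu([v])}/K_{[v]}$ accumulates one factor $\mathrm{SU}_{d_{j}}/\mathrm{USp}_{d_{j}}=A_{d_{j}}$ per block together with the residual torus coming from the $\mathrm{S}(\cdots)$ determinant-constrained $\mathrm{U}_{1}$ phases.

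Next I would track the torus dimension exactly as in the bosonic case. Writing $K_{\mu([v])}=(S^{1})^{k-1}(\mathrm{SU}_{d_{1}}\times\ldots\times\mathrm{SU}_{d_{k}})$ when $v$ is nondegenerate, the $(S^{1})^{k-1}$ phases survive into the fiber (they act on $[v]$ by overall scalars on each block, hence project trivially in $\mathbb{P}(\mathcal{H})$ only up to the global phase already quotiented), yielding the factor $T_{k-1}$, while each $\mathrm{SU}_{d_{j}}$ contributes $A_{d_{j}}$. When $v$ is degenerate, i.e. $q_{k}=0$ (which is automatic for $N=2n+1$), the last block lies in the kernel of the quadratic form and its full $\mathrm{U}_{d_{k}}$ stabilizes $[v]$, so that block contributes nothing to the fiber and one torus dimension is lost, giving $T_{k-2}(A_{1}\times\ldots\times A_{d_{k-1}})$.

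The main obstacle, and the only place requiring genuine care rather than transcription from the bosonic argument, is bookkeeping the finite groups and the precise torus dimension under the global determinant and projectivization constraints — which is exactly why the statement is phrased for the covering space $\tilde{\mathcal{F}}_{[v]}$ and why the text has already announced that finite-group quotients and the distinction between $\mathrm{USp}$ and its special version are deliberately suppressed. Thus the honest content of the proof is the single blockwise symplectic stabilizer computation $\{U:\,UJU^{t}=J\}=\mathrm{USp}_{m}$, after which the global structure follows verbatim from the bosonic case with $A_{m}$ in place of $S_{m}$.
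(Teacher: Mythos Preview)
Your proposal is correct and follows precisely the paper's approach: the paper gives no separate argument for this proposition beyond the sentence ``Using the same principles as were applied for the computation in the case of bosons, we compute $K_{[v]}$,'' and your blockwise identification of the stabilizer of the standard symplectic form as $\mathrm{USp}_{d_j}$ (replacing the bosonic $\mathrm{SO}_{d_j}$) together with the torus bookkeeping is exactly that transcription. If anything, your write-up is more explicit than the paper's, and you correctly flag that the finite-group and torus accounting is absorbed into the passage to the covering $\tilde{\mathcal{F}}_{[v]}$.
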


\subsubsection*{Summary of results}

Omitting the technical descriptions which are given above, we now summarize
our main results.

\begin{theorem} Every orbit $K.[v]$ in $M=\mathbb{P}(\mathcal{H})$
intersects $\Sigma_{\mathbb{R}}^{+}$ in exactly one point. If
$d([v])=(d_{1},\ldots,d_{k})$ is the associated multiplicity vector, then the
image orbit $K.\mu([v])$ is the flag manifold $F(d_{1},\ldots,d_{k})$. The
full image $\mu(M)$ is the union of the orbits $K.\xi$ where $\xi$ is in the
translated probability polyhedron $P-\frac{1}{N}(1,\ldots,1)$. The
$\mu$-fiber $\mathcal{F}_{[v]}$ at $[v]\in M$ is the fiber of the homogeneous
fibration $K/K_{[v]}\to K/K_{\mu([v])}$. Depending on the case and whether or
not tensors $v$ is degenerate, its fiber is the product of a torus and
certain symmetric spaces which are explicitly described above. \end{theorem}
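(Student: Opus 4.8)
The plan is to assemble this summary statement from the component results already established; each assertion has essentially been proved in isolation above, so the work is to organize them and supply the few connecting arguments.

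The uniqueness claim — that every orbit $K.[v]$ meets $\Sigma_{\mathbb{R}}^{+}$ in exactly one point — is precisely Corollary~\ref{exact slice}, which I would simply invoke. This reduces every remaining assertion to the distinguished representative $[v]\in\Sigma_{\mathbb{R}}^{+}$. For that representative the translated probability vector $\mu([v])=(q_{1},\ldots,q_{N})$ satisfies $q_{1}\ge\ldots\ge q_{N}\ge0$, and its multiplicity vector $d([v])=(d_{1},\ldots,d_{k})$ is read off from the pattern of equalities in (\ref{eq:d_i}). The identification $K.\mu([v])=F(d_{1},\ldots,d_{k})$ with isotropy $K_{\mu([v])}=\mathrm{S}(\mathrm{U}_{d_{1}}\times\ldots\times\mathrm{U}_{d_{k}})$ is then exactly Proposition~\ref{image isotropy}.

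For the full image $\mu(M)$ I would use the slice property. By Theorem~\ref{slice} we have $K.\Sigma=M$, and equivariance of the moment map gives $\mu(M)=\mu(K.\Sigma)=K.\mu(\Sigma)$, where $K.$ denotes the coadjoint action. The restricted moment map computation identifies $\mu(\Sigma)$ with the translated probability polyhedron $P-\frac{1}{N}(1,\ldots,1)$, so $\mu(M)$ is the union of the coadjoint orbits $K.\xi$ as $\xi$ ranges over that polyhedron, as claimed.

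The substance of the theorem is the fiber structure, and here the decisive input is Brion's Theorem. By Corollary~\ref{spherical_projective} the moment map separates $K$-orbits, so $\mathcal{F}_{[v]}=\mu^{-1}(\mu([v]))$ lies inside $K.[v]$; it is therefore exactly the fiber of the natural projection $K/K_{[v]}\to K/K_{\mu([v])}$. I expect this to be the conceptually crucial step rather than a computational one: without the spherical property the fiber would genuinely leave the orbit (cf.\ Example~\ref{ex1}), so the whole homogeneous-fibration picture depends on having verified the spherical embedding property in Proposition~\ref{open_spherical}. The concluding explicit description — up to a finite quotient, a product of a torus with copies of $S_{m}=\mathrm{SU}_{m}/\mathrm{SO}_{m}$ in the bosonic case and $A_{m}=\mathrm{SU}_{m}/\mathrm{USp}_{m}$ in the fermionic case, according to whether $v$ is degenerate — then follows verbatim from the two fiber propositions computed above, which I would quote while noting that the fibration $K/K_{[v]}\to K/K_{\mu([v])}$ makes the torus and symmetric-space factors transparent.
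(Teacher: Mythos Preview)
Your proposal is correct and matches the paper's treatment: the theorem is explicitly presented there as a summary of results already proved, with no separate argument given, and your assembly of Corollary~\ref{exact slice}, Proposition~\ref{image isotropy}, Theorem~\ref{slice} plus equivariance, Corollary~\ref{spherical_projective}, and the two fiber propositions is exactly the intended synthesis. One tiny notational slip: the polyhedron $P$ in the paper is already defined as the translated object, so writing $P-\frac{1}{N}(1,\ldots,1)$ double-counts the shift (the paper itself is inconsistent here, so this is harmless).
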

Recall that by Corollary~\ref{spherical_projective}
\[
\dim\mathcal{F}_{[v]}=D([v]),
\]
where $D([v])$ is the degeneracy of the canonical symplectic form on $M$
restricted to the orbit $K.[v]$. For precise computation of dimension of an
orbit and the degeneracy of its induced canonical form it is only necessary
to know the dimensions of the symmetric spaces. For completeness we therefore
remark that
\[
\dim S_{m}=(m^{2}-1)-\frac{m(m+1)}{2}=\frac{(m+1)(m-2)}{2}
\]
and
\[
\dim A_{m}=(m^{2}-1)-\frac{m(m-1)}{2}=\frac{m^{2}+m-2}{2}\,.
\]
One should note that in both cases there is only one symplectic orbit,
$K.[e_{1}\vee e_{1}]$ and $K.[e_{1}\wedge e_{2}]$ in the respective cases.
This is the projectivization of the orbit of the highest weight vector and is
of course complex. The only isotropic orbit is that which is described by
$d=(1,\ldots,1)$ with trivial degeneracy in the cases of bosons and fermions
with $N=2n$, and 1-dimensional degeneracy in the case of fermions with odd
$N$.

\section{Distinguishable particles}

Let us now return to the case of distinguishable particles. As already
mentioned the reasoning is similar up to slight details of concrete
computations. In particular, it is prudent to consider the projectivization
$\Sigma=H.d_{N}$ in $\mathbb{P}(\mathcal{H}_{D})$ of the set of diagonal
matrices in $\mathcal{H}$ as a potential \emph{thick slice} which is
perfectly aligned with respect to $\mu$. The first relevant observation
cocerns $K.\Sigma$.

\begin{proposition}
\[
K.\Sigma=\mathbb{P}(\mathcal{H}_{D})
\]
 \end{proposition}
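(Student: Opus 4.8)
The plan is to reduce this statement to the corresponding fact for distinguishable particles using the singular value decomposition, which plays exactly the role that the spectral theorem for symmetric (resp. antisymmetric) forms played in the bosonic and fermionic cases. Recall that here $\mathcal{H}_D\cong\mathbb{C}^N\otimes\mathbb{C}^N$ is identified with $\mathrm{End}(\mathbb{C}^N)$, a point $v$ corresponds to a matrix $C_v$, and the relevant action is $(U,V).C_v=UC_vV^t$ with $(U,V)\in K_D=K\times K$. The slice $\Sigma$ is the projectivization of the diagonal matrices, i.e.\ the closure of $H.d_N$.

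First I would recall the complex singular value decomposition: any $C_v\in\mathrm{End}(\mathbb{C}^N)$ can be written as $C_v=U\,D\,W$ with $U,W$ unitary and $D$ a nonnegative diagonal matrix. Rewriting $W=V^t$ with $V$ unitary, this says precisely that every matrix lies in the $U(N)\times U(N)$-orbit of a diagonal matrix. The remaining point is to descend from $U(N)\times U(N)$ to $K_D=SU(N)\times SU(N)$ and from matrices to the projective space $\mathbb{P}(\mathcal{H}_D)$. The determinant phases of $U$ and $V$ can be absorbed: multiplying $U$ by a scalar $e^{i\alpha}$ and $V$ by $e^{i\beta}$ changes $C_v$ by the overall scalar $e^{i(\alpha+\beta)}$, which is invisible in $\mathbb{P}(\mathcal{H}_D)$, so we may arrange $U,V\in SU(N)$ while keeping $D$ diagonal (possibly rescaling $D$ by a phase, which again only moves the point within its projective class and within $\Sigma$). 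Hence $[C_v]\in K_D.[D]\subset K_D.\Sigma$ for every $[C_v]$, giving $K_D.\Sigma=\mathbb{P}(\mathcal{H}_D)$.

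An alternative and cleaner route, parallel to how Theorem~\ref{slice} was proved in the diagonal case, would be to invoke the slice criterion of Proposition~\ref{slice-1}: verify that $\Sigma$ is a closed analytic submanifold, that generic points (those with distinct, positive singular values) have $K_D$-orbits of maximal dimension and form a dense set, and that at such generic $[v]$ one has the transversality $T_{[v]}(K_D.[v])+T_{[v]}\Sigma=T_{[v]}M$. The transversality would follow, exactly as before, from the facts that $\mu|_\Sigma$ has values in $\mathfrak{t}$ with the generic $\mu$-fiber a torus contained in both $\Sigma$ and $K_D.[v]$, and that the flag-type image orbit $K_D.\mu([v])$ is transversal to $\mathfrak{t}$ in the sense of the transversality proposition above.

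The main obstacle I anticipate is bookkeeping the determinant and phase conditions when passing from the full unitary groups $U(N)\times U(N)$ (where SVD lives naturally) to the special unitary groups $SU(N)\times SU(N)=K_D$, together with the passage to the projectivization; one must check that the freedom to rescale by scalars and to redistribute phases between the two factors is exactly enough to kill the two determinant constraints. This is the only genuinely new ingredient compared to the bosonic/fermionic arguments, where a single group acted and the symmetric/antisymmetric normal form sufficed. Everything else is a direct translation of the earlier reasoning.
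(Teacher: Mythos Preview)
Your proposal is correct and takes essentially the same approach as the paper: the paper's proof is precisely the singular value decomposition argument (citing Horn--Johnson for the fact that any matrix can be diagonalized by two unitaries), applied first to full-rank matrices and then extended to all of $\mathbb{P}(\mathcal{H}_D)$ by density plus compactness of $K.\Sigma$. Your version is in fact slightly cleaner, since you apply SVD to all matrices at once and you make explicit the passage from $U(N)\times U(N)$ to $SU(N)\times SU(N)$ via phase absorption in the projectivization---a point the paper leaves implicit; your alternative slice-criterion route is not the one the paper chooses here.
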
 \begin{proof} Suppose $\mathrm{rank}(C_{v})=n$
is maximal. Since we are dealing with the projective space
$\mathbb{P}(\mathcal{H}_{D})$, we may assume that
$C_{v}\in\mathrm{SL}_{N}(\mathbb{C})$ and consider the $K$-orbit of $[v]$. It
is well known that the projectivization of the set of matrices which have
maximal rank is in $K.\Sigma$ as every matrix $C_{v}\in SL_{N}(\mathbb{C})$
can be diagonalized by two unitary matrices \cite{horn85}. Since the set of
states corresponding to maximal rank matrices $C_{v}$ is open and dense in
$\mathcal{H}_{D}$ and $K.\Sigma$ is closed, the result follows. \end{proof}

Now we carry out an argument which is completely analogous to the arguments
in the case of bosons. The formula for the moment map is slightly different
so let us sketch the calculation in a bit of detail. We choose the maximal
toral algebra $\mathfrak{t}_{D}$ to be the direct sum
$\mathfrak{t}\oplus\mathfrak{t}$ of the diagonal matrices in the Lie algebra
$\mathfrak{k}_{D}=\mathfrak{k}\oplus\mathfrak{k}$. Just as before we note
that the elements in $\mu\vert_{\Sigma}$ annihilate the orthogonal complement
$\mathfrak{t}^{\perp}$ of maximal toral algebra. After identifying
$\mathfrak{k}^{*}$ with $\mathfrak{k}$ by the $\mbox{Ad}$-invariant scalar
product given by trace this means
\[
\mu\vert_{\Sigma}:\Sigma\to\mathfrak{t}_{D}\,.
\]
 If $v=\lambda_{1}e_{1}\otimes e_{1}+\ldots+\lambda_{N}e_{N}\otimes e_{N}$,
then for $\xi=i(\varphi_{1},\ldots,\varphi_{n})\oplus i(\psi_{1},
\ldots,\psi_{n})\in\mathfrak{t}$, it follows that
\[
\mu_{\xi}([v])=\sum p_{i}([v])(\varphi_{i}+\psi_{i}),
\]
where the probability distribution vector is given by
$P([v])=(p_{1}([v]),\ldots,p_{N}([v]))$,
\[
p_{i}([v])=\frac{\vert\lambda_{i}\vert^{2}}{\sum_{j=1}^{N}|\lambda_{j}|^{2}}\,.
\]
In order to regard $\mu([v])$ as a vector in
$\mathfrak{t}_{D}=\mathfrak{t}\oplus\mathfrak{t}$ we must translate it:
\[
\mu([v])=(P([v])-\frac{1}{N}I,\, P([v])-\frac{1}{N}I),
\]
 where
\[
I=(1,\ldots,1).
\]

The image $\mu(\Sigma)$ gives all translated finite probability distributions
which can arise. A fundamental region of the Weyl group is given by the image
$\mu(\Sigma^{+})$ where the elements of $\Sigma^{+}$ are of the form
\[
v=\lambda_{1}(e_{1}^{2}+\ldots+e_{d_{1}}^{2})+
\ldots+\lambda_{k}(e_{d_{1}+\ldots+d_{k-1}+1}^{2}+\ldots+e_{d_{1}+
\ldots+d_{k}}^{2}),
\]
where $\lambda_{j}\in\mathbb{R}$ for all $j$ and
$\lambda_{1}>\lambda_{2}>\ldots>\lambda_{k}>0$ and $d_{j}$ denotes the
dimension of the subspace where $v$ has eigenvalue $\lambda_{j}$,
$e_{i}^{2}=e_{i}\otimes e_{i}$ and $\mathrm{rank}(v)=d_{1}+\ldots+d_{k}$.

To complete our analysis of this situation we must describe in detail the
fibration $K/K_{[v]}\to K/K_{\mu([v])}$. Let us begin by computing the
$K$-isotropy group at a generic point $v\in\Sigma_{gen}$, i.e., where
$\lambda_{1}>\lambda_{2}>\ldots\lambda_{n}>0$. Using the fact that
$C_{v}^{\dagger}=C_{v}$ one shows that $K_{[v]}$ is simply
$\Delta(T_{D})=T\times T^{-1}$ which acts by scalar multiplication.

The isotropy group $K_{\mu([v])}$ is in every case the centralizer of
$(P([v])-\frac{1}{N}I,\, P([v])-\frac{1}{N}I)$. Thus, in the generic case it
is just the full maximal torus $T_{D}$. Consequently, the coadjoint orbit is
$K/T_{D}$ and the fiber is the group manifold $T_{D}/\Delta(T_{D})$.

In the general case where the elements of $v$ occur with multiplicities
$d_{1},\ldots,d_{k}$ and $v$ is not necessarily of maximal rank a similar
calculation is made to show that the base $K/K_{\mu([v])}$ is the 2-fold
product $F(d_{1},\ldots,d_{k})\times F(d_{1},\ldots,d_{k})$ of the
corresponding flag manifold. In this case the fiber is the product of the
group manifolds
$(\mathrm{SU}_{d_{i}}\times\mathrm{SU}_{d_{i}})/\mathrm{SU}_{d_{i}}$. It is
perhaps noteworthy that just as in the previous two cases these fibers are
symmetric spaces.

\section{Summary}

We presented the exact description of the partial symplectic structure of all
$K$-orbits in $M=\mathbb{P}(\mathcal{H})$ for two bosons, fermions and
distinguishable particles. To do this we first noticed that in all cases $M$
is in fact a spherical variety, i.e., it is the closure of an open dense
orbit of the Borel group. This observation turns out to be very fruitful.
Namely, by Brion's theorem (\cite{brion87}) it implies that the moment map
$\mu:M\to\mathfrak{k}^{*}$ parameterizes all $K$-orbits in $M$, i.e., $\mu$
maps the set of $K$-orbits in $M$ bijectively onto the set of $K$-orbits in
its image. We gave the exact description of the fibers
$\mathcal{F}_{x}=\mu^{-1}(\mu(x))$ of the moment map. Remarkably in all three
cases these fibers (up to very simple finite-coverings) are products of
certain symmetric spaces.

We believe that the notion of spherical variety should play an important role
also in case of multipartite systems. Of course by simple dimensional
arguments $M$ is typically not spherical in this case but it might happen
that certain of $K^{\mathbb{C}}$-orbits (class of SLOCC states) enjoy this
property. We postpone these problems to the forthcoming publications.

\section*{Acknowledgments}

We gratefully acknowledge the support of SFB/TR12 Symmetries and
Universality in Mesoscopic Systems program of the Deutsche
Forschungsgemeischaft, ERC Grant QOOLAP, a grant of the Polish National
Science Centre under the contract number DEC-2011/01/M/ST2/00379 and Polish MNiSW grant no. IP2011048471.

\section*{Appendix}

\subsection*{Complex symmetric spaces}

Let $G$ be a reductive complex Lie group, i.e., a connected complex Lie group
which is the complexification of some (and therefore any) maximal compact
subgroup $U$.  At the Lie algebra level this means that $\mathfrak {g}$ is
the direct sum $\mathfrak {u}+i\mathfrak {u}$.  Semisimple complex Lie groups
such as the classical groups $\mathrm {SL}_n(\mathbb C)$, $\mathrm
{SO}_n(\mathbb C)$ and $\mathrm {Sp}_{2n}(\mathbb C)$ are the main examples
which occur in our applications. Typical choices for the maximal compact
subgroups $U$ in the classical groups are  $\mathrm{SU}_n$,
$\mathrm{SO}_n(\mathbb R)$ and $\mathrm{USp}_{2n}$. It is important to note
that $G$ can be holomorphically embedded in some $\mathrm {SL}_N(\mathbb C)$
so that $U$ is contained in $\mathrm {SU}_N$.  Thus, $\mathfrak {g}=\mathfrak
{u}+i\mathfrak {u}$ is such that the operators in $\mathfrak {u}$ are
anti-Hermitian and those in $i\mathfrak {u}$ are Hermitian.  The Lie algebra
$\mathfrak {u}$ is a \emph{real form} of $\mathfrak {g}$ in the sense that
there there is an antilinear Lie algebra involution $\sigma :\mathfrak {g}\to
\mathfrak {g}$ with $\mathfrak {u}=\mathrm {Fix}(\sigma )$.  In this case
$\sigma \vert \mathfrak {u}=\mathrm {Id}_{\mathfrak u}$ and $\sigma \vert
i\mathfrak {u}=-\mathrm {Id}_{i\mathfrak {u}}$.

\bigskip\noindent
One often writes $\mathfrak {g}=\mathfrak {g}_u+\mathfrak {p}$ for the
decomposition defined by $\sigma $.  As was noted above, $\mathfrak {p}$ can
be regarded as a subspace of the full space of Hermitian matrices in
$\mathfrak {gl}_N(\mathbb C)$ so that $\mathrm {exp}:\mathfrak
{p}\overset{\cong}{\to} P$ is a diffeomorphism onto a space of
positive-definite Hermitian matrices.  Furthermore, $G$ splits accordingly,
i.e. $G=G_uP$ is a product so that $G/G_u\cong P$ is diffeomorphic to some
$\mathbb R^m$.  Since maximal compact subgroups of simple Lie groups are in
fact maximal and any two are conjugate, it follows that the set $\mathcal C$
of maximal compact subgroups of $G$ can be identified with $G/G_u$ or
equivalently with $P$.

\bigskip\noindent
Complex symmetric spaces which play a role in our investigations are by
definition complex homogenous spaces $G/K$ where $K$ is the fixed point set
of a holomorphic involution $\theta :G\to G$.  Note that $\theta $ induces a
diffeomorphism $\theta :\mathcal C\to \mathcal C$ and recall that no finite
group of diffeomorphism of a cell can act freely.  Therefore $\theta $ has a
fixed point on $\mathcal C$ or equivalently $\theta $ stabilizes some compact
real form which without loss of generality can be taken to be $G_u$.  Let us
consider this matter at the Lie algebra level where we decompose $\mathfrak
{g}=\mathfrak {g}_u+i\mathfrak {g}_u$ into $\sigma $ eigenspaces.  Since
$\theta :\mathfrak {g}\to \mathfrak {g}$ is complex linear and $\theta $
stabilizes $\mathfrak {g}_u$, it is immediate that this splitting is $\theta
$-invariant. Let $\mathfrak {g}_u=\mathfrak {g}_u^++\mathfrak {g}_u^-$ be the
$\theta$-decomposition of $\mathfrak {g}_u$ and note that the $\theta
$-decomposition of $i\mathfrak {g}_u$ is
$$
i\mathfrak {g}_u=(i\mathfrak {g}_u)^++(i\mathfrak {g}_u)^-=
i\mathfrak {g}_u^++i\mathfrak {g}_u^-\,.
$$
Thus
$$
\mathfrak {g}=\mathfrak {g}^++\mathfrak {g}^-=
(\mathfrak {g}^+u+i\mathfrak {g}^+_u)+
(\mathfrak {g}^-_u+i\mathfrak {g}^-_u)\,.
$$
is the $\theta $-splitting of $\mathfrak {g}$.  In particular, since
$\mathfrak {k}=\mathfrak {g}^+$, it follows that the fixed subgroup $K$ is
the complexification of the compact subgroup $K_0$ which is associated to the
Lie algebra $\mathfrak {g}_u^+$.
\begin {proposition} The $\theta $-fixed subgroup $K$
is reductive.
\end {proposition}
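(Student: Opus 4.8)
The plan is to deduce the proposition directly from the definition of reductivity stated at the beginning of Section~\ref{sec:spherical}: a complex matrix group is reductive precisely when it is the complexification of its maximal compact subgroup. Thus it suffices to exhibit a compact subgroup $K_0\subset K$ which is maximal compact in $K$ and whose complexification is all of $K$. The Lie-algebra bookkeeping carried out just above already supplies the natural candidate: we have shown $\mathfrak{k}=\mathfrak{g}^+=\mathfrak{g}_u^+\oplus i\mathfrak{g}_u^+$, where $\mathfrak{g}_u^+=\mathfrak{g}_u\cap\mathfrak{k}$ consists of anti-Hermitian operators and $i\mathfrak{g}_u^+$ of Hermitian ones. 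Let $K_0:=K\cap G_u$ be the subgroup with Lie algebra $\mathfrak{g}_u^+$; as a closed subgroup of the compact group $G_u$ it is compact.

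The key step is to promote the $\theta$-invariant splitting of $\mathfrak{g}$ to a genuine Cartan decomposition of the group $K$. First I would record that $\theta$ commutes with the antilinear involution $\sigma$ defining the real form $\mathfrak{g}_u$: indeed $\theta$ is complex linear and, by the preliminary reduction, stabilizes $\mathfrak{g}_u=\mathrm{Fix}(\sigma)$, so $\theta\sigma=\sigma\theta$. Consequently $\theta$ preserves the global polar decomposition $G=G_u\cdot\exp(i\mathfrak{g}_u)$ recalled in the Appendix. Writing an arbitrary $g=u\exp(\xi)$ with $u\in G_u$ and $\xi\in i\mathfrak{g}_u$, uniqueness of the polar decomposition forces $\theta(g)=\theta(u)\exp(d\theta\,\xi)$ to again be the polar form of $\theta(g)$; hence $g\in K=\mathrm{Fix}(\theta)$ if and only if $u\in K_0$ and $\xi\in i\mathfrak{g}_u^+$. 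This yields the decomposition
\[
K=K_0\cdot\exp(i\mathfrak{g}_u^+),
\]
which is a diffeomorphism $K\cong K_0\times i\mathfrak{g}_u^+$.

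From here the conclusion is immediate. The displayed decomposition exhibits $K$ with Lie algebra $\mathfrak{g}_u^+\oplus i\mathfrak{g}_u^+$ and $K_0$ as a compact real form at the group level, which is exactly the assertion that $K=K_0^{\mathbb{C}}$; moreover the Euclidean factor $\exp(i\mathfrak{g}_u^+)$ contains no nontrivial compact subgroup, so $K_0$ is maximal compact in $K$. By the definition of reductive, $K$ is reductive. I expect the only genuinely delicate point to be the compatibility argument of the middle paragraph, namely verifying that $\theta$ respects the polar decomposition and that passing to $\theta$-fixed points distributes over the two factors; once $\theta\sigma=\sigma\theta$ and uniqueness of the polar form are in hand, the remainder is formal. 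A minor subtlety worth a remark is the possible disconnectedness of $K$, which is harmless, since the polar decomposition applies component-by-component and $K\cap G_u$ meets every component of $K$.
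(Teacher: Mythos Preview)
Your argument is correct and follows essentially the same route as the paper. The paper's proof is a two-line remark that the preceding Lie-algebra decomposition $\mathfrak{k}=\mathfrak{g}_u^+ + i\mathfrak{g}_u^+$ already exhibits $K$ as the complexification of the compact group $K_0$ attached to $\mathfrak{g}_u^+$; you supply the group-level justification for this via the global polar decomposition $G=G_u\cdot\exp(i\mathfrak{g}_u)$ and its compatibility with $\theta$, which the paper leaves implicit.
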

\begin {proof}
Since a reductive complex Lie group is by definition one which can be
realized as a complexification of maximal compact subgroup, this follows
immediately from the above discussion.
\end {proof}
\noindent It should be remarked that the quotient of a reductive complex Lie
group by any reductive complex subgroup $H$ can be realized as a closed
$G$-orbit in an appropriately chosen representation space. In particular,
$G/H$ has the structure of an affine variety.  Consequently, in the case at
hand one refers to $X=G/K$ as an \emph{complex affine symmetric space}.

\bigskip\noindent
Observe that $\tau :=\theta \sigma =\sigma \theta $ is an antilinear Lie
algebra involution.  Its fixed point algebra is denoted by $\mathfrak {g}_0$
with associated (noncompact) group $G_0$. Since $\theta $ commutes with $\tau
$, it stabilizes $\mathfrak {g}_0$ and yields a decomposition
$$
\mathfrak {g}_0=\mathfrak {g}_0^++\mathfrak {g}_0^-=
\mathfrak {g}^+_u+i\mathfrak{g}_u^-\,.
$$
This decomposition is written as $\mathfrak {g}_0=\mathfrak {k}_0 + \mathfrak
{p}_0$ and the defining involution, which is the restriction of $\theta $ to
$\mathfrak {g}_0$, is called the \emph{Cartan involution}. Note that since
$\mathfrak {p}_0$ consists of Hermitian matrices, $K_0$ is a maximal compact
subgroup of $G_0$ and $G_0/K_0$ is a symmetric space of noncompact type.
\subsubsection* {Complex symmetric spaces are spherical}
%
%An algebraic variety $X$ is said to be \emph{spherical} with respect to an
%algebraic action of a reductive group $G$ if every Borel subgroup $B$ of $G$
%has an open orbit in $X$.  Recall that Borel subgroups are by definition
%maximal, solvable connected subgroups.  Since the Zariski closure of a
%solvable subgroup is solvable, it follows that such subgroups are closed and
%algebraic.  Furthermore, any two Borel subgroups are conjugate. Hence, one
%only needs to check the condition for one Borel subgroup.
The fact that the
symmetric space $X=G/K$ is spherical follows immediately from the Iwasawa
decomposition of $\mathfrak {g}_0$. Starting from the above Cartan
decomposition $\mathfrak {g}_0=\mathfrak {k}_0 + \mathfrak {p}_0$ we now
construct this decomposition.

\bigskip\noindent
First, let $\mathfrak {a}_0$ be a maximal Abelian subspace of $\mathfrak
{p}_0$ and consider its action on $\mathfrak {g}_0$ by the adjoint
representation. Since $\mathfrak {a}_0$ is contained in $\mathfrak {p}_0$
which consists of Hermitian operators in the given realization, the elements
of $\mathfrak {a}_0$ are simultaneously diagonalizable over the reals.  Note
that the $0$-eigenspace of this action, i.e., the centralizer $\mathfrak
{z}_{\mathfrak {g}_0}(\mathfrak {a}_0)$, splits as $\mathfrak {z}_{\mathfrak
{g}_0}(\mathfrak {a}_0)= \mathfrak {m}_0 + \mathfrak \mathfrak {a}_0$ where
$\mathfrak {m}_0\subset \mathfrak {k}_0$.

\bigskip\noindent
An element $\alpha \in \mathfrak {a}_0^*\setminus \{0\}$ is called a
\emph{root} if
$$
\mathfrak {g}_\alpha :=
\{\xi \in \mathfrak {g}_0: x(\xi)=\alpha (x).\xi \ \text{for all}\
x\in \mathfrak {a}_0\}
$$
is nonempty. Let $\Phi $ denote the set of all roots.  For $\alpha \in \Phi $
let $H_\alpha $ be the hyperplane $\{\alpha=0\}$ and define $P$ to be a
component of the complement of $\cup_\alpha H_\alpha $. Fix some $x\in P$ and
say that a root $\alpha \in \Phi$ is positive if $\alpha (x)>0$. Let $\Phi^+$
denote the set of positive roots and
$$
\mathfrak {n}^+_0:=\oplus_{\Phi^+}\mathfrak {g}_\alpha\,.
$$
Finally, observe that the dual action of $\theta $ maps positive roots to
negative roots so that $\Phi =\Phi^-\dot \cup \Phi^+$ and accordingly
$$
\mathfrak {g}_0=\mathfrak {n}^-_0 + \mathfrak {m}_0
+ \mathfrak {a}_0 + \mathfrak {n}^+_0\,.
$$
One checks that $\mathfrak {n}^+_0$ and $\mathfrak {n}^-_0$ are nilpotent Lie
algebras.  Since they are stabilized by $\mathfrak {a}_0$, it is immediate
that $\mathfrak {a}_0 + \mathfrak {n}^+_0$ and $\mathfrak {a}_0 + \mathfrak
{n}^-_0$ are solvable. For simplicity of notation we let $\mathfrak
{n}_0:=\mathfrak {n}_0^+$.
\begin {theorem}(\textbf{Iwasawa decomposition})
$$
\mathfrak {g}_0=
\mathfrak {k}_0 + \mathfrak {a}_0 + \mathfrak {n}_0\,.
$$
\end {theorem}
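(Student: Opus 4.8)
The plan is to start from the refined restricted-root decomposition recorded just above,
\[
\mathfrak{g}_0 = \mathfrak{n}_0^- + \mathfrak{m}_0 + \mathfrak{a}_0 + \mathfrak{n}_0^+,
\]
and to fold the two ``extra'' summands $\mathfrak{m}_0$ and $\mathfrak{n}_0^-$ into $\mathfrak{k}_0 + \mathfrak{a}_0 + \mathfrak{n}_0$ using the Cartan involution $\theta$. Since $\mathfrak{m}_0 \subset \mathfrak{k}_0$ by construction, the only real content is to absorb the negative root spaces, and the engine for this is the way $\theta$ permutes root spaces.

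First I would record that $\theta(\mathfrak{g}_\alpha) = \mathfrak{g}_{-\alpha}$ for every $\alpha \in \Phi$. This follows because $\theta$ is a Lie-algebra automorphism with $\theta|_{\mathfrak{a}_0} = -\mathrm{Id}$ (as $\mathfrak{a}_0 \subset \mathfrak{p}_0$): for $X \in \mathfrak{g}_\alpha$ and $H \in \mathfrak{a}_0$ one has $[H, \theta X] = -\theta[H, X] = -\alpha(H)\,\theta X$, so $\theta X \in \mathfrak{g}_{-\alpha}$. This is exactly the statement, already noted above, that $\theta$ carries $\Phi^+$ to $\Phi^-$, hence $\theta(\mathfrak{n}_0^+) = \mathfrak{n}_0^-$. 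Granting this, for any $Y \in \mathfrak{n}_0^-$ the element $Y + \theta Y$ is $\theta$-fixed and so lies in $\mathfrak{k}_0$, while $\theta Y \in \mathfrak{n}_0^+ = \mathfrak{n}_0$; thus $Y = (Y + \theta Y) - \theta Y \in \mathfrak{k}_0 + \mathfrak{n}_0$. Substituting into the displayed decomposition gives $\mathfrak{g}_0 \subseteq \mathfrak{k}_0 + \mathfrak{a}_0 + \mathfrak{n}_0$, and the reverse containment is trivial, so the three summands span $\mathfrak{g}_0$.

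Next I would verify that the sum is direct. As $\mathfrak{a}_0$ sits in the zero root space and $\mathfrak{n}_0$ in the positive ones, $\mathfrak{a}_0 \cap \mathfrak{n}_0 = 0$, so it is enough to check $\mathfrak{k}_0 \cap (\mathfrak{a}_0 + \mathfrak{n}_0) = 0$. Suppose $k = a + n$ with $k \in \mathfrak{k}_0$, $a \in \mathfrak{a}_0$, $n \in \mathfrak{n}_0$. Applying $\theta$ and using $\theta k = k$, $\theta a = -a$, $\theta n \in \mathfrak{n}_0^-$ yields $k = -a + \theta n$; subtracting gives $2a = \theta n - n$, with left side in $\mathfrak{a}_0$ and right side in $\mathfrak{n}_0^- + \mathfrak{n}_0^+$. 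By the directness of the decomposition above this forces $a = 0$ and $\theta n = n$, and since $\mathfrak{n}_0^+ \cap \mathfrak{n}_0^- = 0$ we get $n = 0$ and hence $k = 0$. Therefore $\mathfrak{g}_0 = \mathfrak{k}_0 \oplus \mathfrak{a}_0 \oplus \mathfrak{n}_0$.

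I expect the only delicate point to be the root-space equivariance $\theta(\mathfrak{g}_\alpha) = \mathfrak{g}_{-\alpha}$, which is precisely what makes the folding succeed; once it is in hand, the remaining arguments are linear bookkeeping governed by the $\theta$-eigenspace structure and the directness of the restricted-root decomposition. A minor point worth confirming along the way is that $\mathfrak{m}_0$, being the $\mathfrak{k}_0$-part of the centralizer $\mathfrak{z}_{\mathfrak{g}_0}(\mathfrak{a}_0)$, indeed lies in $\mathfrak{k}_0$, so that it needs no folding --- but this is already built into the setup.
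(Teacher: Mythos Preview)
Your argument is correct and is essentially the paper's proof: both hinge on $\theta(\mathfrak{n}_0^+)=\mathfrak{n}_0^-$ and then fold $\mathfrak{n}_0^-$ into $\mathfrak{k}_0+\mathfrak{n}_0$, the paper phrasing this as the diagonal/antidiagonal splitting $\mathfrak{n}_0^+ + \mathfrak{n}_0^- = \delta^+ + \delta^-$ with $\delta^+\subset\mathfrak{k}_0$, which is exactly your identity $Y=(Y+\theta Y)-\theta Y$. Your explicit verification of directness is a welcome addition that the paper's terse proof omits.
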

\begin {proof}
Note that $\theta (\mathfrak {n}^+_0)=\mathfrak {n}^-_0$ so that we have a
decomposition
$$
\mathfrak {n}^-_0 + \mathfrak {n}^+_0=
\delta^- + \delta^+
$$
where $\delta^{\pm}$ is the $\pm 1$-eigenspace of $\theta $. Clearly
$\mathfrak {p}_0=\mathfrak {a}_0 + \delta^-$ and $\mathfrak {k}_0=\delta^+ +
\mathfrak {m}_0$. Since $\delta^+$ (resp.$\delta ^-$) is the diagonal (resp.
antidiagonal) in $\mathfrak {n}_0^+ + \mathfrak {n}_0^-$, the desired result
follows.
\end {proof}
\noindent By complexifying the above Iwasawa decomposition we have the
decomposition
$$
\mathfrak {g}=\mathfrak {k}+\mathfrak {a}+\mathfrak {n}
$$
where $\mathfrak {k}=\mathrm {Fix}(\theta )$ and $\mathfrak {a}+\mathfrak
{n}$ is a solvable Lie subalgebra of $\mathfrak g$.
\begin {corollary}
The homogenous space $X=G/K$ is spherical.
\end {corollary}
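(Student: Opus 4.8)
The plan is to read sphericity off directly from the complexified Iwasawa decomposition $\mathfrak{g}=\mathfrak{k}+\mathfrak{a}+\mathfrak{n}$, by producing a Borel subgroup whose orbit through the base point of $G/K$ is open. First I would note that $\mathfrak{a}+\mathfrak{n}$ is a solvable Lie subalgebra of $\mathfrak{g}$ — this is already recorded just above the corollary, and it holds because $\mathfrak{a}$ is abelian, $\mathfrak{n}$ is nilpotent, and $\mathfrak{a}$ normalizes $\mathfrak{n}$. Since every solvable subalgebra is contained in a maximal one, there exists a Borel subalgebra $\mathfrak{b}$ with $\mathfrak{a}+\mathfrak{n}\subseteq\mathfrak{b}$; let $B$ denote the corresponding Borel subgroup of $G$.

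Next I would combine this containment with the decomposition itself. Because $\mathfrak{k}+(\mathfrak{a}+\mathfrak{n})=\mathfrak{g}$ and $\mathfrak{a}+\mathfrak{n}\subseteq\mathfrak{b}$, we obtain $\mathfrak{k}+\mathfrak{b}=\mathfrak{g}$. The decisive step is then an infinitesimal computation at the base point $o=eK\in G/K$: the tangent space $T_{o}(G/K)$ is canonically $\mathfrak{g}/\mathfrak{k}$, while the tangent space to the orbit $B.o$ is the image of $\mathfrak{b}$ under the projection $\mathfrak{g}\to\mathfrak{g}/\mathfrak{k}$, that is $(\mathfrak{b}+\mathfrak{k})/\mathfrak{k}$. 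By the equality just derived this image is all of $\mathfrak{g}/\mathfrak{k}$, so the orbit $B.o$ is open in $G/K$.

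Finally, since $G$ is connected and $X=G/K$ is an irreducible algebraic variety, any open orbit is automatically Zariski dense, its complement being a proper closed invariant subset of strictly smaller dimension. Hence $B$ has an open dense orbit in $G/K$, which is precisely the definition of a spherical homogenous space, and the corollary follows.

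I expect the only genuinely delicate point to be the passage from the vector-space decomposition $\mathfrak{g}=\mathfrak{k}+\mathfrak{a}+\mathfrak{n}$ — obtained here by complexifying the real Iwasawa splitting of $\mathfrak{g}_{0}$ — to the infinitesimal openness of the $B$-orbit. In particular one must apply the projection argument to a genuine Borel $\mathfrak{b}$ rather than to the solvable algebra $\mathfrak{a}+\mathfrak{n}$ alone, which need not exhaust a Borel; the enlargement to $\mathfrak{b}$ is harmless precisely because enlarging $\mathfrak{b}$ only enlarges the orbit, and we already have surjectivity onto $\mathfrak{g}/\mathfrak{k}$. Everything else is a formal consequence of the definitions and of the maximality of Borel subalgebras among solvable ones.
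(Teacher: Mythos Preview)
Your proof is correct and follows exactly the paper's approach: choose a Borel subgroup $B$ whose Lie algebra contains the solvable algebra $\mathfrak{a}+\mathfrak{n}$, and conclude from the complexified Iwasawa decomposition that the $B$-orbit through the neutral point is open. The paper's proof consists of a single sentence asserting this openness; you have simply supplied the infinitesimal computation $(\mathfrak{b}+\mathfrak{k})/\mathfrak{k}=\mathfrak{g}/\mathfrak{k}$ that justifies it.
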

\begin {proof}
If $B$ is a Borel subgroup with Lie algebra containing $\mathfrak
{a}+\mathfrak {n}$, then the $B$-orbit of the neutral point in $G/K$ is open.
\end {proof}

\end{document}